\definecolor{Pr}{rgb}{0.4,0.3,0.9}
\DeclareMathOperator\erf{erf}
\DeclareMathOperator\erfs{erfs}
\DeclareMathOperator\sgn{sgn}
\DeclareMathOperator\hvy{hvy}
\newcommand{\lnorm}[1]{\left\lVert #1 \right\rVert}
\renewcommand{\op}[2]{|#1\rangle\langle#2|}
\renewcommand{\ip}[2]{\langle#1|#2\rangle}
\renewcommand{\bra}[1]{\langle#1|}
\renewcommand{\ket}[1]{|#1\rangle}
\renewcommand{\Re}[1]{\text{Re}(#1)}
\renewcommand{\Im}[1]{\text{Im}(#1)}
\def\be{\begin{eqnarray}}
\def\ee{\end{eqnarray}}
\newtheorem{lemma}{Lemma}
\newtheorem{theorem}{Theorem}
\newtheorem{remark}{Remark}
\newtheorem{definition}{Definition}
\DeclareMathOperator*{\argmax}{arg\,max}
\newcommand{\Z}{\hat{Z}}
\renewcommand{\H}{\hat{H}}
\renewcommand{\S}{\hat{S}}
\begin{document}

\title{Non-Linear Transformations of Quantum Amplitudes: Exponential Improvement, Generalization, and Applications}

\author{Arthur G. Rattew}
\email{arthur.rattew@materials.ox.ac.uk}
\affiliation{Department of Materials, University of Oxford, Parks Road, Oxford OX1 3PH, United Kingdom\looseness=-1}
\affiliation{Global Technology Applied Research, JPMorgan Chase, New York, NY 10017 USA}

\author{Patrick Rebentrost}
\email{cqtfpr@nus.edu.sg}
\affiliation{Centre for Quantum Technologies, National University of Singapore, Singapore 117543}

\begin{abstract}
Quantum algorithms manipulate the amplitudes of quantum states to find solutions to computational problems. In this work, we present a framework for applying a general class of non-linear functions to the amplitudes of quantum states, with up-to an exponential improvement over the previous work. Our framework accepts a state preparation unitary (or block-encoding), specified as a quantum circuit, defining an $N$-dimensional quantum state. We then construct a diagonal block-encoding of the amplitudes of the quantum state, building on and simplifying previous work. 
Techniques from the QSVT literature are then used to process this block-encoding. The source of our exponential speedup comes from the quantum analog of importance sampling. We then derive new error-bounds relevant for end-to-end applications, giving the error in terms of $\ell_2$-norm error. 
We demonstrate the power of this framework with four key applications.
First, our algorithm can apply the important function $\tanh(x)$ to the amplitudes of an arbitrary quantum state with at most an $\ell_2$-norm error of $\epsilon$, with worst-case query complexity of $O(\log(N/\epsilon))$, in comparison to the $O(\sqrt{N}\log(N/\epsilon))$ of prior work.
Second, we present an algorithm solving a new formulation of maximum finding in the unitary input model. 
Third, we prove efficient end-to-end complexities in applying a number of common non-linear functions to arbitrary quantum states. Finally, we generalize and unify existing quantum arithmetic-free state-preparation techniques.
Our work provides an important and efficient algorithmic building block with potentially numerous applications in areas such as optimization, state preparation, quantum chemistry, and machine learning.

\end{abstract}

\maketitle

\section {Introduction}

The macroscopic world often behaves in non-linear ways, and non-linear functions appear in many problems in fields such as engineering~\cite{ames1965nonlinear,finlayson2003nonlinear}, optimization~\cite{burer2012non}, machine learning~\cite{goodfellow2016deep,deng2014tutorial}, and finance~\cite{forsyth2007numerical,pistoia2021quantum}. Further illustrating their pervasiveness, the Komolgorov-Arnold theorem~\cite{kolmogorov1957representation} implies that all multivariate functions can be approximated by the summation and composition of univariate non-linear functions.
Clearly then, operating with non-linear functions is of significant and widespread importance. 
However, quantum algorithms find solutions to computational problems by manipulating the amplitudes of quantum states through unitary, and thus linear, transformations.  Therefore, it is a fundamental and important question how to make quantum algorithms exhibit non-linear behaviour.

Quantum signal processing~\cite{low2017optimal, low2019hamiltonian} and quantum singular value transformation \cite{gilyen2019quantum} are powerful frameworks encompassing nearly all of quantum computing~\cite{martyn2021grand}. 
They provide a machinery to develop quantum subroutines in a modular fashion, based on the block-encodings of arbitrary (not necessarily unitary) matrices in the subspaces of higher dimensional unitary operators. 
The power of these techniques originates from their ability to enact polynomial transformations on the eigenvalues and singular values of such encoded matrices, and so it is interesting to consider how these ideas apply to the non-linear transformation of quantum states. 
This naturally leads one to wonder: what are the best ways of performing such amplitude transformations, and what are the relevant applications? 

To answer this question, we now define an input model, and motivate the problem statement.
We assume we are given a state preparation unitary $U \in \mathbb C^{N \times N}$, $N=2^n$, specified by a quantum circuit, with $U \ket {0} = \sum_{j=1}^N \psi_j \ket j =: \ket{\psi}$, and we wish to enact non-linear transformations on the state amplitudes, potentially as a subroutine in another algorithm. We generalize this input model in \cref{section:state_preparation_block_encodings}. 

\begin{definition}[Non-Linear Amplitude Transformation, Informal]\label{problem:amplitude_transform}
    Let $U$ be an $n$-qubit state preparation unitary 
    specifying the $N = 2^n$ dimensional quantum state $\ket{\psi} = \sum_{j=0}^{N-1}\psi_j\ket{j}$. Let $f$ be a function $f:[-1,~1]~\mapsto~\mathbb R$, and let $\epsilon \ge 0$.
    The non-linear amplitude transformation problem is to prepare a state $\epsilon$-close (in some measure of distance) to the state, $\frac{1}{\mathcal{N}}\sum_{j = 0}^{N - 1} f(\psi_j)\ket{j}$ where $\mathcal{N}^2 := \sum_{j=0}^{N-1}|f(\psi_j)|^2$. 
\end{definition}

\subsection{Prior Work}

Some prior work exists in this context. 
Mitarai \textit{et al.} \cite{mitarai2019quantum} demonstrated how to convert the amplitudes of a quantum state into a register coupled to the state, thereby converting the ``analog'' amplitude into a ``digital'' representation. 
The digital representations of the state amplitudes can then be modified coherently via arithmetic operations, analogously to how the inverse is computed in the original quantum matrix inversion proposal~\cite{harrow2009quantum}, enabling a function to be applied to the discretized representation. Then, a controlled rotation circuit, as used implicitly in e.g., \cite{harrow2009quantum}, is applied to the digitized function values, converting them to amplitudes, with some probability of success. 
Using a subspace error definition, and excluding other complexity parameters, their query complexity to the state-preparation unitary is $O(1/\epsilon)$, analogous to phase estimation.

Their clever construction led \cite{guo2021nonlinear} to develop an algorithm which obtains an exponential improvement in the error-dependence for applying polynomial functions. Their construction leverages the ability of QSVT to implement non-linear transformations in controllable subspaces of unitary operators.
Their results also utilize a subspace error definition. In \Cref{appComparison}, and in particular in \cref{theorem:modified_variant_of_previous_technique_complexity_of_applying_a_polynomial_to_a_state}, we rederive their algorithm on polynomial transformations using our diagonal block-encoding variant, and with a new $\ell_2$-norm error guarantee. In \cref{theorem:generalized_end_to_end_complexity_for_previous_technique_for_efficiently_analytic_functions} we then generalize this result to implement transformations according to arbitrary functions which can be well-approximated by polynomials.  

The results of~\cite{guo2021nonlinear} are obtained by creating a (non-diagonal) block-encoding of the state amplitudes, implementing the non-linear polynomial transformation in the block-encoded subspace, and subsequently applying their block-encoding to a uniform superposition (conditioned on measuring a particular outcome in the ancillary register). However, by applying their block-encoding to a uniform superposition, they can incur a worst-case query complexity of $\tilde{O}(\sqrt{N})$, which is sub-optimal in certain important cases. As we later detail, the quantum analog of importance sampling can often be used to eliminate this dimension dependence. 

Solving computational problems with importance sampling can lead to dramatic improvements over uniform sampling. 
Classically, importance sampling is the basis for achieving exponential improvements in a variety of algorithms such as inner-product estimation, low-rank matrix inversion, and low-rank machine learning algorithms, often under the umbrella of ``quantum-inspired" algorithms~\cite{Tang2018,chia2022sampling,gilyen2022improved,bakshi2023improved}. In the quantum case, the power of having a quantum importance sampler was also demonstrated in quantum-accelerated Monte-Carlo algorithms \cite{Montanaro2015}, which avoid generally a dimension dependence in their complexity.

Finally, we additionally note recent work on non-linear amplitude transformations ~\cite{holmes2023nonlinear}.

\subsection{Our work}
In this paper, we build upon the works of Refs. \cite{mitarai2019quantum,guo2021nonlinear}. 
We now outline the document structure, and highlight our main contributions. 
In \Cref{section:prelims} we establish notation, common terminology, and necessary results. 
In \Cref{section:diagonal_block_encoding_of_state_amplitudes}, and in particular in \cref{theorem:diagonal_block_encoding_of_state_amplitudes}, we derive our diagonal variant of the block-encoding used in~\cite{guo2021nonlinear}, simplifying our subsequent proofs.
In \Cref{section:importance_weighted_amplitude_transform}, we derive our main result (\cref{theorem:polynomial_transformation_of_real_state_amplitudes_via_importance_sampling}) as well as a number of technical lemmas, enabling the polynomial transformation of the amplitudes of quantum states via importance sampling. This result is a major contribution of our paper, and is fundamental to many of the subsequent results. 
In essence, instead of applying the block-encoding of the state amplitudes (transformed by the desired polynomial) to a uniform superposition, as is done in the prior work, we instead apply a different polynomial to the state itself. As a consequence, we necessitate a number of technical results, and we ultimately remove the general dimension-dependence from the query and circuit complexities of the prior work for a broad class of functions. Precisely, given a polynomial $P(x)$ we instead work with polynomials $h(x)$ such that $x h(x) = P(x)$. In \Cref{section:non_linear_transformations_via_polynomial_approximations}, and in particular in \cref{theorem:generalized_end_to_end_complexity_for_f_0_is_0_for_efficiently_analytic_functions}, we generalize the result of the previous section to enable the transformation of state amplitudes by a broad class of functions which can be well approximated by such polynomials. These results are all derived with rigorous $\ell_2$-norm error guarantees. In \Cref{appComparison}, we derive analogous results for the algorithm in \cite{guo2021nonlinear}.

In \Cref{section:end_to_end_complexity_for_applying_various_functions}, we prove the efficient end-to-end complexity in applying non-linear transformations to the amplitudes of arbitrary quantum states for a broad class of functions, with the results presented in \cref{theorem:end_to_end_complexity_for_applying_various_functions}. In particular, we prove results for the exponential function, cosine function, logistic function, Gaussian distribution, and the sine function. In \Cref{section:exponential_improvement_tanh}, \cref{theorem:exponential_speedup_in_applying_tanh_demonstration}, we demonstrate the exponential speedup obtained by our importance-sampling based technique in applying the $\tanh(x)$ function when compared to the algorithm of \cite{guo2021nonlinear}.

In \Cref{section:maximum_finding}, we present a new maximum finding problem formulation, along with a corresponding algorithm in \cref{theorem:maximum_finding_in_state_preparation_unitary_input_model} derived utilizing our importance sampling results. We note that this problem format, while not identical, has some overlap with the $\ell_{\infty}$-norm tomography results of~\cite{van2021quantum, van2023quantum}, which warrants further consideration. 
Informally, give a state preparation unitary, this formulation of maximum finding asks to identify (with high probability of success) the basis state with maximum amplitude. 
Note that in terms of asymptotic complexity, it is an easier task to sample the state with maximum amplitude (while not being able to conclude that its the maximum) than it is to sample a state \textit{and} conclude that it has maximum amplitude. Where $\psi_{max}$ is the maximum amplitude of any basis vector, the former only requires $\tilde{O}(\psi_{max}^{-2})$ samples via direct sampling, while the latter necessarily has to have a dependence on the gap of the largest and second largest amplitudes. To this end, the algorithm we propose concludes that a given state has maximum amplitude by creating a mask (filtering out all states with amplitude smaller than the maximum in the constructed diagonal block encoding of the state amplitudes) and applying this mask to the input state. 
All this is automatically done using our importance-sampling results by invoking~\cref{theorem:generalized_end_to_end_complexity_for_f_0_is_0_for_efficiently_analytic_functions} with the appropriate non-linear function. 
These techniques can be readily extended to a number of applications e.g., filtering out basis vectors in a given state with low probability amplitude.

Finally, in \Cref{section:continuous_state_prep}, we demonstrate how the framework of non-linear transformation of quantum amplitudes generalizes and unifies the quantum-arithmetic-free state-preparation approaches of \cite{gonzalez2023efficient} and \cite{mcardle2022quantum}. Essentially, as observed by \cite{gonzalez2023efficient}, the problem of preparing a quantum state with amplitudes proportional to a given non-linear continuous function is naturally captured in the framework of non-linear amplitude transformations by simply applying the non-linear function to an easy-to-prepare quantum state with amplitudes proportional to the linear function (with some details regarding normalization).
The proof in \cref{theorem:state_preparation_through_non_linear_transformation} utilizes our diagonal block-encodings, as well as the results in \Cref{appComparison} and \Cref{section:state_preparation_block_encodings}.

\section{Preliminaries}\label{section:prelims}
We use $\widetilde O(\cdot)$ to hide a polylog factor, i.e., $\widetilde O(f(n)) = O(f(n)\cdot {\rm poly}\log(f(n)))$. We use the notation $[N]$ to represent the set of $N$ integers from $0$ to $N-1$, e.g. $\sum_{j\in [N]}$ represents a sum from $0$ to $N-1$. We define the natural numbers $\mathbb{N}$ to include $0$.
Define $\oplus$ as the bit-wise addition modulo $2$ of two bit strings (bit-wise XOR operation). 
Additionally, we use subscripts on kets to denote the number of qubits composing that state. For example, $\ket{\psi}_n$ represents some $n$ qubit quantum state; $\ket{\psi}_n \in \mathbb{C}^{2^n}$. When the dimension is clear from the context, we drop the subscript. 
For some function $f$ and a normal operator $H = \sum_j \lambda_j \op{\lambda_j}{\lambda_j}$, we use the notation $f(H) := \sum_j f(\lambda_j) \op{\lambda_j}{\lambda_j}$ to represent the transformation of the eigenvalues of $H$ by $f$. Following~\cite{gilyen2019quantum}, for a general operator $M$ with singular value decomposition $M = U \Sigma V^{\dagger}$, we use the notation $f^{(SV)}(M) := U f(\Sigma) V^{\dagger}$ to represent the transformation of the singular values if $f$ is an odd function. If $f$ is an even function, we instead use the notation $f^{(SV)}(M) := V f(\Sigma) V^{\dagger}$.

\textit{Computational model}-- We use the standard quantum circuit model of quantum computation~\cite{Nielsen2010}. An application of a quantum gate is equivalent to performing an elementary operation. The query complexity of a classical/quantum algorithm with some input is the maximum number of queries the algorithm makes on the input.
Time complexity is measured in circuit depth, in terms of single and two qubit gates. Classical time complexity is measured in the number of fundamental computation steps. We assume that all single and two qubit gates are implemented exactly.

\begin{definition}[Block encoding \cite{gilyen2019quantum}]\label{def:quantum_block_encoding}
Suppose that $A$ is an $s$-qubit operator, $\alpha,\epsilon \in\mathbb R_+$ and $a\in \mathbb N$, then we say that the $(s+a)$-qubit unitary $U$ is an $(\alpha,a,\epsilon)$-block-encoding of $A$, if 
$$ \Vert A - \alpha(\bra{0}^{\otimes a}\otimes I)U(\ket{0}^{\otimes a}\otimes I)\Vert \leq \epsilon. $$
\end{definition}

\begin{definition}[Matrix eigenvalue functions]\label{def:matrix_eigenvalue_function}
Given a $N\times N$ normal matrix $H = \sum_{j=1}^N \lambda_j\op{\lambda_j}{\lambda_j}$, with $H\ket{\lambda_j}= \lambda_j \ket{\lambda_j}$, and a function $f : \mathbb{C} \mapsto \mathbb{C}$, we define a matrix function on its eigenvalues, i.e. $f(H) := \sum_{j=1}^N f(\lambda_j)\op{\lambda_j}{\lambda_j}$.
\end{definition}

\begin{definition}[Matrix singular value functions]
Given a function $f : [-1, 1] \mapsto \mathbb{C}$ and a matrix $H$ with singular value decomposition $H = U \Sigma V^{\dagger}$, we define a singular value function as $f^{(SV)}(H) := U f(\Sigma) V^{\dagger}$, where the function on $\Sigma$ is defined as per \cref{def:matrix_eigenvalue_function}.
\end{definition}

\begin{definition}[Lipschitz constant and Lipschitz continuity]\label{def:lipschitz_constant}
If a function $f$ is Lipschitz continuous on the interval $[a, b]$, there exists some constant $L \ge 0$ such that for all $x\neq y$ (with $x \in [a,b]$ and $y\in[a,b]$),
\begin{align}
    \frac{|f(y) - f(x)|}{|y - x|} \le L.
\end{align}
Moreover, we call the smallest $L$ satisfying the preceding equation the Lipschitz constant; the Lipschitz constant $L$ is given by $L := \max_{x\in[a, b]}|f'(x)|$.
\end{definition}

We begin by stating the result enabling the quantum eigenvalue transform of Hermitian matrices by arbitrary complex polynomials of \cite{gilyen2019quantum}.

\begin{theorem}[Theorem 56 of \cite{gilyen2019quantum}]\label{theorem:qet_of_hermitian_matrices}
Given a degree$-k$ polynomial of the form $P(x) = \sum_{j=0}^k a_j x^j$ with $x \in [-1, 1]$ and $\forall j, a_j \in \mathbb{C}$ such that $\forall x, |P(x)| \le 1/4$, a $(\alpha, a, \epsilon)$-block-encoding $U_H$ of a Hermitian matrix $H$, one can construct a quantum circuit $\tilde{U}$, implementing a $(1, a + 2, 4k\sqrt{\epsilon/\alpha} + \delta)$-block-encoding of $P(H/\alpha)$. $\tilde{U}$ can be constructed with $O(ka)$ single and two-qubit gates, and a total of $k$ calls to a controlled $U_H$ and controlled $U_H^{\dagger}$ circuit. Moreover, the circuit description of $\tilde{U}$ can be computed with classical time complexity of $O(\text{poly}(k, \log(1/\delta))$.
\end{theorem}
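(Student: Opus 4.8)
The plan is to reduce the statement to quantum signal processing (QSP)~\cite{low2017optimal} together with the qubitization/QSVT construction, so that the only genuinely new work is (i) assembling several definite-parity real polynomials into one circuit and (ii) propagating the block-encoding error $\epsilon$. First I would note that $U_H$ is automatically an exact $(1,a,0)$-block-encoding of $\tilde H := (\bra{0}^{\otimes a}\otimes I)\,U_H\,(\ket{0}^{\otimes a}\otimes I)$, which satisfies $\lnorm{\tilde H}\le 1$ and $\lnorm{\tilde H - H/\alpha}\le \epsilon/\alpha$, so one may work throughout with the normalized target $H/\alpha$. Since $H$ is Hermitian, for every polynomial $Q$ of definite parity one has $Q^{(SV)}(H/\alpha)=Q(H/\alpha)$ (the signs appearing in the singular value decomposition of a Hermitian matrix are exactly those that send $|\lambda|\mapsto\lambda$ for odd $Q$, and are irrelevant for even $Q$); hence it suffices to produce a block-encoding of $P^{(SV)}(\tilde H)$ and then bound $\lnorm{P^{(SV)}(\tilde H)-P(H/\alpha)}$.

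For the construction I would split the degree-$k$ polynomial $P$ into real and imaginary parts, and each of those into even and odd parts, obtaining at most four real-coefficient polynomials of definite parity and degree $\le k$, each bounded by $1/4$ on $[-1,1]$ --- this is precisely where the hypothesis $|P|\le 1/4$ enters. For each piece, QSP furnishes phase factors $\phi_0,\dots,\phi_k$, which lift via the standard argument --- interleaving the $k$ applications of (controlled) $U_H$ and $U_H^\dagger$ with the projector-controlled phase rotations $e^{i\phi_j(2\Pi-I)}$, $\Pi=\op{0^{a}}{0^{a}}\otimes I$ --- to a circuit whose top-left block equals that piece evaluated at $\tilde H$. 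Because all four circuits query the same $U_H$, I would merge them into a single circuit using two extra ancilla qubits to index the branch (real/imaginary $\times$ even/odd): these selector qubits control only the cheap single-qubit phases, whereas the $U_H$/$U_H^\dagger$ calls are shared. This keeps the total number of controlled-$U_H$/$U_H^\dagger$ queries at $k$, adds $O(ka)$ one- and two-qubit gates (the $k$ projector-controlled phases cost $O(a)$ each), and --- thanks to the $1/4$ bound --- yields a block of operator norm $\le 1$, i.e.\ a legitimate $(1,a+2,\cdot)$-block-encoding. Computing the phase factors for each target to precision $\delta$ takes classical time $\mathrm{poly}(k,\log(1/\delta))$, which is the source of the additive $\delta$.

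It remains to control $\lnorm{P^{(SV)}(\tilde H)-P(H/\alpha)}$. Writing each definite-parity piece as $Q(x)=x\,q(x^2)$ (odd) or $Q(x)=q(x^2)$ (even) and using $\lnorm{\tilde H^\dagger\tilde H-(H/\alpha)^2}\le 2\epsilon/\alpha$ together with the Markov-brothers bound $\lnorm{P'}_{[-1,1]}=O(k^2)$ for a polynomial bounded by a constant, one gets a Lipschitz-type estimate; combining it with the robustness analysis of~\cite{gilyen2019quantum} (which also handles the fact that $\tilde H$ need not be exactly Hermitian) produces the stated error $4k\sqrt{\epsilon/\alpha}$. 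I expect the two real obstacles to be exactly these: obtaining the correct $\sqrt{\epsilon/\alpha}$ scaling rather than a naive $\epsilon/\alpha$ or a singular-gap-dependent bound, and organizing the four-branch combination so that queries are shared and the count stays $k$ instead of $4k$; the remainder is careful ancilla and gate-count bookkeeping.
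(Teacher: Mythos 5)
This statement is imported verbatim as Theorem~56 of \cite{gilyen2019quantum}; the paper itself gives no proof of it, so there is nothing to compare line by line. Your sketch faithfully reconstructs the argument from that reference: the four-way split into real/imaginary and even/odd parts enabled by the $|P|\le 1/4$ hypothesis, the QSP lifting via interleaved projector-controlled phase rotations, the two selector ancillas that let all four branches share the $k$ queries to $U_H$ (so the count is $k$, not $4k$), and the perturbation bound giving the $4k\sqrt{\epsilon/\alpha}$ term. The one soft spot is your proposed derivation of that error term: a Markov-brothers/Lipschitz estimate on $P$ would yield only $O(k^2\,\epsilon/\alpha)$, not the stated $4k\sqrt{\epsilon/\alpha}$; the square-root scaling comes from the dedicated robustness lemma for singular value transformations in \cite{gilyen2019quantum} (their Lemma~23), whose proof is structurally different from a pointwise Lipschitz argument. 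Since you do explicitly invoke that robustness analysis as well, the sketch is sound provided that lemma is used as a black box rather than rederived from the Lipschitz estimate.
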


\begin{lemma}[Product of block encodings \cite{gilyen2019quantum}]\label{lemma:product_of_block_encodings}
If $U$ is an $(\alpha,a,\delta)$-block-encoding of an $s$-qubit operator $A$, and $V$ is an $(\beta,b,\epsilon)$-block-encoding of an $s$-qubit operator $B$ then $(I_b\otimes U)(I_a\otimes V)$ is an $(\alpha\beta,a+b,\alpha\epsilon+\beta\delta)$-block-encoding of $AB$. 
\end{lemma}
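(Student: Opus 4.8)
The plan is to unfold the definition of block-encoding, identify exactly which operator the composed circuit $W := (I_b\otimes U)(I_a\otimes V)$ block-encodes, and then control the error with the triangle inequality and submultiplicativity of the operator norm. Set notation by writing $\hat A := (\bra{0}^{\otimes a}\otimes I)\,U\,(\ket{0}^{\otimes a}\otimes I)$ and $\hat B := (\bra{0}^{\otimes b}\otimes I)\,V\,(\ket{0}^{\otimes b}\otimes I)$, so that the hypotheses read $\lnorm{A-\alpha\hat A}\le\delta$ and $\lnorm{B-\beta\hat B}\le\epsilon$. Since $\hat A$ and $\hat B$ are compressions of the unitaries $U$ and $V$ (for any unit system vector $\ket\phi$, $\lnorm{(\bra{0}^{\otimes a}\otimes I)U(\ket{0}^{\otimes a}\otimes\ket\phi)}\le\lnorm{U(\ket{0}^{\otimes a}\otimes\ket\phi)}=1$), one has $\lnorm{\hat A}\le 1$ and $\lnorm{\hat B}\le 1$.

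The step requiring care is to show that the compression of $W$ onto the joint all-zero ancilla subspace is exactly $\hat A\hat B$:
\[
\big(\bra{0}^{\otimes(a+b)}\otimes I\big)\,W\,\big(\ket{0}^{\otimes(a+b)}\otimes I\big)=\hat A\hat B .
\]
Here one fixes a labelling of the three registers — the $a$-qubit ancilla of $U$, the $b$-qubit ancilla of $V$, and the $s$-qubit system — so that $I_a\otimes V$ acts as the identity on the $a$-register while $I_b\otimes U$ acts as the identity on the $b$-register. Because the projector onto $\ket{0}^{\otimes a}$ on the $a$-register commutes past $I_a\otimes V$, and the projector onto $\ket{0}^{\otimes b}$ on the $b$-register commutes past $I_b\otimes U$, both projections can be pushed inward; inserting a resolution of the identity on the $a$-register between the two factors and observing that only the $\op{0}{0}^{\otimes a}$ term survives the outer projection collapses the expression to $\hat A\hat B$. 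Hence $W$ is, exactly, a $(\alpha\beta,\,a+b,\,0)$-block-encoding of $\alpha\beta\,\hat A\hat B$, so the whole problem reduces to bounding $\lnorm{AB-\alpha\beta\,\hat A\hat B}$.

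For the error estimate, insert the intermediate operator $\alpha\hat A\cdot B$ and use submultiplicativity:
\[
\lnorm{AB-\alpha\beta\,\hat A\hat B}\le\lnorm{(A-\alpha\hat A)B}+\alpha\lnorm{\hat A\,(B-\beta\hat B)}\le\lnorm{B}\,\delta+\alpha\lnorm{\hat A}\,\epsilon\le\beta\delta+\alpha\epsilon ,
\]
using $\lnorm{\hat A}\le 1$ and the normalization $\lnorm{B}\le\beta$ carried by the block-encoding of $B$ (if one prefers not to assume $\lnorm{B}\le\beta$, the same split with $\lnorm{B}\le\beta+\epsilon$ only costs an additional second-order term $\delta\epsilon$, which can be absorbed). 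Combined with the previous step, this exhibits $W$ as an $(\alpha\beta,\,a+b,\,\alpha\epsilon+\beta\delta)$-block-encoding of $AB$, as claimed.

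I expect the only genuine obstacle to be the bookkeeping in the second step: the expression $(I_b\otimes U)(I_a\otimes V)$ is written with an implicit reordering of tensor factors, so one must first pin down precisely on which subsystems $I_a$, $I_b$, $U$, $V$ and the two families of projectors act before the commutation argument becomes rigorous. Once the registers are fixed, everything else is routine norm algebra.
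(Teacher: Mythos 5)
Your proof is correct and is essentially the standard argument for this lemma from the cited reference~\cite{gilyen2019quantum}; the paper itself imports the statement without proof, adding only the remark about the tensor-factor convention that you also flag as the one piece of bookkeeping requiring care. Your explicit treatment of the $\Vert B\Vert \le \beta$ versus $\Vert B\Vert \le \beta+\epsilon$ point is the one spot where the standard write-up is slightly cavalier, and you resolve it correctly.
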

In \cref{lemma:product_of_block_encodings}, following the convention of \cite{gilyen2019quantum}, it is assumed that $U$ acts trivially on the ancillas of $V$, and $V$ acts trivially on the ancillas of $U$ -- this notation is not used anywhere else in this paper. I.e., the tensor products in $(I_b\otimes U)(I_a\otimes V)$ are not read in the usual sense in this lemma.

\begin{lemma}\label{lemma:normalized_state_deviation_due_to_normalization}
    Given two normalized quantum states $\ket{\tilde{a}} = \frac{1}{\mathcal{N}_a}\ket{a}$, and $\ket{\tilde{b}} = \frac{1}{\mathcal{N}_b}\ket{b}$ such that $|\mathcal{N}_a - \mathcal{N}_b| \le \epsilon_0$, and that $\lnorm{\ket{a} - \ket{b}}_2 \le \epsilon_1$, then $\lnorm{\ket{\tilde{a}} - \ket{\tilde{b}}}_2 \le (\epsilon_0 + \epsilon_1)/\max\{\mathcal{N}_a,\mathcal{N}_b \}$.
\end{lemma}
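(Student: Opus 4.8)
The plan is a standard hybrid argument: insert an intermediate vector that keeps the numerator of one state and the normalization of the other, then split the deviation into a ``numerator'' contribution controlled by $\epsilon_1$ and a ``normalization'' contribution controlled by $\epsilon_0$, using the triangle inequality. Since the statement is symmetric under swapping $(\ket a,\mathcal{N}_a)\leftrightarrow(\ket b,\mathcal{N}_b)$, I would assume without loss of generality that $\mathcal{N}_a \ge \mathcal{N}_b$, so that $\max\{\mathcal{N}_a,\mathcal{N}_b\} = \mathcal{N}_a$.

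First I would write
\begin{align}
    \ket{\tilde a} - \ket{\tilde b} = \frac{\ket a}{\mathcal{N}_a} - \frac{\ket b}{\mathcal{N}_b} = \frac{\ket a - \ket b}{\mathcal{N}_a} + \left(\frac{1}{\mathcal{N}_a} - \frac{1}{\mathcal{N}_b}\right)\ket b,
\end{align}
where I deliberately pull out $\ket b$ (the vector with the smaller norm) in the second term; this is exactly the choice that converts a $1/(\mathcal{N}_a\mathcal{N}_b)$ factor into a $1/\max\{\mathcal{N}_a,\mathcal{N}_b\}$ factor at the end. Applying the triangle inequality then gives
\begin{align}
    \lnorm{\ket{\tilde a} - \ket{\tilde b}}_2 \le \frac{\lnorm{\ket a - \ket b}_2}{\mathcal{N}_a} + \left|\frac{1}{\mathcal{N}_a} - \frac{1}{\mathcal{N}_b}\right|\lnorm{\ket b}_2.
\end{align}

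Next I would bound the two terms separately. For the first, $\lnorm{\ket a - \ket b}_2 \le \epsilon_1$ by hypothesis. For the second, $\lnorm{\ket b}_2 = \mathcal{N}_b$ because $\ket{\tilde b} = \ket b/\mathcal{N}_b$ is a unit vector, and $\left|\frac{1}{\mathcal{N}_a} - \frac{1}{\mathcal{N}_b}\right| = \frac{|\mathcal{N}_a - \mathcal{N}_b|}{\mathcal{N}_a\mathcal{N}_b} \le \frac{\epsilon_0}{\mathcal{N}_a\mathcal{N}_b}$, so the second term is at most $\mathcal{N}_b\cdot\frac{\epsilon_0}{\mathcal{N}_a\mathcal{N}_b} = \frac{\epsilon_0}{\mathcal{N}_a}$. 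Combining, $\lnorm{\ket{\tilde a} - \ket{\tilde b}}_2 \le (\epsilon_0+\epsilon_1)/\mathcal{N}_a = (\epsilon_0+\epsilon_1)/\max\{\mathcal{N}_a,\mathcal{N}_b\}$.

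There is essentially no hard step here; the only subtlety is the bookkeeping in the hybrid decomposition. One must pull out the \emph{smaller}-norm unnormalized vector in the normalization-error term, so that the denominator ends up as $\max\{\mathcal{N}_a,\mathcal{N}_b\}$ rather than $\min\{\mathcal{N}_a,\mathcal{N}_b\}$; the symmetric case $\mathcal{N}_b > \mathcal{N}_a$ is handled identically by instead pulling out $\ket a$ and using $\lnorm{\ket a}_2 = \mathcal{N}_a$.
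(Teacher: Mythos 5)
Your proof is correct and uses essentially the same hybrid decomposition plus triangle inequality as the paper's proof (the paper pulls out $\ket{a}$ and lands on a $1/\mathcal{N}_b$ denominator, whereas you pull out $\ket{b}$ and land on $1/\mathcal{N}_a$; these are mirror images of the same argument). If anything, your version is slightly more careful, since you state the without-loss-of-generality assumption explicitly so that the denominator is genuinely $\max\{\mathcal{N}_a,\mathcal{N}_b\}$, a point the paper's write-up leaves implicit.
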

\begin{proof}
A simple calculation obtains 
$\lnorm{\ket{\tilde{a}} - \ket{\tilde{b}}}_2
=
\lnorm{\frac{\ket{a}}{\mathcal{N}_a} - \frac{\ket{b}}{\mathcal{N}_b}}_2
=
\frac{1}{\mathcal{N}_a\mathcal{N}_b}
\lnorm{(\mathcal{N}_b - \mathcal{N}_a)\ket{a} + \mathcal{N}_a(\ket{a} - \ket{b})}_2$.
Applying triangle inequality, and noting that $\mathcal{N}_a = \lnorm{\ket{a}}_2$,
$\lnorm{\ket{\tilde{a}} - \ket{\tilde{b}}}_2
\le 
\frac{1}{\mathcal{N}_b}
\left( 
|\mathcal{N}_b-\mathcal{N}_a|
+
\lnorm{\ket{a} - \ket{b}}_2
\right)
\le 
\frac{\epsilon_0 + \epsilon_1}{\mathcal{N}_b}$.
\end{proof}

\section{Diagonal block-encoding of quantum state amplitudes}\label{section:diagonal_block_encoding_of_state_amplitudes}
In \cref{def:state_prep_unitary_input}, we begin by specifying the assumed input model, a state preparation unitary as specified by a quantum circuit. When applied to the $\ket{0}$ state, this quantum circuit uniquely specifies a quantum state. We then present some further preliminary definitions and results, culminating in the derivation of our diagonal block-encoding of the state amplitudes, a variant of the block-encoding first presented in~\cite{mitarai2019quantum,guo2021nonlinear}.
We generalize this input model in the appendix. 
 
\begin{definition}[Input model of this work, State preparation unitary]\label{def:state_prep_unitary_input}
Let $n \in \mathbb Z_{> 0}$.
We are given an $n$-qubit quantum circuit termed the ``state-preparation circuit''. This circuit is described by a unitary matrix denoted by $U \in \mathbb C^{N\times N}$,  where $N=2^n$.
The action of $U$ on the all-zero state defines the following state,
\begin{align}
U\ket{0}_n = \sum_{j=0}^{N-1} \psi_j \ket{j} =: \ket{\psi},
\end{align}
where $\psi_j \in \mathbb C$ with $\sum_{j=0}^{N-1} \vert\psi_j\vert^2=1$. 
We call a ``real state preparation circuit'' a quantum circuit that has the property that $\forall j, \psi_j \in \mathbb R$.
\end{definition}

We define a few elementary gates in the following. We use the total working space of $2n+1$ qubits, in anticipation of the needs of the algorithm. The definitions make precise on which qubits the gates are acting on.
\begin{definition}\label{def:R_gate}
We define the following rotation gate, acting on $2n + 1$ qubits:
\begin{align}
    R := (I_{n+1} - 2\op{0}{0}_{n+1})\otimes I_n.
\end{align}
\end{definition}
This gate can be implemented by an $n$-controlled $XZX$ gate targeting the first qubit, controlled on the next $n$-qubits (controlled on the $\ket{0}_n$ state), and with an identity acting on the final $n$ qubits. The $n+1$ qubit controlled gate can then be decomposed into $O(n)$ two-qubit gates via the decomposition of \cite{saeedi2013linear}, giving a total circuit depth for this operator of $O(n)$.

\begin{definition}\label{def:single_qubit_defs}
We define the following $2n + 1$ qubit gates,
\begin{align}
\Z := I_n \otimes Z \otimes I_n,\quad   \H := I_n \otimes H \otimes I_n,\quad
\S := I_n \otimes S \otimes I_n,
\end{align}
where $Z$ is the standard $1$-qubit Pauli-Z gates, $H$ is the $1$-qubit Hadamard gate, and $S = \begin{pmatrix}1 & 0\\ 0 & i \end{pmatrix}$. 
\end{definition}
These can all clearly be implemented with  a single circuit layer.

\begin{definition}\label{def:U_C_circuit_def}
Given an $n$-qubit unitary $U$, we define the $2n+1$ qubit operator $U_C$ as follows
\begin{align}
    U_C := (U\otimes \op{0}{0}_1 + I_n \otimes \op{1}{1}_1)\otimes I_n.
\end{align}
\end{definition}
This gate can be implemented with one query to a controlled $U$ gate, with an $X$ gate applied before and after the control (on the control qubit).

\begin{definition}\label{def:copy_circuit}
We define a $2n + 1$ qubit ``controlled-copy'' circuit $C$, which implements the mapping of the $\ket{0}_n\ket{0}_1\ket{k}_n$ state to itself, and the $\ket{0}_n\ket{1}_1\ket{k}_n$ state to $\ket{k}_n\ket{1}_1\ket{k}_n$. The matrix form of the gate may be written as,
\begin{align}
    C := I_{n}\otimes \op{0}{0}_1 \otimes I_n
    +
    \sum_{k=0}^{N-1}\sum_{j=0}^{N-1}\op{j\oplus k}{j}_n\otimes \op{1}{1}_1 \otimes \op{k}{k}_n.
\end{align}
\end{definition}
This gate can be implemented by $n$ Toffoli gates controlled on the $\ket{1}_1$ state of the middle qubit, and with the other control of the $i^{th}$ Toffoli gate acting on the $\ket{1}_1$ state of the $(n + 1 + i)^{th}$ qubit, and with the target being the $i^{th}$ qubit. This can clearly be implemented with $O(n)$ two-qubit gates and $O(n)$ circuit depth. 

We continue by defining some gate sequences and proving their action. We define the $W_1$ operator slightly differently than in \cite{mitarai2019quantum, guo2021nonlinear}, but note that the action on the desired subspace is unchanged.
The flag $p$ determined if a relative phase is applied or not. 
\begin{lemma}[$W$ operator from \cite{mitarai2019quantum,guo2021nonlinear}]\label{lemma:W_operator_definition_and_action}
For $p\in\{0,1\}$, define the $2n + 1$ qubit operator $W_p$ by the circuit,
\begin{align}
    W_p := \H \hat S^p C U_C \H.
\end{align}
When operating on the states $\ket{0}_n\ket{0}_1\ket{k}_n$, where $\ket{k}_n$ is an $n$-qubit standard basis vector, it prepares the family of $2^n$ states $\{\ket{\Phi_k^p}\}_k$ given by
\begin{align}
\ket{\Phi_k^p} := W_p\ket{0}_n\ket{0}_1\ket{k}_n
=
\frac{1}{2}\left(
(\ket{\psi}_n + i^p \ket{k}_n)\ket{0}_1 + (\ket{\psi}_n - i^p \ket{k}_n)\ket{1}_1
\right)\ket{k}_n.
\end{align}
The circuit $W_p$ uses $O(n)$ circuit depth with $1$ query to a controlled-$U$ gate. 
\end{lemma}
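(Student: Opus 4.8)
The plan is to verify the claimed action of $W_p$ on the basis states $\ket{0}_n\ket{0}_1\ket{k}_n$ by simply tracking the state through the five gates in the product $W_p := \H \hat S^p C U_C \H$, applied right-to-left. Since everything here is a concrete finite composition of explicitly-defined unitaries, this is a direct computation; there is no real "obstacle" beyond bookkeeping the three registers (an $n$-qubit register, a $1$-qubit "middle" register, and an $n$-qubit register holding $\ket{k}$).

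First I would apply $\H = I_n\otimes H\otimes I_n$ to $\ket{0}_n\ket{0}_1\ket{k}_n$, producing $\frac{1}{\sqrt2}\ket{0}_n(\ket{0}_1+\ket{1}_1)\ket{k}_n$. Next, $U_C := (U\otimes\op{0}{0}_1 + I_n\otimes\op{1}{1}_1)\otimes I_n$ acts as $U$ on the first register when the middle qubit is $\ket{0}_1$ and as the identity when it is $\ket{1}_1$; using $U\ket{0}_n=\ket{\psi}_n$ this gives $\frac{1}{\sqrt2}\left(\ket{\psi}_n\ket{0}_1 + \ket{0}_n\ket{1}_1\right)\ket{k}_n$. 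Then I apply the controlled-copy circuit $C$: on the $\ket{1}_1$ branch it sends $\ket{0}_n\ket{1}_1\ket{k}_n \mapsto \ket{k}_n\ket{1}_1\ket{k}_n$ (this is exactly the stated action of $C$, since $\op{j\oplus k}{j}$ with $j=0$ maps $\ket{0}_n\mapsto\ket{k}_n$), while the $\ket{0}_1$ branch is untouched; the state becomes $\frac{1}{\sqrt2}\left(\ket{\psi}_n\ket{0}_1 + \ket{k}_n\ket{1}_1\right)\ket{k}_n$.

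Next I apply $\hat S^p = I_n\otimes S^p\otimes I_n$, which multiplies the $\ket{1}_1$ component by $i^p$ (since $S=\mathrm{diag}(1,i)$), yielding $\frac{1}{\sqrt2}\left(\ket{\psi}_n\ket{0}_1 + i^p\ket{k}_n\ket{1}_1\right)\ket{k}_n$. Finally I apply $\H$ again: using $H\ket{0}_1 = \tfrac{1}{\sqrt2}(\ket{0}_1+\ket{1}_1)$ and $H\ket{1}_1 = \tfrac{1}{\sqrt2}(\ket{0}_1-\ket{1}_1)$, the middle-qubit state $\tfrac{1}{\sqrt2}(\ket{\psi}_n\ket{0}_1 + i^p\ket{k}_n\ket{1}_1)$ maps to $\tfrac12\big((\ket{\psi}_n + i^p\ket{k}_n)\ket{0}_1 + (\ket{\psi}_n - i^p\ket{k}_n)\ket{1}_1\big)$, which is exactly the claimed $\ket{\Phi_k^p}$ after reattaching the untouched $\ket{k}_n$ register. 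For the complexity claim, I would just tally the circuit costs already established for each gate: $\H$ and $\hat S^p$ are single layers (\cref{def:single_qubit_defs}), $C$ uses $O(n)$ depth and no $U$-queries (\cref{def:copy_circuit}), and $U_C$ uses one controlled-$U$ query plus $O(1)$ overhead (\cref{def:U_C_circuit_def}); summing gives $O(n)$ depth with a single query to a controlled-$U$ gate. The only point requiring a word of care is the remark that this $W_1$ differs from the definition in \cite{mitarai2019quantum,guo2021nonlinear} — but since we only ever use its action on the subspace spanned by $\{\ket{0}_n\ket{0}_1\ket{k}_n\}_k$, and that action is what we just computed, the discrepancy is immaterial and need not be addressed further.
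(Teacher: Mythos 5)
Your proposal is correct and follows essentially the same route as the paper's proof: a right-to-left gate-by-gate computation through $\H$, $U_C$, $C$, $\hat S^p$, and $\H$, with the identical intermediate states at each step, followed by the same tallying of circuit costs. Nothing to add.
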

\begin{proof}
First, with $\H$ as per \cref{def:single_qubit_defs},
$\H \ket{0}_n\ket{0}_1\ket{k}_n
=
\frac{1}{\sqrt{2}}\left(
\ket{0}_n\ket{0}_1 + \ket{0}_n\ket{1}_1
\right)\ket{k}_n$.
Then, with $U_C$ as per \cref{def:U_C_circuit_def}, since $U\ket{0}_n = \ket{\psi}_n$,
$
U_C\H \ket{0}_n\ket{0}_1\ket{k}_n
=
\frac{1}{\sqrt{2}}\left(
\ket{\psi}_n\ket{0}_1 + \ket{0}_n\ket{1}_1
\right)\ket{k}_n$.
Applying $C$ as per \cref{def:copy_circuit} then yields,
\begin{align}\label{eqn:W_operator_common_part_with_W_i}
C U_C\H \ket{0}_n\ket{0}_1\ket{k}_n
    =
    \frac{1}{\sqrt{2}}\left(
    \ket{\psi}_n\ket{0}_1 + \ket{k}_n\ket{1}_1
    \right)\ket{k}_n.
\end{align}
As $\S^0 = I_{2n+1}$, applying $\S^p$ (as per \cref{def:single_qubit_defs}) then yields,
\begin{align}
    \S^p C U_C\H \ket{0}_n\ket{0}_1\ket{k}_n
    =
    \frac{1}{\sqrt{2}}\left(
    \ket{\psi}_n\ket{0}_1 + i^p\ket{k}_n\ket{1}_1
    \right)\ket{k}_n.
\end{align}
Finally, applying $\H$ again yields,
\begin{align}
W_p\ket{0}_n\ket{0}_1\ket{k}_n
    =
    \frac{1}{2}\left(
    (\ket{\psi}_n + i^p \ket{k}_n)\ket{0}_1 + (\ket{\psi}_n - i^p\ket{k}_n)\ket{1}_1
    \right)\ket{k}_n.
\end{align}
Clearly, we have implemented this transformation with $O(n)$ circuit depth, and one query to a controlled-$U$ gate.
\end{proof}

\begin{lemma}\label{lemma:eigenvectors_and_eigenvalues_of_G_plus_G_dagger}
Let $p \in \{0,1\}$. Let $G_p:=W_p R W_p^{\dagger}\Z$ as per~\cite{guo2021nonlinear}. Then,
$\ket{\Phi_k^p}$ is an eigenstate of $-\frac{1}{2}(G_p + G_p^{\dagger})$, with eigenvalue $\Re{\psi_k}$, if $p=0$, and $\Im{\psi_k}$, if $p=1$.
\end{lemma}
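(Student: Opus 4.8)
The plan is to compute the action of $-\tfrac12(G_p + G_p^\dagger)$ on $\ket{\Phi_k^p}$ directly and read off the eigenvalue. The key observation is that $\ket{\Phi_k^p} = W_p\ket{0}_n\ket{0}_1\ket{k}_n$ by \cref{lemma:W_operator_definition_and_action}, so conjugating by $W_p$ effectively moves us into the ``computational'' frame where $R$ and $\Z$ act simply. First I would write $G_p + G_p^\dagger = W_p R W_p^\dagger \Z + \Z W_p R W_p^\dagger$ (using $R^\dagger = R$, $\Z^\dagger = \Z$, and that $W_p$ is unitary). Applying this to $\ket{\Phi_k^p}$ and inserting $W_p^\dagger W_p = I$ where helpful, the first term gives $W_p R W_p^\dagger \Z \ket{\Phi_k^p}$ and the second gives $\Z W_p R \ket{0}_n\ket{0}_1\ket{k}_n = \Z W_p \ket{0}_n\ket{0}_1\ket{k}_n = \Z\ket{\Phi_k^p}$, since $R\ket{0}_n\ket{0}_1\ket{k}_n = -\ket{0}_n\ket{0}_1\ket{k}_n$ (the first $n+1$ qubits are in the all-zero state, on which $I_{n+1} - 2\op{0}{0}_{n+1}$ acts as $-1$). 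Wait — more carefully, $R = (I_{n+1} - 2\op{0}{0}_{n+1})\otimes I_n$ reflects about the orthogonal complement of $\ket{0}_{n+1}$, so $R$ acting on $\ket{0}_{n+1}\ket{k}_n$ yields $-\ket{0}_{n+1}\ket{k}_n$; thus the second term contributes $-\Z\ket{\Phi_k^p}$ after the sign, i.e. $-\tfrac12$ of the second term is $+\tfrac12\Z\ket{\Phi_k^p}$. I will keep careful track of these signs in the writeup.

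The main computation is then evaluating $\Z\ket{\Phi_k^p}$ and $W_p R W_p^\dagger \Z\ket{\Phi_k^p}$. From the explicit form $\ket{\Phi_k^p} = \tfrac12((\ket{\psi}_n + i^p\ket{k}_n)\ket{0}_1 + (\ket{\psi}_n - i^p\ket{k}_n)\ket{1}_1)\ket{k}_n$, applying $\Z = I_n\otimes Z\otimes I_n$ flips the sign of the $\ket{1}_1$ component, giving $\Z\ket{\Phi_k^p} = \tfrac12((\ket{\psi}_n + i^p\ket{k}_n)\ket{0}_1 - (\ket{\psi}_n - i^p\ket{k}_n)\ket{1}_1)\ket{k}_n$. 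The crucial algebraic step is to decompose $\Z\ket{\Phi_k^p}$ in the basis $\{\ket{\Phi_k^p}, \ket{0}_n\ket{0}_1\ket{k}_n, \ket{0}_n\ket{1}_1\ket{k}_n\}$ (and its image under $W_p$): one finds $\Z\ket{\Phi_k^p} = (\text{coefficient})\ket{\Phi_k^p} + (\text{orthogonal piece})$, where the overlap $\bra{\Phi_k^p}\Z\ket{\Phi_k^p}$ is precisely $\tfrac12(\langle\psi|k\rangle i^{-p} + \bar i^{-p}\langle k|\psi\rangle)\cdot(\text{const})$, which evaluates to $\Re{\psi_k}$ when $p=0$ and $\Im{\psi_k}$ when $p=1$ (using $\langle k|\psi\rangle = \psi_k$ and that $i^{-1} = -i$ picks out the imaginary part). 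Then $W_p R W_p^\dagger$ acting on the remaining orthogonal part must be shown to cancel against the other contributions, leaving only the eigenvalue-times-$\ket{\Phi_k^p}$ term.

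The cleanest route, which I'd actually follow in the final writeup, is to note that $W_p^\dagger\ket{\Phi_k^p} = \ket{0}_n\ket{0}_1\ket{k}_n$, so I'll work in the frame of the vector $\ket{\mu_k} := W_p^\dagger\Z\ket{\Phi_k^p}$: then $-\tfrac12(G_p+G_p^\dagger)\ket{\Phi_k^p} = -\tfrac12 W_p R \ket{\mu_k} - \tfrac12\Z\ket{\Phi_k^p}$. Expanding $\ket{\mu_k}$ in terms of $\ket{0}_{n+1}\ket{k}_n$ and vectors orthogonal to it in the relevant two-dimensional invariant subspace, $R$ acts as $-1$ on the first and $+1$ on the rest, and recombining via $W_p$ yields the eigenvalue relation. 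The main obstacle is purely bookkeeping: correctly identifying the two-dimensional subspace spanned by $\ket{\Phi_k^p}$ and $W_p\ket{0}_n\ket{1}_1\ket{k}_n$ that is invariant under both $\Z$ and $W_p R W_p^\dagger$, and tracking the phase factors $i^p$ so that the real/imaginary part emerges correctly for the two values of $p$. There is no conceptual difficulty beyond this; the result is essentially the statement that $-\tfrac12(G_p+G_p^\dagger)$ is the ``Hermitian part'' of a Szegedy-type walk operator whose eigenphases encode the amplitudes, so I would also remark on this structural interpretation to orient the reader.
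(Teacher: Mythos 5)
Your proposal follows essentially the same route as the paper's proof: establish $G_p^{\dagger}\ket{\Phi_k^p} = -\Z\ket{\Phi_k^p}$ via $R\,\ket{0}_{n+1}\ket{k}_n = -\ket{0}_{n+1}\ket{k}_n$, reduce $-\tfrac12(G_p+G_p^{\dagger})\ket{\Phi_k^p}$ to $W_p(\op{0}{0}_{n+1}\otimes I_n)W_p^{\dagger}\Z\ket{\Phi_k^p}$ using $R = (I_{n+1} - 2\op{0}{0}_{n+1})\otimes I_n$, and compute the projection of $W_p^{\dagger}\Z\ket{\Phi_k^p}$ onto $\ket{0}_{n+1}\ket{k}_n$ to obtain the coefficient $\tfrac{i^p}{2}(\psi_k^* + (-1)^p\psi_k)$, exactly as the paper does. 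Two small cautions for the writeup: in your third paragraph the term $-\tfrac12\Z\ket{\Phi_k^p}$ should be $+\tfrac12\Z\ket{\Phi_k^p}$ (consistent with your own sign analysis in the second paragraph), and the span of $\ket{\Phi_k^p}$ and $W_p\ket{0}_n\ket{1}_1\ket{k}_n$ is generally \emph{not} invariant under $\Z$ (it is preserved by $W_pRW_p^{\dagger}$ but not by $\Z$ unless $\ket{\psi}\propto\ket{k}$), though neither point affects the direct computation, which goes through as in the paper.
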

\begin{proof}
This fact was first identified and proven in its current form in \cite{guo2021nonlinear}, and in a similar form in \cite{mitarai2019quantum}.
First, since $\Z = \Z^{\dagger}$, $R = R^{\dagger}$, we have that 
$G_p^{\dagger} = \Z W_p R W_p^{\dagger}.$
In addition, with $R = (I_{n+1} - 2\op{0}{0}_{n+1})\otimes I_n$, we have
\begin{align}
G_p^{\dagger}\ket{\Phi_k^p} = \Z W_p R W_p^{\dagger}\ket{\Phi_k^p}
=
\Z W_p R \ket{0}_n\ket{0}_1\ket{k}_n
=
- \Z W_p \ket{0}_n\ket{0}_1\ket{k}_n
=
-\Z \ket{\Phi_k^p}.
\end{align}
As a result, we have
\begin{align}
    (G_p + G_p^{\dagger})\ket{\Phi_k^p}
    =
    W_p R W_p^{\dagger}\Z\ket{\Phi_k^p}
    -\Z \ket{\Phi_k^p}
    = 
    (W_p R W_p^{\dagger} - I_{2n + 1})\Z \ket{\Phi_k^p}.
\end{align}
Using the definition of $R$, we obtain 
\begin{align}
    W_p R W_p^{\dagger} = W_p W_p^{\dagger} - 2W_p (\op{0}{0}_{n+1}\otimes I_n)W_p^{\dagger}
    =
    I_{2n+1} - 2W_p (\op{0}{0}_{n+1}\otimes I_n)W_p^{\dagger}.
\end{align}
Thus,
\begin{align}\label{eqn:proving_eigenvector_of_G_G_dagger_step_2}
    -\frac{1}{2}(G_p + G_p^{\dagger})\ket{\Phi_k^p}
    =
     W_p (\op{0}{0}_{n+1}\otimes I_n) W_p^{\dagger} \Z \ket{\Phi_k^p}.
\end{align}
We will now examine $W_p^{\dagger}\Z \ket{\Phi_k^p}$, noting that 
\begin{align}
    \Z\ket{\Phi_k^p} = \frac{1}{2}\left((\ket{\psi}_n +i^p \ket{k}_n)\ket{0}_1 -(\ket{\psi}_n -i^p\ket{k}_n)\ket{1}_1 \right)\ket{k}_n.
\end{align}
Since $W_p= \H \S^p C U_C \H$, and thus $W_p^{\dagger} = \H U_C^{\dagger}C^{\dagger}(\S^p)^\dagger\H$. Applying $W_p^{\dagger}$ one part at a time, we first obtain
\be
    (\S^p)^\dagger \H \Z\ket{\Phi_k^p} 
    &=&
    (\S^p)^\dagger \frac{1}{2}\left(
    (\ket{\psi}_n + i^p\ket{k}_n)\ket{+}_1 - (\ket{\psi}_n - i^p \ket{k}_n)\ket{-}_1
    \right)\ket{k}_n \\
    &=&
    \frac{1}{\sqrt{2}}
    \left(
    i^p \ket{k}_n\ket{0}_1 + (-i)^p \ket{\psi}_n\ket{1}_1
    \right)\ket{k}_n = \frac{i^p}{\sqrt{2}}
    \left(
    \ket{k}_n\ket{0}_1 + (-1)^p \ket{\psi}_n\ket{1}_1
    \right)\ket{k}_n,
\ee
Applying $C^{\dagger} = I_{n}\otimes \op{0}{0}_1 \otimes I_n + \sum_{k=0}^{N-1}\sum_{j=0}^{N-1}\op{j}{j\oplus k}_n\otimes \op{1}{1}_1 \otimes \op{k}{k}_n$, then yields
\begin{align}
C^{\dagger}(\S^p)^\dagger\H \Z\ket{\Phi_k^1}
=
\frac{i^p}{\sqrt{2}}\left(
    \ket{k}_n\ket{0}_1 + (-1)^p\sum_{j=0}^{N-1}\psi_{j \oplus k}\ket{j}_n\ket{1}_1
\right)\ket{k}_n.
\end{align}
Applying $(U^{\dagger} \otimes \op{0}{0}_1\otimes I_n + I_n \otimes \op{1}{1}_1\otimes I_n)$ then gives,
\begin{align}
    (U^{\dagger} \otimes \op{0}{0}_1\otimes I_n + I_n \otimes \op{1}{1}_1\otimes I_n) C^{\dagger} \H \Z\ket{\Phi_k^p}
    =
    \frac{i^p}{\sqrt{2}}\left(
        U^{\dagger}\ket{k}_n\ket{0}_1 +(-1)^p \sum_{j=0}^{N-1}\psi_{j \oplus k}\ket{j}_n\ket{1}_1
    \right)\ket{k}_n.
\end{align}
Finally, applying $\H$ gives,
\begin{align}
W_p^{\dagger}\Z \ket{\Phi_k^p}
&=
\frac{i^p}{\sqrt{2}}\left(
U^{\dagger}\ket{k}_n\ket{+}_1 + (-1)^p\sum_{j=0}^{N-1}\psi_{j \oplus k}\ket{j}_n\ket{-}_1\right)\ket{k}_n.
\end{align}
Then, $(\op{0}{0}_{n+1}\otimes I_n) W_p^{\dagger} \Z \ket{\Phi_k^p}$ becomes: 
\begin{align}
    (\op{0}{0}_{n+1}\otimes I_n) W_p^{\dagger} \Z \ket{\Phi_k^p}
    &=
    \frac{i^p}{\sqrt{2}}(\op{0}{0}_{n+1}\otimes I_n)\left(
U^{\dagger}\ket{k}_n\ket{+}_1 +(-1)^p \sum_{j=0}^{N-1}\psi_{j \oplus k}\ket{j}_n\ket{-}_1\right)\ket{k}_n\\
    &=
    \frac{i^p}{2}\left(
        \ket{0}_n\ip{\psi}{k}\ket{0} + (-1)^p \psi_k\ket{0}_n\ket{0}_1
    \right)\ket{k}_n\\
    &=
    \frac{i^p}{2}(\psi_k^* + (-1)^p \psi_k)\ket{0}_n\ket{0}_1\ket{k}_n.
\end{align}
As a result, from \cref{eqn:proving_eigenvector_of_G_G_dagger_step_2}, we obtain 
\be
-\frac{1}{2}(G_p + G_p^{\dagger}) \ket{\Phi_k^p} &=& W_p (\op{0}{0}_{n+1}\otimes I_n) W_p^{\dagger} \Z \ket{\Phi_k^p}
=
\frac{i^p}{2}(\psi_k^* + (-1)^p \psi_k) \ket{\Phi_k^p}\\
&=&
 \begin{cases}\text{Re}(\psi_k) \ket{\Phi_k^p} \ \text{if}\ p=0. \\
\text{Im}(\psi_k) \ket{\Phi_k^p} \ \text{if}\ p=1.
\end{cases}.
\ee
\end{proof}

\begin{lemma}\label{lemma:basis_vectors_of_G_plus_G_dagger_diagonalized}
Let $p=\{0,1\}$. The operator $-\frac{1}{2}W_p^{\dagger}(G_p + G_p^{\dagger})W_p$ has eigenvectors $\ket{0}_n\ket{0}_1\ket{k}_n$ with associated eigenvalue $\Re{\psi_k}$ if $p=0$ and 
$\Im{\psi_k}$ if $p=1$. 
\end{lemma}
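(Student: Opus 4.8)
The plan is to obtain this statement by conjugating the result of \cref{lemma:eigenvectors_and_eigenvalues_of_G_plus_G_dagger} by the unitary $W_p$. The two ingredients I would combine are: (i) from \cref{lemma:W_operator_definition_and_action}, $W_p\ket{0}_n\ket{0}_1\ket{k}_n = \ket{\Phi_k^p}$; and (ii) from \cref{lemma:eigenvectors_and_eigenvalues_of_G_plus_G_dagger}, $-\tfrac{1}{2}(G_p+G_p^{\dagger})\ket{\Phi_k^p} = \lambda_k^p\ket{\Phi_k^p}$, where $\lambda_k^0 = \Re{\psi_k}$ and $\lambda_k^1 = \Im{\psi_k}$.

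First I would apply the operator $-\tfrac{1}{2}W_p^{\dagger}(G_p+G_p^{\dagger})W_p$ directly to the candidate eigenvector $\ket{0}_n\ket{0}_1\ket{k}_n$. The rightmost $W_p$ turns it into $\ket{\Phi_k^p}$ by (i); the middle factor $-\tfrac{1}{2}(G_p+G_p^{\dagger})$ multiplies it by the scalar $\lambda_k^p$ by (ii); and finally $W_p^{\dagger}$ acts, pulling out the scalar and returning $W_p^{\dagger}\ket{\Phi_k^p} = W_p^{\dagger}W_p\ket{0}_n\ket{0}_1\ket{k}_n = \ket{0}_n\ket{0}_1\ket{k}_n$ by unitarity of $W_p$. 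Chaining these steps gives $-\tfrac{1}{2}W_p^{\dagger}(G_p+G_p^{\dagger})W_p\ket{0}_n\ket{0}_1\ket{k}_n = \lambda_k^p\ket{0}_n\ket{0}_1\ket{k}_n$, which is exactly the claim with the stated case split on $p$.

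There is essentially no obstacle here: the lemma is the observation that conjugating $-\tfrac{1}{2}(G_p+G_p^{\dagger})$ by the unitary $W_p$ is a similarity transformation, which preserves eigenvalues and maps the eigenvector $\ket{\Phi_k^p}$ to the eigenvector $W_p^{\dagger}\ket{\Phi_k^p}$. The only point worth spelling out is that the family $\{W_p^{\dagger}\ket{\Phi_k^p}\}_k$ is precisely $\{\ket{0}_n\ket{0}_1\ket{k}_n\}_k$, which is immediate from \cref{lemma:W_operator_definition_and_action}. I would keep the written proof to the two or three lines of the computation above, citing \cref{lemma:W_operator_definition_and_action} and \cref{lemma:eigenvectors_and_eigenvalues_of_G_plus_G_dagger} for the two substitutions and unitarity of $W_p$ for the final step.
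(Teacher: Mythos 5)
Your proposal is correct and is essentially identical to the paper's own proof, which likewise combines $W_p\ket{0}_n\ket{0}_1\ket{k}_n = \ket{\Phi_k^p}$, the unitarity relation $W_p^{\dagger}\ket{\Phi_k^p} = \ket{0}_n\ket{0}_1\ket{k}_n$, and \cref{lemma:eigenvectors_and_eigenvalues_of_G_plus_G_dagger}. No gaps.
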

\begin{proof}
This simply follows from $W_p \ket{0}_n\ket{0}_1\ket{k}_n = \ket{\Phi_k^p}$, $W_p^\dagger\ket{\Phi_k^p} = \ket{0}_n\ket{0}_1\ket{k}_n$,
and \cref{lemma:eigenvectors_and_eigenvalues_of_G_plus_G_dagger}.
\end{proof}

\begin{lemma}\label{lemma:diagonal_block_encoding_of_real_state_parts}
Let $p=\{0,1\}$. The unitary 
\begin{align}
U^{(p)}:= (XZX \otimes I_{2n + 1})(H\otimes W_p^{\dagger})(\op{0}{0}_1\otimes G_p + \op{1}{1}_1\otimes G_p^{\dagger})(H\otimes W_p).
\end{align}
is a $(1, n + 2, 0)$-block-encoding of $A^{(p)}:=
\text{diag}(\Re{(-i)^p\psi_1}, ..., \Re{(-i)^p\psi_N})$, which is a shorthand for 
$\text{diag}(\Re{\psi_1}, ..., \Re{\psi_N})$ if $p=0$ and $\text{diag}(\Im{\psi_1}, ..., \Im{\psi_N})$ if $p=1$. The circuit $U^{(p)}$ may be implemented with $O(n)$ circuit depth, with a total of $6$ queries to a controlled-$U$ circuit. 
\end{lemma}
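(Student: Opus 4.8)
The plan is to read $U^{(p)}$ as a linear-combination-of-unitaries (LCU) block-encoding that exactly realizes the operator $-\tfrac12 W_p^\dagger(G_p+G_p^\dagger)W_p$ on one ancilla qubit, and then to invoke \cref{lemma:basis_vectors_of_G_plus_G_dagger_diagonalized} to restrict this operator to the claimed diagonal matrix. First I would record the single-qubit identity $XZX=-Z$, so that the leftmost factor $XZX\otimes I_{2n+1}$ merely multiplies the $\ket{0}$-component of the first qubit by $-1$. Then, for an arbitrary $(2n+1)$-qubit state $\ket{\phi}$, I would push $\ket{0}_1\otimes\ket{\phi}$ through the four factors in order: $H\otimes W_p$ produces $\ket{+}_1\otimes W_p\ket{\phi}$; the controlled block $\op{0}{0}_1\otimes G_p+\op{1}{1}_1\otimes G_p^\dagger$ turns this into $\tfrac1{\sqrt2}(\ket{0}_1\otimes G_pW_p\ket{\phi}+\ket{1}_1\otimes G_p^\dagger W_p\ket{\phi})$; and finally $H\otimes W_p^\dagger$ followed by the $-Z$ on the first qubit yields $\tfrac12(-\ket{0}_1\otimes W_p^\dagger(G_p+G_p^\dagger)W_p\ket{\phi}+\ket{1}_1\otimes W_p^\dagger(G_p-G_p^\dagger)W_p\ket{\phi})$. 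Projecting the first qubit onto $\ket{0}$ shows that $U^{(p)}$ is an exact $(1,1,0)$-block-encoding, with the first qubit as its single ancilla, of $M:=-\tfrac12 W_p^\dagger(G_p+G_p^\dagger)W_p$.

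Next I would apply \cref{lemma:basis_vectors_of_G_plus_G_dagger_diagonalized}: $M$ has the $N$ orthonormal eigenvectors $\ket{0}_n\ket{0}_1\ket{k}_n$ with eigenvalues $\Re{(-i)^p\psi_k}$, which unpacks to $\Re{\psi_k}$ when $p=0$ and $\Im{\psi_k}$ when $p=1$, matching \cref{lemma:eigenvectors_and_eigenvalues_of_G_plus_G_dagger}. Hence $(\bra{0}_n\bra{0}_1\otimes I_n)M(\ket{0}_n\ket{0}_1\otimes I_n)=\sum_{k}\Re{(-i)^p\psi_k}\op{k}{k}=A^{(p)}$ exactly. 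Composing the two projections — the single LCU ancilla, and then the $n+1$ qubits $\ket{0}_n\ket{0}_1$ sitting inside the $W_p$-register — and noting that they act on disjoint sets of qubits and therefore commute, gives $(\bra{0}^{\otimes(n+2)}\otimes I_n)U^{(p)}(\ket{0}^{\otimes(n+2)}\otimes I_n)=A^{(p)}$, where the $n+2$ ancillas are exactly the first $n+2$ of the $2n+2$ qubits on which $U^{(p)}$ acts, as required by \cref{def:quantum_block_encoding}. Since the error vanishes and $\lnorm{A^{(p)}}\le1$, this is a $(1,n+2,0)$-block-encoding of $A^{(p)}$.

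For the resource count I would expand $G_p=W_pRW_p^\dagger\Z$ and $G_p^\dagger=\Z W_pR W_p^\dagger$ (using that $R$ and $\Z$ are Hermitian), and recall from \cref{lemma:W_operator_definition_and_action} that each use of $W_p$ or $W_p^\dagger$ costs one query to a controlled-$U$ circuit and $O(n)$ depth, whereas $H$, $R$, $\Z$, and $XZX$ cost $O(n)$ depth and no queries. The rightmost $W_p$ and the $W_p^\dagger$ in $H\otimes W_p^\dagger$ are two queries; implementing the controlled $G_p$ and controlled $G_p^\dagger$ (applied conditioned on $\ket{0}_1$ and $\ket{1}_1$ respectively) costs one $W_p$ and one $W_p^\dagger$ each, for four more; this totals $6$ queries. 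As $U^{(p)}$ is a product of $O(1)$ layers of depth $O(n)$, its depth is $O(n)$.

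The part I expect to require the most care is the bookkeeping in the composition: keeping straight the tensor ordering (LCU ancilla first, then the $W_p$-register split as $\ket{0}_n$, $\ket{0}_1$, $\ket{k}_n$) so that the two successive $\ket{0}$-projections combine into the contiguous ``first $n+2$ qubits'' block demanded by \cref{def:quantum_block_encoding}, and checking that $\Re{(-i)^p\psi_k}$ is indeed the correct unified expression for the two eigenvalue cases of \cref{lemma:eigenvectors_and_eigenvalues_of_G_plus_G_dagger}.
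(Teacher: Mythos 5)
Your proposal is correct and follows essentially the same route as the paper's proof: both reduce $U^{(p)}$ (via the LCU structure on the first ancilla qubit, using $XZX=-Z$) to an exact block-encoding of $-\tfrac12 W_p^\dagger(G_p+G_p^\dagger)W_p$, then invoke \cref{lemma:basis_vectors_of_G_plus_G_dagger_diagonalized} to read off the diagonal matrix elements, and count $6$ controlled-$U$ queries by expanding $G_p=W_pRW_p^\dagger\Z$. The only difference is presentational — you push a generic state through the four factors where the paper manipulates the operator sum directly — and your query bookkeeping matches the paper's ($3$ controlled-$U$ plus $3$ inverse controlled-$U$).
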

\begin{proof}
We aim to show that 
$
\lnorm{A^{(p)} - (\bra{0}^{\otimes n + 2}\otimes I_n)U^{(p)}(\ket{0}^{\otimes n + 2}\otimes I_n)}_2 = 0$.
We begin by simplifying the expression of $U^{(p)}$,
\begin{align}
U^{(p)}
&=(XZX\op{+}{+}_1)\otimes (W_p^{\dagger}G_p W_p) + (XZX\op{-}{-}_1)\otimes (W_p^{\dagger}G_p^{\dagger} W_p).
\end{align}
As a result,
\begin{align}\label{eqn:block_encoding_real_part_mid_proof}
&(\bra{0}^{\otimes n + 2}\otimes I_n)U^{(p)}(\ket{0}^{\otimes n + 2}\otimes I_n)\\
&=-
(\bra{0}_1\bra{0}^{\otimes n + 1}\otimes I_n)\left(
(\op{-}{+}_1)\otimes (W_p^{\dagger}G_p W_p) + \op{+}{-}_1)\otimes (W_p^{\dagger}G_p^{\dagger} W_p)
\right)(\ket{0}_1\ket{0}^{\otimes n + 1}\otimes I_n)\\
&=
-\frac{1}{2}
(\bra{0}^{\otimes n + 1}\otimes I_n)
W_p^{\dagger}(G_p + G_p^{\dagger})W_p
(\ket{0}^{\otimes n + 1}\otimes I_n).
\end{align}
Consider the $\bra{j}_n \cdot \ket{k}_n$ matrix element of this matrix,
$
-\frac{1}{2}
\bra{0}_n\bra{0}_1\bra{j}_n
W_p^{\dagger}(G_p + G_p^{\dagger})W_p
\ket{0}_n\ket{0}_1\ket{k}_n$.
From \cref{lemma:basis_vectors_of_G_plus_G_dagger_diagonalized}, we know that
$W_p^{\dagger}(G_p + G_p^{\dagger})W_p
\ket{0}_n\ket{0}_1\ket{k}_n = -2 \Re{(-i)^p\psi_k} \ket{0}_n\ket{0}_1\ket{k}_n$,
and so we prove the assertion.

Clearly, the circuit cost of $U^{(p)}$ is dominated by $W_p$, $G_p$ and their inverses. As per \cref{lemma:W_operator_definition_and_action}, each $W_p$ and $W_p^{\dagger}$ can be implemented with one query to the controlled-$U$ gate (and its inverse), and $O(n)$ additional depth in terms of single and two qubit gates. The $G_p$ operator can be implemented with a call to $W_p$ and $W_p^{\dagger}$ (adding two additional queries to a controlled-$U$ gate, and $O(n)$ additional circuit depth), a $\Z$ gate (with $O(1)$ depth) and the $R$ gate which has $O(n)$ depth as per \cref{def:R_gate}, for a total of one query to the controlled-$U$ gate (and one query to its inverse) and $O(n)$ depth. Thus, $U^{(p)}$ can be implemented with $O(n)$ circuit depth, 3 queries to a controlled-$U$ gate and $3$ queries to an inverse controlled-$U$ gate.
\end{proof}

\begin{theorem}[Diagonal block encoding of  amplitudes]\label{theorem:diagonal_block_encoding_of_state_amplitudes}
Given an $n$-qubit quantum state specified by a state-preparation-unitary $U$, such that $\ket{\psi}_n = U\ket{0}_n = \sum_{j=0}^{N-1}\psi_j\ket{j}_n$ (with $\psi_j\in\mathbb{C}$), we can prepare a $(1, n + 3, 0)$-block-encoding $U_A$ of the diagonal matrix $A=\text{diag}(\psi_0, ..., \psi_{N-1})$ with $O(n)$ circuit depth and a total of $O(1)$ queries to a controlled-$U$ gate.

\end{theorem}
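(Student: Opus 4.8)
The plan is to exhibit $\text{diag}(\psi_0,\dots,\psi_{N-1})$ directly as the top-left block of an explicit ``copy'' circuit that uses $U$ once. Work on $2n$ qubits, split into an $n$-qubit ancilla register and an $n$-qubit system register, and let $\Sigma$ be the $2n$-qubit unitary defined by $\Sigma\ket{a}_n\ket{b}_n = \ket{a\oplus b}_n\ket{b}_n$; concretely $\Sigma$ is $n$ CNOTs, the $\ell$-th with control the $\ell$-th system qubit and target the $\ell$-th ancilla qubit, so it has $O(n)$ depth (in fact depth $1$), and it is genuinely unitary because $a\mapsto a\oplus b$ is an involution for every fixed $b$. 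Set $U_A := \Sigma\,(U\otimes I_n)$ with $U$ acting on the ancilla register. The first thing I would check is the block: on input $\ket{0}_n\ket{k}_n$ we have $(U\otimes I_n)\ket{0}_n\ket{k}_n = \sum_j\psi_j\ket{j}_n\ket{k}_n$, then $\Sigma$ maps this to $\sum_j\psi_j\ket{j\oplus k}_n\ket{k}_n$, whose ancilla-$\ket{0}_n$ component is $\psi_k\ket{k}_n$ (only the $j=k$ term contributes). Hence $(\bra{0}_n\otimes I_n)U_A(\ket{0}_n\otimes I_n) = \sum_k\psi_k\op{k}{k} = A$ exactly, so $U_A$ is a $(1,n,0)$-block-encoding of $A$ built from a single query to $U$ and $O(n)$ depth; tensoring with the identity on three idle qubits gives the stated $(1,n+3,0)$ form, and ``$O(1)$ queries to a controlled-$U$'' is immediate since even an uncontrolled $U$ suffices. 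The diagonal structure is automatic from this computation (the block is literally $\text{diag}(\psi_j)$, not merely similar to it), which is exactly what makes this encoding convenient for the QSVT steps later in the paper.

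An alternative, more in keeping with the build-up of \cref{section:diagonal_block_encoding_of_state_amplitudes}, is to combine the two real diagonal block-encodings $U^{(0)}$ and $U^{(1)}$ of \cref{lemma:diagonal_block_encoding_of_real_state_parts} — encoding $A^{(0)} = \text{diag}(\Re{\psi_0},\dots,\Re{\psi_{N-1}})$ and $A^{(1)} = \text{diag}(\Im{\psi_0},\dots,\Im{\psi_{N-1}})$ with $O(1)$ controlled-$U$ queries and $O(n)$ depth each — via $A = A^{(0)} + iA^{(1)}$. One adjoins a single control qubit $c$ (raising the ancilla count $n+2\to n+3$), forms the select unitary $\op{0}{0}_c\otimes U^{(0)} + \op{1}{1}_c\otimes U^{(1)}$, and conjugates it by a Hadamard on $c$ on the prepare side and by a Hadamard together with a suitable phase gate ($S$ or $S^\dagger$) on $c$ on the unprepare side, the phase gate injecting the relative factor $i$. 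Using $(\bra{0}^{\otimes n + 2}\otimes I_n)U^{(p)}(\ket{0}^{\otimes n + 2}\otimes I_n) = A^{(p)}$, a short calculation shows the resulting unitary has block $\frac{1}{2}(A^{(0)} + iA^{(1)})$.

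The hard part of this second route — and the reason I would present the first construction as the actual proof — is the normalization: the linear-combination step pays an $\ell_1$ price of $|1|+|i| = 2$, so it only yields a $(2,n+3,0)$-block-encoding of $A$, and one cannot reach normalization exactly $1$ by combining two norm-$1$ block-encodings with coefficients $1$ and $i$ through a single control qubit (amplifying to normalization $1$ would cost extra queries and introduce nonzero error, violating the $\epsilon=0$ claim). The direct construction $U_A = \Sigma(U\otimes I_n)$ avoids this completely: the only point to verify is that $\Sigma$ is a permutation — hence unitary — which is immediate from $\oplus$ being self-inverse, after which the block identity above is a one-line computation.
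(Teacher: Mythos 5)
Your first construction is correct, and it takes a genuinely different --- and considerably more direct --- route than the paper. The key identity checks out: $U_A(\ket{0}_n\otimes\ket{k}_n)=\Sigma\sum_j\psi_j\ket{j}_n\ket{k}_n=\sum_j\psi_j\ket{j\oplus k}_n\ket{k}_n$, and projecting the ancilla register onto $\ket{0}_n$ keeps only the $j=k$ term, so $(\bra{0}_n\otimes I_n)U_A(\ket{0}_n\otimes I_n)=\text{diag}(\psi_0,\dots,\psi_{N-1})$ exactly; padding with idle qubits gives the stated $(1,n+3,0)$-block-encoding with a single query to $U$ and $O(n)$ extra depth. The paper instead assembles $A$ from the two Hermitian block-encodings $U^{(0)},U^{(1)}$ of \cref{lemma:diagonal_block_encoding_of_real_state_parts} via the linear combination $A^{(0)}+iA^{(1)}$ and then removes the resulting subnormalization of $2$ ``with oblivious amplitude amplification.'' Your criticism of that step is fair: since $A$ is in general not proportional to a unitary, exact oblivious amplitude amplification does not apply, and uniform singular-value amplification would introduce nonzero error, so the paper's own route only cleanly delivers a $(2,n+3,0)$-block-encoding with $\epsilon=0$; your construction sidesteps this entirely and achieves $\epsilon=0$ with fewer ancillas and fewer queries. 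What the paper's heavier machinery buys is not this theorem but its neighbours: the operators $-\tfrac12(G_p+G_p^\dagger)$ are Hermitian with eigenvalues $\Re{\psi_k}$ and $\Im{\psi_k}$ separately, which is the form consumed by \cref{lemma:polynomial_transformation_of_the_real_part_block_encoding} and the quantum eigenvalue transformation. Your block equals $\text{diag}(\psi_k)$ itself, which is Hermitian only when the amplitudes are real --- exactly the case assumed throughout \cref{section:importance_weighted_amplitude_transform}, where your encoding plugs in directly (indeed with fewer ancillas) --- whereas for genuinely complex states one would have to work with singular-value rather than eigenvalue transformations, or still extract real and imaginary parts separately. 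For the theorem as stated, your proof is complete and strictly cleaner.
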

\begin{proof}
From \cref{lemma:diagonal_block_encoding_of_real_state_parts} we obtain the circuits $U^{(p)}$, which block-encode the matrices $A^{(p)}$, the diagonal matrices of the real and imaginary part of the $\psi_j$s, respectively. Using the linear combination theorem for block-encoded matrices \cite{gilyen2019quantum}, we can construct a block encoding of  $A^{(0)} + i A^{(1)}$ with the stated number of ancillas, circuit depth and queries. The $\frac{1}{2}$ factor is removed with oblivious amplitude amplification (e.g. \cite{gilyen2019quantum, quek2021fast}), leaving the asymptotic cost of the circuit unchanged.
\end{proof}

\begin{lemma}[Polynomial transformation of real/imaginary part of amplitudes]\label{lemma:polynomial_transformation_of_the_real_part_block_encoding}
Let $\delta >0$. Given a $(1, n+2, 0)$-block-encoding $U^{(p)}$ of $A^{(p)}=\text{diag}(\Re{(-i)^p\psi_0}, ..., \Re{(-i)^p\psi_{N-1}})$ (with $U\ket{0}_n = \sum_j \psi_j\ket{j}_n$), and a degree-$k$ polynomial $P(x) = \sum_{j=0}^k a_j x^j$ (such that $\forall j, a_j\in \mathbb{C}$, and $\forall x, |P(x)| \le 1/4$) we can obtain a $(1, n + 4, \delta)$-block-encoding $\tilde{U}_P^{(p)}$ of $P(A^{(p)})$, which can be constructed with $O(kn)$ circuit depth, and a total of $k$ calls to the controlled $U^{(p)}$ and $(U^{(p)})^\dagger$ circuits. The circuit of $\tilde{U}_P^{(p)}$ can be computed with classical time complexity of $O(\text{poly}(k, \log(1/\delta))$. 
\end{lemma}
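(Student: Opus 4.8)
The plan is to obtain $\tilde U_P^{(p)}$ by invoking the Hermitian quantum eigenvalue transform, \cref{theorem:qet_of_hermitian_matrices}, directly on the input block-encoding $U^{(p)}$ (playing the role of $U_H$). First I would verify the hypotheses of that theorem. The matrix $A^{(p)} = \mathrm{diag}(\Re{(-i)^p\psi_0},\dots,\Re{(-i)^p\psi_{N-1}})$ is real and diagonal, hence Hermitian, so its eigenvalue transform is well-defined in the sense of \cref{def:matrix_eigenvalue_function}. Moreover, since $\sum_j|\psi_j|^2 = 1$ by \cref{def:state_prep_unitary_input}, we have $|\psi_j|\le 1$ for every $j$, so both $\Re{\psi_j}$ and $\Im{\psi_j}$ lie in $[-1,1]$; thus the spectrum of $A^{(p)}$ is contained in $[-1,1]$, and the polynomial bound $|P(x)|\le 1/4$ on $[-1,1]$ applies to all of its eigenvalues. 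Finally, $P$ is permitted complex coefficients, which is exactly the setting of \cref{theorem:qet_of_hermitian_matrices}.

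Then I would apply \cref{theorem:qet_of_hermitian_matrices} with $H\mapsto A^{(p)}$, $\alpha = 1$, $a = n+2$ (the ancilla count of $U^{(p)}$ from \cref{lemma:diagonal_block_encoding_of_real_state_parts}), and $\epsilon = 0$ (the input block-encoding is exact). The theorem produces a circuit $\tilde U_P^{(p)}$ that is a $(1,\, a+2,\, 4k\sqrt{\epsilon/\alpha}+\delta)$-block-encoding of $P(A^{(p)}/\alpha)$. Substituting the chosen values yields a $(1,\, n+4,\, 4k\sqrt{0}+\delta) = (1,\, n+4,\, \delta)$-block-encoding of $P(A^{(p)})$, which is the claimed object.

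For the resource bounds I would read off the corresponding statements of \cref{theorem:qet_of_hermitian_matrices}: the circuit uses $O(ka) = O(k(n+2)) = O(kn)$ single- and two-qubit gates in addition to the oracle calls, makes $k$ calls to controlled-$U^{(p)}$ and $k$ calls to controlled-$(U^{(p)})^\dagger$, and its classical description (the phase factors) is computable in time $O(\mathrm{poly}(k,\log(1/\delta)))$. Since each invocation of $U^{(p)}$ itself has $O(n)$ depth, the overall circuit depth is $O(kn)$, consistent with the statement.

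This argument is essentially bookkeeping; the only points warranting care — the ``main obstacle,'' such as it is — are confirming that $\mathrm{spec}(A^{(p)})\subseteq[-1,1]$ so that the magnitude bound on $P$ covers all eigenvalues, and observing that it is precisely the exactness $\epsilon=0$ of the input encoding that makes the error term $4k\sqrt{\epsilon/\alpha}$ vanish, leaving only the $\delta$ arising from finite-precision phase-factor computation.
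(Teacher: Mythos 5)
Your proposal is correct and follows exactly the same route as the paper: the paper's proof simply applies \cref{theorem:qet_of_hermitian_matrices} to the block-encoding of $A^{(p)}$ from \cref{lemma:diagonal_block_encoding_of_real_state_parts}, noting that $A^{(p)}$ is Hermitian. Your additional verification that the spectrum lies in $[-1,1]$ and that the $4k\sqrt{\epsilon/\alpha}$ term vanishes for an exact input encoding is sound bookkeeping that the paper leaves implicit.
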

\begin{proof}
The resulting block encoding is very similar to the construction in \cite{guo2021nonlinear}, only it uses our diagonal block encoding (as opposed to the block encoding in \cite{guo2021nonlinear} which is not diagonal in the standard basis). The proof follows trivially by applying \cref{theorem:qet_of_hermitian_matrices} to the block encoding of $A^{(p)}$ constructed in \cref{lemma:diagonal_block_encoding_of_real_state_parts}, noting that $A^{(p)}$ is Hermitian.
\end{proof}

\section{Importance-weighted amplitude transform}\label{section:importance_weighted_amplitude_transform}
In this section, we derive the technique responsible for our exponential improvement over prior work.  In essence, rather than applying the desired polynomial to a block-encoding of the state amplitudes and then applying that to the uniform superposition (as is done in~\cite{guo2021nonlinear}), we instead consider a different polynomial transformation and then apply that to the state $U\ket{0}$. As a result, we first consider a technical lemma about the Lipschitz constant of such functions. 

\begin{lemma}\label{lemma:upper_bound_on_max_of_reduced_polynomial}
Let $f(x)$ be a real-valued univariate function such that $f(0) = 0$ with Lipschitz constant $L$ in the interval $x\in[-1, 1]$, (i.e. $L := \max_{x\in[-1,1]}|\frac{d}{dx}P(x)|$). Let $h(x)$ be a function such that $f(x) = xh(x)$. Then, 
\begin{align}
    \max_{x\in [-1, 1]} |h(x)| \le L.
\end{align}
\end{lemma}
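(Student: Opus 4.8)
The plan is to relate $h(x)$ to a difference quotient of $f$ and then invoke the Lipschitz bound directly. Since $f(0) = 0$ and $f(x) = x h(x)$, for any $x \neq 0$ we have $h(x) = f(x)/x = \bigl(f(x) - f(0)\bigr)/(x - 0)$, which is exactly the difference quotient of $f$ between the points $0$ and $x$, both of which lie in $[-1,1]$. By the definition of the Lipschitz constant (\cref{def:lipschitz_constant}), this quantity is bounded in absolute value by $L$ for every $x \neq 0$ in $[-1,1]$. Hence $|h(x)| \le L$ for all $x \in [-1,1] \setminus \{0\}$.

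It remains to handle the point $x = 0$ separately, since there $h(0)$ is not pinned down by the relation $f(x) = x h(x)$ alone (any value works). The cleanest route is to observe that for the statement to be meaningful one takes $h$ to be the continuous extension, i.e. $h(0) = \lim_{x \to 0} f(x)/x = f'(0)$, which exists and satisfies $|f'(0)| \le L$ by \cref{def:lipschitz_constant}; alternatively, if $h(0)$ is left unspecified, the supremum over the punctured interval already gives the bound and one simply notes $|h(0)| = \lim_{x\to 0}|h(x)| \le L$ by continuity. Either way, combining with the previous paragraph yields $\max_{x \in [-1,1]} |h(x)| \le L$.

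There is essentially no hard step here; the only subtlety worth flagging is the behaviour at $x = 0$, and as noted that is resolved by taking $h$ to be the (unique continuous) function with $x h(x) = f(x)$, whose value at $0$ is $f'(0)$. One should also note in passing that the hypothesis implicitly requires $f$ to be differentiable at $0$ (guaranteed here since Lipschitz continuity gives $|f'| \le L$ wherever the derivative exists, and the difference quotient at $0$ converges), so that $h$ is well defined as a function on all of $[-1,1]$.
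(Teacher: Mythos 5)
Your proof is correct, and it takes a genuinely different and more streamlined route than the paper's. You observe that for $x \neq 0$, $h(x) = (f(x)-f(0))/(x-0)$ is literally a difference quotient of $f$ over two points of $[-1,1]$, so the bound $|h(x)| \le L$ is an immediate consequence of \cref{def:lipschitz_constant}; the point $x=0$ is then handled by continuity (or by defining $h(0)$ as the limit $f'(0)$). The paper instead argues by cases on where $\max_{x}|h(x)|$ is attained: at an interior maximizer it differentiates $f = xh$ to get $h(x_*) = f'(x_*) - x_* h'(x_*)$ with $h'(x_*)=0$, and at the endpoints $|x_*|=1$ it falls back on the mean value theorem applied to $f$ on $[0,x_*]$ --- which is essentially your difference-quotient argument, but deployed only at the boundary. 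Your version is preferable on two counts: it avoids the paper's implicit assumption that $h$ is differentiable (and the attendant fuss about $|h|$ failing to be differentiable at zero crossings), and it needs only the Lipschitz inequality itself rather than the mean value theorem, so it applies verbatim to any Lipschitz $f$ with $f(0)=0$. In the paper's actual use case $h$ is a polynomial, so both arguments go through, but yours is the cleaner and more general one. Your closing remark about the well-definedness of $h(0)$ is a fair point that the paper glosses over entirely.
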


\begin{proof}
First, taking the derivative with respect to $x$ of $f(x) = xh(x)$, we get $h(x) = f'(x) - xh'(x)$, i.e., $\max_x |h(x)| = \max_x |f'(x) - xh'(x)|$. 
Let $x_* := \argmax_{x\in[-1, 1]}|h(x)|$. If $|x_*| < 1$, the maximum of $|h(x)|$ must be a local optima, and so $h'(x_*) = 0$. Consequently, we get the bound $|h(x_*)| = L$. Note that while $h(x)$ is both continuous and differentiable, $|h(x)|$ is only continuous but not necessarily differentiable. However, $|h(x)|$ is only non-differentiable at points where the function is crossing the $x$-axis, which by definition cannot be at a local optima in the interval $x\in(-1, 1)$. 
When $|x_*| = 1$, $|h(x_*)| = |f(x_*)/x_*| = |f(x_*)| \le \gamma$.
We will now show that $\gamma \le L$, thereby proving the upper-bound of $L$ in the entire interval $x\in[-1, 1]$.

Redefine $x_* := \argmax_{x\in[-1,1]}|f(x)|$. 
The mean value theorem says that for some continuous differentiable function $f(x)$ on the interval $[a, b]$,  there exists some $c$ such that $f'(c) = (f(b) - f(a))/(b-a)$. We break our proof into two cases, when $x_* \ge 0$, and when $x_* < 0$.
When $x_* \ge 0$, there exists some interval $[0, x_*]$ such that for some $c$, $P'(c) = P(x_*)/x_*$ (where we used the fact that $P(0) = 0$). Since $P(x) = xh(x)$, $P(x)/x$ will never be singular. Consequently, we have that
\begin{align}
\gamma = |P(x_*)| \le |P(x_*)/x_*| = |P'(c)| \le \max_{x\in[-1, 1]}|P'(x)| = L.
\end{align}
Similarly, when $P(x_*) < 0$, there exists some interval $[x_*, 0]$ such that for some $c$, $P'(c) = P(x_*)/x_*$, and the preceding argument once again applies, proving that $\gamma \le L$. 
As a result, since we have shown that $\gamma \le L$, we have the overall bound $\max_{x\in[-1, 1]}|h(x)| \le L$.
\end{proof}

The next lemma provides simple way to obtain a Lipschitz constant from the knowledge of the coefficients of a polynomial. In addition, a bound on the maximum of $h(x) = P(x)/x$ is obtained. 
\begin{lemma}\label{lemma:upper_bound_on_max_of_reduced_polynomial2}
Let $c \in \mathbb R^{k+1}$ with $c_0=0$ and let $P(x) = \sum_{j=0}^k c_j x^j$ be a degree-$k$ polynomial function.
Let $h$ be the degree-$(k-1)$ polynomial function such that $P(x) = xh(x)$.
Then,
\begin{itemize}
\item For $x\in[-1, 1]$, a valid Lipschitz constant for $P$ is  $L:= k \Vert c\Vert_1$.
\item $
    \max_{x\in [-1, 1]} |h(x)| \le L/k  $.
\end{itemize}
\end{lemma}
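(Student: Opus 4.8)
### Proof proposal

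The plan is to bound the derivative of $P$ termwise, and then to invoke Lemma~\ref{lemma:upper_bound_on_max_of_reduced_polynomial} to transfer the bound to $h$.

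First I would compute $P'(x) = \sum_{j=1}^k j c_j x^{j-1}$ and estimate, for any $x \in [-1,1]$,
\begin{align}
    |P'(x)| \le \sum_{j=1}^k j |c_j| |x|^{j-1} \le \sum_{j=1}^k j |c_j| \le k \sum_{j=1}^k |c_j| = k \Vert c \Vert_1,
\end{align}
where in the last equality I use $c_0 = 0$, so that $\sum_{j=1}^k |c_j| = \sum_{j=0}^k |c_j| = \Vert c \Vert_1$. Since the Lipschitz constant on $[-1,1]$ is $\max_{x\in[-1,1]}|P'(x)|$ by Definition~\ref{def:lipschitz_constant}, this shows that $L := k\Vert c\Vert_1$ is a valid (not necessarily tight) Lipschitz constant for $P$ on $[-1,1]$, giving the first bullet.

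For the second bullet, I would simply note that $P$ is a polynomial with $P(0) = c_0 = 0$, hence $P(x) = x h(x)$ with $h$ the degree-$(k-1)$ polynomial $h(x) = \sum_{j=1}^k c_j x^{j-1}$, and $P$ is certainly differentiable with Lipschitz constant $L$ on $[-1,1]$. Applying Lemma~\ref{lemma:upper_bound_on_max_of_reduced_polynomial} to $f = P$ (the hypotheses $f(0)=0$ and Lipschitz-continuity on $[-1,1]$ are met) yields $\max_{x\in[-1,1]}|h(x)| \le L = k\Vert c\Vert_1$, i.e.\ $\max_{x\in[-1,1]}|h(x)| \le L/k$ once we substitute $L = k\Vert c\Vert_1$; alternatively, since $h(x) = \sum_{j=1}^k c_j x^{j-1}$, one gets the bound directly via $|h(x)| \le \sum_{j=1}^k |c_j| = \Vert c\Vert_1 = L/k$.

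There is no real obstacle here; the only thing to be careful about is the bookkeeping with the index $c_0 = 0$ so that $\Vert c\Vert_1 = \sum_{j=1}^k |c_j|$, and making sure the statement $\max|h| \le L/k$ is read with $L$ being the \emph{specific} constant $k\Vert c\Vert_1$ defined in the first bullet, rather than an arbitrary Lipschitz constant. Indeed the bound $\max|h|\le L/k$ is exactly the content of Lemma~\ref{lemma:upper_bound_on_max_of_reduced_polynomial} once we plug in this $L$, so the second bullet is essentially a corollary of the first bullet together with the previous lemma.
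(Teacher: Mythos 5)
Your termwise bounds are exactly the paper's proof: $|P'(x)| \le \sum_{j=1}^k j|c_j| \le k\Vert c\Vert_1$ for the first bullet, and $|h(x)| \le \sum_{j=1}^k |c_j| = \Vert c\Vert_1 = L/k$ for the second. The ``alternatively'' clause at the end of your argument is the correct and complete proof of the second bullet, and it is what the paper does.

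However, your primary stated route for the second bullet --- invoking Lemma~\ref{lemma:upper_bound_on_max_of_reduced_polynomial} --- does not deliver the claimed bound, and the sentence ``yields $\max|h| \le L = k\Vert c\Vert_1$, i.e.\ $\max|h| \le L/k$ once we substitute $L = k\Vert c\Vert_1$'' is a non sequitur: substituting $L = k\Vert c\Vert_1$ into $\max|h| \le L$ gives $\max|h| \le k\Vert c\Vert_1$, which is a factor of $k$ \emph{weaker} than the target $L/k = \Vert c\Vert_1$. Lemma~\ref{lemma:upper_bound_on_max_of_reduced_polynomial} only ever bounds $\max|h|$ by the Lipschitz constant itself, so the second bullet is \emph{not} a corollary of the first bullet plus that lemma; the extra factor of $1/k$ comes precisely from the fact that $h$ drops the factors of $j$ that appear in $P'$, which is why the direct estimate $|h(x)| \le \Vert c\Vert_1$ is the argument to keep. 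Strike the Lemma~\ref{lemma:upper_bound_on_max_of_reduced_polynomial} detour and the proof is correct and identical to the paper's.
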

\begin{proof}
Any constant $L$ for which $ \max_{x\in[-1,1]}|\frac{d}{dx}P(x)| \leq L$  is a valid Lipschitz constant on the interval $x\in[-1,1]$.
Here, $\frac{d}{dx}P(x) = \sum_{j=1}^k j c_j x^{j-1} \leq k \Vert c\Vert_1 =:L$ as $x\in[-1, 1]$. By definition, $h(x) = \sum_{j=1}^k c_j x^{j-1}$, and $\vert h(x)\vert \leq \Vert c\Vert_1\leq L/k$.
\end{proof}

\begin{theorem}[Polynomial transformation of real amplitudes via importance sampling]\label{theorem:polynomial_transformation_of_real_state_amplitudes_via_importance_sampling}
Let $\epsilon \in(0,1]$. Assume we are given a state preparation unitary $U\ket{0}_n = \ket{\psi}_n = \sum_{j=1}^{2^n} \psi_j\ket{j}_n$, where $\forall j, \psi_j \in \mathbb{R}$ with $\sum_{j=1}^{2^n} \vert\psi_j\vert^2$.
We are given also a degree $k$ polynomial function $P(x) = \sum_{j=1}^k a_j x^j$ ($\forall j\in[k+1], a_j \in \mathbb{C}$) with Lipschitz constant $L$ (in the interval $x\in[-1, 1]$). 
Define $\eta := \max_{x\in[-1, 1]}|P(x)/x|$, and note that \cref{lemma:upper_bound_on_max_of_reduced_polynomial} implies that $\eta \le L$.
Define $\mathcal{N}^2 := \sum_{j=1}^{2^n}|P(\psi_j)|^2$.  
Then, we can construct an $\ell_2$-normalized quantum state $\ket{\phi}$ such that $\lnorm{\ket{\phi} - \frac{1}{\mathcal{N}}\sum_{j=1}^{2^n} P(\psi_j)\ket{j}}_2 \le \epsilon$ with arbitrarily high probability of success with a total of $O(k\eta/\mathcal{N})$ queries to a controlled-$U$ circuit and a controlled-$U^{\dagger}$ circuit, with $O(kn\eta/\mathcal{N})$ additional circuit depth, and with a total of $O(poly(k, \log(\eta^2 N/\epsilon^2\mathcal{N}^2)))$ classical computation. This requires $O(n)$ ancillary qubits.
\end{theorem}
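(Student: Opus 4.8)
The plan is to combine the diagonal block-encoding of \cref{theorem:diagonal_block_encoding_of_state_amplitudes} with the quantum eigenvalue transform and a single round of (oblivious or fixed-point) amplitude amplification, exploiting the factorization $P(x) = x h(x)$ to route the state $\ket{\psi}$ — rather than a uniform superposition — through the block-encoding. Concretely, write $P(x) = x h(x)$ with $h$ of degree $k-1$, and let $\eta := \max_{x\in[-1,1]}|P(x)/x| = \max_x |h(x)|$; by \cref{lemma:upper_bound_on_max_of_reduced_polynomial} we have $\eta \le L$. First I would rescale: the polynomial $\tilde h(x) := h(x)/(4\eta)$ satisfies $|\tilde h(x)| \le 1/4$ on $[-1,1]$, so \cref{lemma:polynomial_transformation_of_the_real_part_block_encoding} (applied with $p=0$, since the amplitudes are real and the imaginary part vanishes) — or more directly \cref{theorem:qet_of_hermitian_matrices} applied to the diagonal block-encoding $U_A$ of $A = \mathrm{diag}(\psi_0,\dots,\psi_{N-1})$ from \cref{theorem:diagonal_block_encoding_of_state_amplitudes} — yields a $(1, n+O(1), \delta)$-block-encoding $V$ of $\tilde h(A) = \mathrm{diag}(h(\psi_j)/(4\eta))_j$, using $O(k)$ queries to controlled-$U$ and $O(kn)$ depth, with classical preprocessing $O(\mathrm{poly}(k,\log(1/\delta)))$.

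Next I would apply this block-encoding to the input state. Preparing $U\ket{0}_n = \ket{\psi}_n$ on the system register and $\ket{0}$ on the $n+O(1)$ ancillas, then applying $V$, produces (up to the $\delta$ block-encoding error) a state whose ancilla-$\ket{0}$ branch is $\frac{1}{4\eta}\sum_j h(\psi_j)\psi_j \ket{j}_n = \frac{1}{4\eta}\sum_j P(\psi_j)\ket{j}_n$. The $\ell_2$-norm of this branch is exactly $\mathcal N/(4\eta)$, so the success probability of measuring $\ket{0}$ on the ancillas is $\Theta(\mathcal N^2/\eta^2)$. Amplitude amplification (fixed-point, to avoid needing to know $\mathcal N$ precisely, e.g. via the construction in \cite{gilyen2019quantum}) then boosts this to $\Omega(1)$ using $O(\eta/\mathcal N)$ rounds, each round costing one query to $V$ and one to $V^\dagger$ plus the state preparation; this gives the claimed $O(k\eta/\mathcal N)$ controlled-$U$ queries, $O(kn\eta/\mathcal N)$ depth, and $O(n)$ ancillas. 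Upon success we hold a normalized state $\ket{\phi}$ proportional to $\sum_j P(\psi_j)\ket{j}_n$.

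The remaining — and genuinely delicate — step is the $\ell_2$-norm error analysis: I must show the output state is $\epsilon$-close to $\frac{1}{\mathcal N}\sum_j P(\psi_j)\ket{j}_n$ and determine how small $\delta$ (equivalently, the QSVT polynomial-approximation error) must be. Two error sources enter. The first is the block-encoding error $\delta$ of $V$: it perturbs the unnormalized target branch by a vector of $\ell_2$-norm $O(\delta)$ (each amplitude shifted by $O(\delta)$, summed over $N$ entries gives $O(\delta\sqrt N)$ before rescaling, or $O(4\eta\delta\sqrt N)$ after), and it perturbs the norm $\mathcal N$ by at most the same amount; applying \cref{lemma:normalized_state_deviation_due_to_normalization} with $\epsilon_0,\epsilon_1 = O(\eta\delta\sqrt N/\mathcal N)$ shows the normalized error scales like $O(\eta\delta\sqrt N/\mathcal N^2)$. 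Setting this $\le \epsilon/2$ forces $\delta = O(\epsilon \mathcal N^2/(\eta\sqrt N))$, whence $\log(1/\delta) = O(\log(\eta^2 N/(\epsilon^2\mathcal N^2)))$, matching the stated classical cost. The second source is the residual error from imperfect/early-terminated amplitude amplification, controlled to $\le \epsilon/2$ by the standard fixed-point guarantee at the cost of an $O(\log(1/\epsilon))$ factor absorbed into the $\widetilde O$. The main obstacle is getting the $\sqrt N$ bookkeeping in the block-encoding-error term exactly right — an operator-norm error $\delta$ on a diagonal matrix translates to at most $\delta$ per amplitude, but one must verify that the accumulated $\ell_2$ deviation of the post-selected branch is $O(\delta\sqrt N)$ and not worse, and track how it interacts with the $1/\mathcal N$ normalization; this is precisely where \cref{lemma:normalized_state_deviation_due_to_normalization} does the work and why the final $\log$ has the shape it does.
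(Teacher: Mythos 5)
Your proposal follows essentially the same route as the paper's proof: factor $P(x)=xh(x)$, rescale by $4\eta$, block-encode $h(A)/4\eta$ via QSVT applied to the diagonal block-encoding, apply it to $\ket{\psi}$ rather than to a uniform superposition, amplitude-amplify with $O(\eta/\mathcal{N})$ rounds, and control the normalization error via \cref{lemma:normalized_state_deviation_due_to_normalization}. The one piece of bookkeeping you flag as delicate resolves slightly differently in the paper: the unnormalized branch error is bounded directly by the operator norm, $\lnorm{(g(A)-\tilde{g}(A))\ket{\psi}}_2\le\delta$ with no $\sqrt{N}$ factor, and the $\sqrt{N\delta}$ dependence instead enters through the norm difference $|\mathcal{N}_1-\mathcal{N}_2|\le\sqrt{2N\delta}$ --- either accounting leads to the same stated query and classical complexities.
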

\begin{proof}
Let $A:=\text{diag}(\psi_0, ..., \psi_{N-1})$.
The degree $k-1$ polynomial function $h(x)$ such that $P(x) = h(x) x$ is given by $h(x) := \sum_{j=1}^{k}a_{j}x^{j-1}$. Let $\eta := \max_{x\in[-1,1]} |h(x)|$. Then, we define another function, $g(x) := h(x)/4\eta$.
With the block-encoding of $A$, we invoke \cref{lemma:polynomial_transformation_of_the_real_part_block_encoding} to construct a $(1, n+4, \delta_0)$-block-encoding of $g(A)$, called $\tilde U_g$, with $O(kn)$ circuit depth, using a total of $k-1$ calls to a controlled $U_{\Re{A}}$ and $U_{\Re{A}}^{\dagger}$ circuit, and with $O(poly(k, \log(1/\delta_0)))$ classical time complexity. 
There is a function $\tilde g(x)$ for which $\tilde U_g$ is a $(1, n+4, 0)$-block-encoding of $\tilde{g}(A)$  and $\lnorm{g(A) - \tilde{g}(A)}_2 \le \delta_0$. 
When we apply $\tilde U_g (\ket{0}_{n+4} \otimes U\ket{0}_n)$ and measure the $\ket{0}_{n+4}$ ancillary state, we obtain a state proportional to $\tilde{g}(A)\ket{\psi} = \sum_{j=1}^{2^n}\phi_j\ket{j}$, where we define the amplitudes of the unnormalized resulting state as $\{\phi_j\}_j$. 
We then define the normalization factor of this state as $\mathcal{N}_2 := \lnorm{\tilde{g}(A)\ket{\psi}}_2$.

Observing that $g(A)\ket{\psi} = \sum_j g(\psi_j)\psi_j \ket j=\frac{1}{4\eta}\sum_j P(\psi_j)\ket{j}$, we additionally define the unnormalized exact state as $\ket{P(\psi)} := g(A)\ket{\psi}$, with the corresponding normalization factor of $\mathcal{N}_1 := \lnorm{\ket{P(\psi)}}_2$. Thus, the exact normalized state we wish to prepare is $\frac{1}{\mathcal{N}_1}\ket{P(\psi)}$, and the normalized state we are able to prepare is $\frac{1}{\mathcal{N}_2}\tilde{g}(A)\ket{\psi}$. We will now bound the $\ell_2$-norm error of the produced state, and determine the probability of success of the procedure. 

First, we wish to determine the algorithmic complexity required to obtain an overall error $\epsilon$, defined as $\lnorm{\frac{1}{\mathcal{N}_1}\ket{P(\psi)} - \frac{1}{\mathcal{N}_2}\tilde{g}(A)\ket{\psi}}_2 \le \epsilon$. To do so, we must bound two quantities, $|\mathcal{N}_1 - \mathcal{N}_2|$ and $\lnorm{\ket{P(\psi)} - \tilde{g}(A)\ket{\psi}}_2$, so that \cref{lemma:normalized_state_deviation_due_to_normalization} may be applied. Using the definitions, we obtain
\begin{align} \label{eqn:midstep_transform_of_real_state_amplitudes_unnormalized_bound}
    \lnorm{\ket{P(\psi)} - \tilde{g}(A)\ket{\psi}}_2
    =
    \lnorm{g(A)\ket{\psi} - \tilde{g}(A)\ket{\psi}}_2 \le 
    \lnorm{g(A) - \tilde{g}(A)}_2 \leq \delta_0.
\end{align}
Now, we bound $|\mathcal{N}_1 - \mathcal{N}_2|$.
For any non-negative real numbers $a,b$, we have $|\sqrt{a} - \sqrt{b}| \le \sqrt{|a - b|}$. Consequently, $|\mathcal{N}_1 - \mathcal{N}_2| \le \sqrt{|\mathcal{N}_1^2 - \mathcal{N}_2^2|}$. 
By definition, $g(x)x = P(x)/4\eta =: f(x)$, and so
\begin{align}
|\mathcal{N}_1 - \mathcal{N}_2|
&\le 
\sqrt{
\left|
    \lnorm{\ket{P(\psi)}}_2^2 - \lnorm{\tilde{g}(A)\ket{\psi}}_2^2
\right|}
=
\sqrt{
\left|
    \sum_{j}(|f(\psi_j)|^2 - |\phi_j|^2)
\right|}
\\
&\le 
\sqrt{\max_j \left| |f(\psi_j)| - |\phi_j|\right|\sum_j(|f(\psi_j)| + |\phi_j|)}.
\end{align}
Since $||a| - |b|| \le |a - b|$, we have that $\max_j\left| |f(\psi_j)| - |\phi_j| \right| \le \max_j\left| f(\psi_j) - \phi_j \right|$. Moreover, $\sum_j |f(\psi_j) - \phi_j|^2=\lnorm{\ket{P(\psi)} - \tilde{g}(A)\ket{\psi}}_2^2 \le \delta_0^2$ (as per \cref{eqn:midstep_transform_of_real_state_amplitudes_unnormalized_bound})  which implies that $\max_j |f(\psi_j) - \phi_j| \le \delta_0$. 
Of course, $\lnorm{\tilde{g}(A)\ket{\psi}}_2^2 \le 1$ and so $\sum_j |\phi_j|^2 \le 1$ which implies that $\forall j, |\phi_j| \le 1$. 
Similarly, by definition, $f(x) = \frac{P(x)}{4\eta}$, and since $|P(x)| \le \eta$, $\forall j, f(\psi_j) \le 1/4$, and so $\sum_{j=1}^N |f(\psi_j)+\phi_j| \le 2N$. As a result,
\begin{align}\label{eqn:midstep_transform_of_real_state_amplitudes_norm_dif_bound}
|\mathcal{N}_1 - \mathcal{N}_2|
&\le 
\sqrt{2N\delta_0}.
\end{align}
Combining 
\cref{eqn:midstep_transform_of_real_state_amplitudes_unnormalized_bound}, \cref{eqn:midstep_transform_of_real_state_amplitudes_norm_dif_bound} and \cref{lemma:normalized_state_deviation_due_to_normalization}, we then obtain the following bound on the total $\ell_2$-norm error,
\begin{align}
    \lnorm{\frac{1}{\mathcal{N}_1}\ket{P(\psi)} - \frac{1}{\mathcal{N}_2}\tilde{g}(A)\ket{\psi}}_2
    \le 
    \frac{\sqrt{2N\delta_0} + \delta_0}{\mathcal{N}_1}
    \le 
    \frac{3\sqrt{N\delta_0}}{\mathcal{N}_1}.
\end{align}
We now analyze the probability of success of the procedure, thus determining the complexity in terms of the number of amplitude amplification steps required to succeed with arbitrarily high probability. Thus, we must lower-bound $\mathcal{N}_2^2 \equiv\lnorm{\tilde{g}(A)\ket{\psi}}_2^2$. 
We just showed that $|\mathcal{N}_1^2 - \mathcal{N}_2^2| \le 2N\delta_0$, hence $\mathcal{N}_2^2 \geq \mathcal{N}_1^2 - 2N\delta_0$.
Thus, requiring $\delta_0$ to be sufficiently small so as to ensure that $\mathcal{N}_1^2 > 2N\delta_0$, the complexity to boost to a constant probability of success is given by $O(\frac{1}{\sqrt{\mathcal{N}_1^2 - 2N\delta_0}})$. 

To obtain an overall error of at most $\epsilon$,
we set $\delta_0 = \epsilon^2\mathcal{N}_1^2/9N$. This indeed satisfies our requirement of $\mathcal{N}_1^2 > 2N\delta_0$ (for $\epsilon \le 1$).
As a result, the number of AA steps required can be simplified to  
$O\left(1/\mathcal{N}_1 \sqrt{1-\frac{2}{9}\epsilon^2}\right) \subseteq O\left(1/\mathcal{N}_1 \right)$, using $\epsilon \le 1$.
Since $\mathcal{N}_1=\frac{1}{4\eta}\mathcal{N}$, this bound becomes $O(\eta/\mathcal{N})$. Similarly, $\delta_0 = \epsilon^2\mathcal{N}_1^2/9N = \epsilon^2\mathcal{N}^2/144 \eta^2 N$.
As a result, with constant probability of success, our procedure produces a normalized quantum state $\epsilon$ close (in $\ell_2$-norm) to the state $\frac{1}{\mathcal{N}}\sum_j P(\psi_j)\ket{j}$ using a total of $O(kL/\mathcal{N})$ queries to a controlled $U$ circuit and controlled $U^{\dagger}$ circuit, with $O(knL /\mathcal{N})$ circuit depth, and with a total of $O(poly(k, \log(L^2 N/\epsilon^2\mathcal{N}^2)))$ classical computation.

Since $\mathcal{N}_1=\frac{1}{4\eta}\mathcal{N}$, and from \cref{lemma:upper_bound_on_max_of_reduced_polynomial} we have that $\eta \le L$, this bound becomes $O(L/\mathcal{N})$. Similarly, $\delta_0 = \epsilon^2\mathcal{N}_1^2/9N = \epsilon^2\mathcal{N}^2/144 \eta^2 N$, and since $\eta \le L$, we can set the more conservative bound $\delta_0 = \epsilon^2\mathcal{N}^2/144 L^2 N$.
As a result, with constant probability of success, our procedure produces a normalized quantum state $\epsilon$ close (in $\ell_2$-norm) to the state $\frac{1}{\mathcal{N}}\sum_j P(\psi_j)\ket{j}$ using a total of $O(kL/\mathcal{N})$ queries to a controlled $U$ circuit and controlled $U^{\dagger}$ circuit, with $O(knL /\mathcal{N})$ circuit depth, and with a total of $O(poly(k, \log(L^2 N/\epsilon^2\mathcal{N}^2)))$ classical computation. 
\end{proof}
The error bound assumes that the only error-source is the classical computation of the rotation angles in QSVT -- in practice there would likely be an additional logarithmic error from the implementation of all gates on real hardware, and so there would be an additional multiplicative factor in the complexity of approximately $\log(1/\mathcal{N}_1)$ in order to sufficiently suppress this error source and ensure numerical stability.

Note that in many cases where $k$ is small, it can actually be significantly advantageous to just implement the desired polynomial via products and linear combinations of block encodings of the $A$ matrix, as opposed to using QSVT, as the polynomial can be implemented \textit{exactly} in this manner (and for arbitrary complex initial states), albeit at the cost of using more ancillary qubits.

In \cref{appComparison}, we rederive the analogous result from \cite{guo2021nonlinear} with our diagonal block-encoding, and extend it with our $\ell_2$ norm guarantee.

\section{Non-Linear Transformations of State Amplitudes via Polynomial Approximations}\label{section:non_linear_transformations_via_polynomial_approximations}

In this section, we extend our previous result (\cref{theorem:polynomial_transformation_of_real_state_amplitudes_via_importance_sampling}) to arbitrary functions which can be approximated by polynomials. To this end, we begin by proving some simple results about functions and their polynomial approximations, and conclude with the main result of the section, \cref{theorem:generalized_end_to_end_complexity_for_f_0_is_0_for_efficiently_analytic_functions}.

\begin{lemma}\label{lemma:upper_bound_on_polynomial_norm_from_lipschitz_constant}
We are given a Lipschitz continuous real function $P(x)$ with Lipschitz constant $L$ on the interval $x\in [-1, 1]$, and the guarantee that $P(0)=0$. We are additionally given an arbitrary $\ell_2$ normalized quantum state $\ket{\psi} = \sum_{j=1}^N \psi_j \ket{\psi}$. Define the normalization factor $\mathcal{N}^2 := \sum_{j=1}^N |P(\psi_j)|^2$. Then, $\mathcal{N} \le L$.
\end{lemma}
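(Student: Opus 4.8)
The goal is to show that the normalization factor $\mathcal{N}$ satisfying $\mathcal{N}^2 = \sum_{j=1}^N |P(\psi_j)|^2$ is bounded by the Lipschitz constant $L$, given $P(0)=0$ and $\ket{\psi}$ is $\ell_2$-normalized. The key observation is that since $P(0)=0$ and $P$ is $L$-Lipschitz on $[-1,1]$, we have the pointwise bound $|P(x)| = |P(x) - P(0)| \le L|x - 0| = L|x|$ for every $x \in [-1,1]$. Since each amplitude $\psi_j$ satisfies $|\psi_j| \le 1$ (as $\ket{\psi}$ is normalized), this pointwise bound applies at each $x = \psi_j$.

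From here the plan is a direct calculation:
\begin{align}
\mathcal{N}^2 = \sum_{j=1}^N |P(\psi_j)|^2 \le \sum_{j=1}^N L^2 |\psi_j|^2 = L^2 \sum_{j=1}^N |\psi_j|^2 = L^2,
\end{align}
where the last equality uses the normalization $\sum_j |\psi_j|^2 = 1$. Taking square roots gives $\mathcal{N} \le L$, as desired.

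I expect no real obstacle here — this is essentially a one-line argument once the Lipschitz-to-pointwise bound $|P(x)| \le L|x|$ is invoked. The only subtlety worth a sentence is confirming that all $\psi_j$ lie in $[-1,1]$ so that the Lipschitz property on that interval is applicable; this follows immediately from $|\psi_j|^2 \le \sum_k |\psi_k|^2 = 1$. (If $P$ were complex-valued one would want $|P(x) - P(0)| \le L|x|$ in modulus, but the statement restricts to real $P$, so Definition~\ref{def:lipschitz_constant} applies directly.) The structure mirrors the role played by \cref{lemma:upper_bound_on_max_of_reduced_polynomial} earlier: both extract a clean bound from the Lipschitz constant, and this one will presumably be used to control the success probability / normalization in the generalization to non-polynomial functions in \cref{theorem:generalized_end_to_end_complexity_for_f_0_is_0_for_efficiently_analytic_functions}.
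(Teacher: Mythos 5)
Your proof is correct and follows exactly the paper's argument: both derive the pointwise bound $|P(x)| = |P(x)-P(0)| \le L|x|$ from the Lipschitz property and $P(0)=0$, then sum against the normalization $\sum_j |\psi_j|^2 = 1$ to get $\mathcal{N}^2 \le L^2$. No differences worth noting.
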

\begin{proof}
First, by \cref{def:lipschitz_constant}, and $P(0) = 0$, for all $x\in [-1, 0)\cup (0, 1]$, $|P(x)| = |P(x) - P(0)| \le L |x| \le L$. 
Then, for any $j$, $|P(\psi_j)|^2 \le |x|^2 L^2$. Thus, since $\sum_{j=1}^N |\psi_j|^2 = 1$,
\begin{align}
    \mathcal{N}^2 = \sum_{j=1}^N |P(\psi_j)|^2
    \le \sum_{j=1}^N |\psi_j|^2 L^2 = L^2.
\end{align}
The result $\mathcal{N} \le L$ immediately follows, for the case where $x \neq 0$. Since the input state is $\ell_2$-normalized, there must be non-zero amplitudes, and these non-zero amplitudes must have a sum of squares equal to one. Thus, this result holds in full generality, and we have shown that $\mathcal{N} \le L$. 
\end{proof}

\begin{definition}
[Efficient uniform approximation]\label{def:efficiently_analytic_function}
Let  $f(x) : [-1,1]\to \mathbb R$ be a real-valued function in the domain $x\in[-1, 1]$.
With $k \in \mathbb Z_+$ and $\epsilon>0$, we say that the function $f(x)$ is $(\epsilon, k)$-uniformly approximated by a polynomial if there exists a degree-$k$ polynomial $P_k : [-1, 1] \to \mathbb{C}$  such that
\begin{align}
    \max_{x\in[-1, 1]}|f(x) - P_k(x)| \le \epsilon.
\end{align}
Equivalently, we say that $f(x)$ is an ``$\epsilon$-approximable function with degree $k$'', where $k$ can be any function of $\epsilon$.

\end{definition}
When it is clear from context, we will drop the subscript indicating the degree of the polynomial, i.e., sometimes we will write $P_k(x)$ as $P(x)$, and $e_k(x)$ as $e(x)$. We use the notation $e_k(x) := f(x) - P_k(x)$ for the error term. Note that the subscript on the error term does not correspond to the degree of the error term, rather it makes the dependence on the degree of approximating polynomial explicit. 
We show a couple simple technical lemmas regarding some properties of such functions and their corresponding normalizations.

\begin{lemma}[Composition of uniform approximations]\label{lemma:composition_of_polynomial_approximations}
    Let $f(x)$ be an $L_f$-Lipschitz, $(\epsilon_0, k_0)$-approximable function with approximation polynomial $P(x)$. Let $g(x)$ be an $(\epsilon_1, k_1)$-approximable function with approximation polynomial $Q(x)$. Then $f(g(x))$ is an $(\epsilon_0 + L_f\epsilon_1, k_0k_1)$-approximable function. I.e., $|f(g(x)) - P(Q(x))| \le \epsilon_0 + L_f \epsilon_1$.
\end{lemma}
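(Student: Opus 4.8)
The plan is to prove the composition bound by a direct triangle-inequality argument, splitting the total error $|f(g(x)) - P(Q(x))|$ into two pieces via the intermediate quantity $P(g(x))$. First I would write
\begin{align}
|f(g(x)) - P(Q(x))| \le |f(g(x)) - P(g(x))| + |P(g(x)) - P(Q(x))|,
\end{align}
so the problem reduces to bounding each term on the right-hand side separately.

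For the first term, I would use the uniform approximation guarantee on $f$: since $\max_{y\in[-1,1]}|f(y) - P(y)| \le \epsilon_0$, and $g(x)\in[-1,1]$ for $x\in[-1,1]$ (this is the one hypothesis I should make explicit — $g$ maps the domain into the domain, which is implicit in the setup since $Q$ is meant to approximate $g$ on $[-1,1]$ and everything is valued in $[-1,1]$), we immediately get $|f(g(x)) - P(g(x))| \le \epsilon_0$. For the second term, I would use the Lipschitz continuity of $f$ together with the approximation guarantee on $g$. Strictly speaking I want $P$ itself to be Lipschitz with constant $L_f$; since $P$ uniformly approximates the $L_f$-Lipschitz function $f$ this is essentially inherited (or one simply takes $L_f$ to be a Lipschitz bound valid for $P$ on $[-1,1]$, which is the natural reading here). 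Then $|P(g(x)) - P(Q(x))| \le L_f |g(x) - Q(x)| \le L_f \epsilon_1$, using $\max_{x\in[-1,1]}|g(x) - Q(x)| \le \epsilon_1$. Adding the two bounds gives $\epsilon_0 + L_f\epsilon_1$.

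Finally, the degree claim: $P(Q(x))$ is a polynomial of degree $k_0 k_1$ since $P$ has degree $k_0$ and $Q$ has degree $k_1$, so composing yields a polynomial of degree at most $k_0 k_1$; hence $f\circ g$ is $(\epsilon_0 + L_f\epsilon_1, k_0 k_1)$-approximable per \cref{def:efficiently_analytic_function}, with approximating polynomial $P\circ Q$.

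The only real subtlety — the "main obstacle," though it is minor — is the bookkeeping of which function the Lipschitz constant applies to and ensuring the arguments stay in $[-1,1]$ so that all the approximation bounds are actually applicable. Concretely, I need $g([-1,1])\subseteq[-1,1]$ to invoke the bound on $f$ at the point $g(x)$, and I need a Lipschitz bound valid for the polynomial $P$ (not just for $f$) to control $|P(g(x)) - P(Q(x))|$; both are benign given the conventions of \cref{def:efficiently_analytic_function}, and I would state the domain assumption on $g$ up front to keep the argument airtight. Everything else is a one-line triangle inequality.
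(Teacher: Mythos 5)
The paper's own proof is a one-line appeal to the definitions, so the substance lives entirely in your write-up, and there is one genuine gap in it. Your decomposition inserts the intermediate point $P(g(x))$, which forces you to bound $|P(g(x)) - P(Q(x))|$ by a Lipschitz constant \emph{of the polynomial $P$}. That constant is not a hypothesis of the lemma, and your claim that it is ``essentially inherited'' from $f$ is false in general: if $|P - f| \le \epsilon_0$ on $[-1,1]$ with $P$ of degree $k_0$, the Markov brothers' inequality only gives $|P'| \le |f'| + k_0^2\epsilon_0$, so the Lipschitz constant of $P$ can exceed $L_f$ by as much as $k_0^2\epsilon_0$, and your argument would then only yield $\epsilon_0 + (L_f + k_0^2\epsilon_0)\epsilon_1$ rather than the stated $\epsilon_0 + L_f\epsilon_1$. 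Reading ``$L_f$'' as a Lipschitz bound for $P$ instead of $f$, as you suggest as a fallback, silently changes the hypotheses of the lemma.

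The fix is to choose the other intermediate point: write
\begin{align}
|f(g(x)) - P(Q(x))| \le |f(g(x)) - f(Q(x))| + |f(Q(x)) - P(Q(x))| \le L_f\,|g(x) - Q(x)| + \epsilon_0 \le L_f\epsilon_1 + \epsilon_0,
\end{align}
which uses only the Lipschitz constant of $f$ (applied to the two points $g(x)$ and $Q(x)$) and the uniform bound $|f(y) - P(y)| \le \epsilon_0$ evaluated at $y = Q(x)$. The implicit domain condition then becomes $Q(x) \in [-1,1]$ (rather than $g(x) \in [-1,1]$ as in your version); some such range assumption is needed either way and is left implicit by the paper, so you are right to flag it. Your degree count $\deg(P\circ Q) \le k_0 k_1$ is correct.
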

\begin{proof}
    This follows simply from \cref{def:lipschitz_constant}, and \cref{def:efficiently_analytic_function}.
\end{proof}

\begin{lemma}\label{lemma:technical1}
Let $f(x)$ be $\epsilon_0$-approximable as per \cref{def:efficiently_analytic_function} with $P(x)$ the approximation polynomial and $\gamma := \max_{x\in [-1, 1]}|f(x)|$. Require that $\epsilon_0\leq \gamma$.
In addition, define the vector $\ket{\psi} := \sum_{j}\psi_j\ket{j}$, such that $\ket{\psi} \in \mathbb{C}^N$, and $\lnorm{\ket{\psi}}_2 = 1$. Define
$\mathcal{N}^2 := \sum_{j=1}^N|f(\psi_j)|^2$ and $\mathcal{N}_1^2 := \sum_{j=1}^N|P(\psi_j)|^2$. 
Then, 
\begin{itemize}
\item [i.)] $
    \max_{x\in[-1, 1]}|P(x)| \le 2\gamma$.
\item [ii.)] 
$\left| \mathcal{N} - \mathcal{N}_1
\right| \le 3\gamma\epsilon_0 N/\mathcal{N}$.
\end{itemize}
If $\epsilon_0 \le \frac{\mathcal{N}^2}{6\gamma N}$ (noting that $\frac{\mathcal{N}^2}{6\gamma N} < \gamma$), 
\begin{itemize}
\item [iii.)] 
$\mathcal{N}_1 \ge \frac{1}{2}\mathcal{N}$.
\end{itemize}
\end{lemma}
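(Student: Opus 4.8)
The plan is to prove the three parts in sequence, since (ii) uses (i) and (iii) uses (ii). For part (i), I would start from $|P(x)| \le |f(x)| + |f(x) - P(x)| \le \gamma + \epsilon_0$ by the triangle inequality and \cref{def:efficiently_analytic_function}; then using the hypothesis $\epsilon_0 \le \gamma$ gives $|P(x)| \le 2\gamma$ uniformly on $[-1,1]$.

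For part (ii), the idea is the same telescoping trick used in the proof of \cref{theorem:polynomial_transformation_of_real_state_amplitudes_via_importance_sampling}: write $|\mathcal{N} - \mathcal{N}_1| \le \sqrt{|\mathcal{N}^2 - \mathcal{N}_1^2|}$ using $|\sqrt{a}-\sqrt{b}| \le \sqrt{|a-b|}$ — but actually here I suspect it is cleaner to bound the difference directly via the reverse triangle inequality on $\ell_2$ norms: $|\mathcal{N} - \mathcal{N}_1| = \big|\,\lnorm{(f(\psi_j))_j}_2 - \lnorm{(P(\psi_j))_j}_2\,\big| \le \lnorm{(f(\psi_j) - P(\psi_j))_j}_2 \le \sqrt{N}\,\epsilon_0$, since each $|f(\psi_j) - P(\psi_j)| \le \epsilon_0$. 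However, the claimed bound is $3\gamma\epsilon_0 N/\mathcal{N}$, which suggests the authors instead expand $|\mathcal{N}^2 - \mathcal{N}_1^2| = |\sum_j (|f(\psi_j)|^2 - |P(\psi_j)|^2)| \le \sum_j \big||f(\psi_j)| - |P(\psi_j)|\big|\cdot(|f(\psi_j)| + |P(\psi_j)|) \le \epsilon_0 \cdot N \cdot (\gamma + 2\gamma) = 3\gamma\epsilon_0 N$, using part (i), $|f| \le \gamma$, and $\big||a|-|b|\big| \le |a-b| \le \epsilon_0$. Then $|\mathcal{N} - \mathcal{N}_1| = \frac{|\mathcal{N}^2 - \mathcal{N}_1^2|}{\mathcal{N} + \mathcal{N}_1} \le \frac{3\gamma\epsilon_0 N}{\mathcal{N}}$, which is exactly the stated bound. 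I would go with this route to match the paper's constant.

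For part (iii), under the stronger hypothesis $\epsilon_0 \le \frac{\mathcal{N}^2}{6\gamma N}$, substitute into (ii) to get $|\mathcal{N} - \mathcal{N}_1| \le \frac{3\gamma N}{\mathcal{N}}\cdot\frac{\mathcal{N}^2}{6\gamma N} = \frac{\mathcal{N}}{2}$, hence $\mathcal{N}_1 \ge \mathcal{N} - \frac{\mathcal{N}}{2} = \frac{\mathcal{N}}{2}$. I would also quickly verify the parenthetical remark $\frac{\mathcal{N}^2}{6\gamma N} < \gamma$: since $\mathcal{N}^2 = \sum_j |f(\psi_j)|^2 \le \gamma^2 \sum_j 1 = \gamma^2 N$ is too weak, but in fact $\mathcal{N}^2 \le \gamma^2$ is false in general; rather $\mathcal{N}^2 \le N\gamma^2$ gives $\frac{\mathcal{N}^2}{6\gamma N} \le \frac{\gamma}{6} < \gamma$, which suffices (this also confirms the hypothesis in (iii) is consistent with the requirement $\epsilon_0 \le \gamma$ from (i) and the blanket hypotheses).

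I do not anticipate a genuine obstacle here — the main thing to be careful about is getting the constant in (ii) right, specifically remembering to use $|P| \le 2\gamma$ from part (i) (not $|P| \le \gamma$) when bounding $|f(\psi_j)| + |P(\psi_j)| \le 3\gamma$, and correctly invoking $\big||a| - |b|\big| \le |a-b|$ so the per-index difference is controlled by $\epsilon_0$ rather than $2\epsilon_0$ or similar.
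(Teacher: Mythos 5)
Your proposal is correct and follows essentially the same route as the paper: triangle inequality plus $\epsilon_0\le\gamma$ for (i), the difference-of-squares factorization $|\mathcal{N}^2-\mathcal{N}_1^2|\le 3\gamma\epsilon_0 N$ divided by $\mathcal{N}+\mathcal{N}_1\ge\mathcal{N}$ for (ii), and substitution of the hypothesis on $\epsilon_0$ for (iii) (the paper works with the squared norms and gets $\mathcal{N}_1\ge\mathcal{N}/\sqrt{2}$, but your direct use of (ii) yields the stated $\mathcal{N}_1\ge\mathcal{N}/2$ just as well). Your handling of the modulus via $\bigl||a|-|b|\bigr|\le|a-b|$ is in fact slightly more careful than the paper's factorization $(f-P)(f+P)$, which is only literally valid for real values.
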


\begin{proof}
For proving i.), by \cref{def:efficiently_analytic_function}, $|P(x)| = |P(x) - f(x) + f(x)|\le \epsilon_0 + |f(x)|\le 2\gamma$, for all $x\in [-1,1]$.
For proving ii.), we obtain from simple algebra that $
\left|
    \mathcal{N} - \mathcal{N}_1
\right|
=
\left|
    \frac{\mathcal{N}^2 - \mathcal{N}_1^2}{\mathcal{N} + \mathcal{N}_1}
\right|
\le 
\frac{1}{\mathcal{N}}
\left|
\mathcal{N}^2 - \mathcal{N}_1^2
\right|$.
By definition, we have
$\left|
\mathcal{N}^2 - \mathcal{N}_1^2
\right|
\le 
\sum_{j=1}^N 
\left|
(f(\psi_j) - P(\psi_j))(f(\psi_j) + P(\psi_j))
\right|$.
By i.), we have that $|P(\psi_j)| \le 2\gamma$, and hence, $|f(\psi_j) + P(\psi_j)| \le 3\gamma$.  Since $f(x)$ is $\epsilon_0$-approximable, $|f(\psi_j) - P(\psi_j)|\le \epsilon_0$, and so we obtain
$
\left|
    \mathcal{N} - \mathcal{N}_1
\right|
\le 
3\gamma\epsilon_0 N/\mathcal{N}$.
For proving iii.) $\mathcal{N}_1^2 = \mathcal{N}^2 - \sum_{j}e(\psi_j)(P(\psi_j) + f(\psi_j)) \ge \mathcal{N}^2 - 3\epsilon_0 \gamma N$. If $\epsilon_0 \le \mathcal{N}^2/6\gamma N$, then $\mathcal{N}_1^2 \ge \mathcal{N}^2/2$.
\end{proof}

\begin{lemma}\label{lemma:l2_norm_error_of_polynomial_state_from_efficiently_analytic_function_state} 
Let $f(x)$ be $\epsilon_0$-approximable as per \cref{def:efficiently_analytic_function} with $P(x)$ the approximation polynomial and $\gamma := \max_{x\in [-1, 1]}|f(x)|$. 
Define the vector $\ket{\psi} = \sum_j \psi_j\ket{j}$ such that $\lnorm{\ket{\psi}}_2 = 1$ and $\ket{\psi} \in \mathbb{C}^N$. Let
$\mathcal{N}^2 := \sum_{j=1}^N|f(\psi_j)|^2$ and $\mathcal{N}_1^2 := \sum_{j=1}^N|P(\psi_j)|^2$.
Define $\epsilon \in (0,1]$ and 
$
\Delta_k := \lnorm{
\frac{1}{\mathcal{N}} 
\sum_j f(\psi_j)\ket{j}
-
\frac{1}{\mathcal{N}_1} 
\sum_j P_k(\psi_j)\ket{j}
}_2.
$
Then, $\epsilon_0 \le \epsilon\mathcal{N}^2/8\gamma N \implies \Delta_k \le \epsilon$.
\end{lemma}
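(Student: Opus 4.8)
The plan is to apply Lemma~\ref{lemma:normalized_state_deviation_due_to_normalization} with $\ket{a} := \sum_j f(\psi_j)\ket{j}$ and $\ket{b} := \sum_j P_k(\psi_j)\ket{j}$, so that $\mathcal{N}_a = \mathcal{N}$ and $\mathcal{N}_b = \mathcal{N}_1$. This requires controlling two quantities: the unnormalized deviation $\lnorm{\ket{a}-\ket{b}}_2$, and the normalization gap $|\mathcal{N}-\mathcal{N}_1|$. The latter is exactly what Lemma~\ref{lemma:technical1}(ii.) provides, so the bulk of the work is verifying that the hypotheses needed to invoke it hold under the assumption $\epsilon_0 \le \epsilon\mathcal{N}^2/8\gamma N$.

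First I would record the elementary bound $\mathcal{N}^2 = \sum_j |f(\psi_j)|^2 \le N\gamma^2$, i.e.\ $\mathcal{N} \le \gamma\sqrt{N}$. Combined with $\epsilon \le 1$, the hypothesis gives $\epsilon_0 \le \epsilon\mathcal{N}^2/8\gamma N \le \epsilon\gamma/8 \le \gamma$, and also $\epsilon_0 \le \mathcal{N}^2/6\gamma N$, so both side conditions of Lemma~\ref{lemma:technical1} are satisfied. Hence Lemma~\ref{lemma:technical1}(ii.) yields $|\mathcal{N}-\mathcal{N}_1| \le 3\gamma\epsilon_0 N/\mathcal{N}$. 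For the unnormalized deviation, $\epsilon_0$-approximability of $f$ gives $|f(\psi_j) - P_k(\psi_j)| \le \epsilon_0$ for every $j$, so $\lnorm{\ket{a}-\ket{b}}_2 = \big(\sum_j |f(\psi_j)-P_k(\psi_j)|^2\big)^{1/2} \le \sqrt{N}\,\epsilon_0$.

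Now Lemma~\ref{lemma:normalized_state_deviation_due_to_normalization}, together with $\max\{\mathcal{N},\mathcal{N}_1\} \ge \mathcal{N}$, gives
\[
\Delta_k \;\le\; \frac{1}{\mathcal{N}}\left(\frac{3\gamma\epsilon_0 N}{\mathcal{N}} + \sqrt{N}\,\epsilon_0\right) \;=\; \frac{3\gamma\epsilon_0 N}{\mathcal{N}^2} + \frac{\sqrt{N}\,\epsilon_0}{\mathcal{N}}.
\]
Substituting $\epsilon_0 \le \epsilon\mathcal{N}^2/8\gamma N$ bounds the first term by $3\epsilon/8$; for the second term one gets $\sqrt{N}\,\epsilon_0/\mathcal{N} \le \epsilon\mathcal{N}/8\gamma\sqrt{N} \le \epsilon/8$, using $\mathcal{N} \le \gamma\sqrt{N}$ once more. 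Adding these yields $\Delta_k \le \epsilon/2 \le \epsilon$, which is the claim.

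I do not expect a genuine obstacle: the argument is bookkeeping built on Lemmas~\ref{lemma:normalized_state_deviation_due_to_normalization} and~\ref{lemma:technical1}. The only point demanding care is checking the side conditions of Lemma~\ref{lemma:technical1} (that $\epsilon_0$ is small relative to $\gamma$), which is where $\epsilon \le 1$ and $\mathcal{N} \le \gamma\sqrt{N}$ are used. One could alternatively route through Lemma~\ref{lemma:technical1}(iii.) and place $\mathcal{N}_1 \ge \mathcal{N}/2$ in the denominator instead of $\mathcal{N}$, which changes only the constants.
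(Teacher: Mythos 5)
Your proposal is correct and follows essentially the same route as the paper's proof: decompose the error into the unnormalized deviation $\sqrt{N}\epsilon_0$ plus the normalization gap from \cref{lemma:technical1}(ii.), then substitute $\epsilon_0 \le \epsilon\mathcal{N}^2/8\gamma N$. The only (cosmetic) difference is that you invoke the $\max\{\mathcal{N},\mathcal{N}_1\}$ denominator of \cref{lemma:normalized_state_deviation_due_to_normalization} to avoid needing $\mathcal{N}_1 \ge \mathcal{N}/2$, whereas the paper routes through \cref{lemma:technical1}(iii.) and absorbs the factor of $2$ into the final constant $8$ — exactly the alternative you note at the end.
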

\begin{proof}
Let $E := \mathcal{N}_1 - \mathcal{N}$.  Then, pulling out a common $\frac{1}{\mathcal{N}_1\mathcal{N}}$ term and simplifying by using $\mathcal{N}_1  = E + \mathcal{N}$ in the resulting numerator term, we get $\Delta \le (\epsilon_0\sqrt{N} + E)/\mathcal{N}_1$, where we also used $\mathcal{N} = \lnorm{\sum_j f(\psi_j)\ket{j}}_2$.

From \cref{lemma:technical1} ii.), we know that $|E| \le 3\gamma\epsilon_0 N/\mathcal{N}$. From \cref{lemma:technical1} iii.), we know that $\mathcal{N}_1 \ge \mathcal{N}/2$, if $\epsilon_0 \le \mathcal{N}^2/6\gamma N$. 
Noting that $\gamma\sqrt{N}/\mathcal{N} \ge 1$ implies that $\gamma \epsilon_0 N/\mathcal{N} = (\epsilon_0\sqrt{N})(\gamma\sqrt{N}/\mathcal{N})\ge \epsilon_0\sqrt{N}$, we then get,
\begin{align}
\Delta_k
&
\le 
\frac{2}{\mathcal{N}}
\left(
\epsilon_0\sqrt{N}
+
\frac{3\gamma\epsilon_0 N}{\mathcal N}
\right) \leq \frac{8\epsilon_0 \gamma N}{\mathcal N^2}.
\end{align}
Thus, to get an overall error at most $\epsilon > 0$, we require that $\frac{8\epsilon_0 \gamma N}{\mathcal N^2} \le \epsilon$ and so we set $\epsilon_0 \le \epsilon\mathcal{N}^2/8\gamma N$. 
Since our proof added the constraint $\epsilon_0 \le \mathcal{N}^2/6\gamma N$ (and technically also $\epsilon_0 \le \gamma$, but since $\mathcal{N}^2/6\gamma N \le \gamma$, this automatically satisfied), $\epsilon_0 = \mathcal{N}^2\epsilon/8\gamma N \le \mathcal{N}^2/6\gamma N$ and so our proof holds without further constraint if $\epsilon \le 4/3$.
\end{proof}

\begin{theorem}\label{theorem:generalized_end_to_end_complexity_for_f_0_is_0_for_efficiently_analytic_functions}
We are given an $n$-qubit circuit $U$ such that $U\ket{0} = \sum_{j=1}^N\psi_j\ket{j}$, with $N = 2^n$ and $\forall j, \psi_j\in \mathbb R$. Additionally, we are given a function $f(x)$ with $f(0) = 0$  and $\gamma:=\max_{x\in[-1,1]}|f(x)|$ and $\mathcal{N}^2 := \sum_{j=1}^N |f(\psi_j)|^2$. 
For $1 > \epsilon >0$, let $f(x)$ be 
$\frac{\gamma N}{\epsilon\mathcal{N}^2}$-approximable by a polynomial $P_k(x)$, such that $P_k(0) = 0$, with degree 
$k = K(\epsilon\mathcal{N}^2/ \gamma N)$, where $K$ is a function describing the degree of the polynomial in terms of the error.
Additionally, define $\tilde \gamma := \max_{x \in [-1, 1]}|P_k(x)/x|$.  
Then, we can prepare a state $\epsilon$ close in $\ell_2$ norm-distance from 
$\frac{1}{\mathcal{N}}\sum_j f(\psi_j)\ket{j}$ with arbitrarily high probability of success with $O(\tilde \gamma k/\mathcal{N})$ query complexity to a controlled $U$ and $U^{\dagger}$ circuit, with overall circuit depth of $O(n \tilde \gamma k /\mathcal{N})$, with $O({\rm poly} \log(\tilde \gamma^2 N/ \epsilon^2 \mathcal{N}^2))$ classical computation and with $O(n)$ ancillary qubits.
\end{theorem}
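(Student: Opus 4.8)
The plan is to reduce the statement to two ingredients already in hand: \cref{lemma:l2_norm_error_of_polynomial_state_from_efficiently_analytic_function_state}, which bounds the $\ell_2$ distance between the normalized $f$-transformed state and the normalized $P_k$-transformed state purely in terms of the uniform approximation error of $P_k$, and \cref{theorem:polynomial_transformation_of_real_state_amplitudes_via_importance_sampling}, which actually prepares (an approximation to) the normalized $P_k$-transformed state by importance sampling. A triangle inequality then glues the two together.

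First I would fix the approximation accuracy. Take $P_k$ to be the degree-$k$ approximant of $f$ guaranteed by the hypothesis, with $P_k(0)=0$ and uniform error $\epsilon_0$ equal to a small constant multiple of $\epsilon\mathcal{N}^2/(\gamma N)$ (this is the quantity fed to $K$ to determine $k$). Writing $\mathcal{N}_1^2 := \sum_j |P_k(\psi_j)|^2$, the choice of $\epsilon_0$ is made so that simultaneously (i) $\epsilon_0 \le \epsilon\mathcal{N}^2/(16\gamma N)$, so that \cref{lemma:l2_norm_error_of_polynomial_state_from_efficiently_analytic_function_state} applied with target error $\epsilon/2$ gives $\lnorm{\frac{1}{\mathcal{N}}\sum_j f(\psi_j)\ket{j} - \frac{1}{\mathcal{N}_1}\sum_j P_k(\psi_j)\ket{j}}_2 \le \epsilon/2$; and (ii) $\epsilon_0 \le \mathcal{N}^2/(6\gamma N)$, which holds automatically from $\epsilon<1$, so that \cref{lemma:technical1}(iii) yields $\mathcal{N}_1 \ge \mathcal{N}/2$. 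Property (ii) is what lets us later state the complexity in terms of $\mathcal{N}$ rather than the a priori unknown $\mathcal{N}_1$.

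Next I would invoke \cref{theorem:polynomial_transformation_of_real_state_amplitudes_via_importance_sampling} on $U$ and $P_k$ with target $\ell_2$ error $\epsilon/2 \in (0,1]$. Since $P_k(0) = 0$ we may write $P_k(x) = x\,h(x)$ with $h$ of degree $k-1$ and $\max_{x\in[-1,1]}|h(x)| = \tilde{\gamma}$; that theorem already performs the internal rescaling by $1/(4\tilde{\gamma})$ needed to meet the QSVT norm constraint, so no further normalization of $P_k$ is required. It outputs a state $\ket{\phi}$ with $\lnorm{\ket{\phi} - \frac{1}{\mathcal{N}_1}\sum_j P_k(\psi_j)\ket{j}}_2 \le \epsilon/2$ at arbitrarily high success probability, using $O(k\tilde{\gamma}/\mathcal{N}_1)$ queries to controlled-$U$ and controlled-$U^\dagger$, $O(nk\tilde{\gamma}/\mathcal{N}_1)$ additional circuit depth, $O(\mathrm{poly}(k, \log(\tilde{\gamma}^2 N/\epsilon^2\mathcal{N}_1^2)))$ classical computation, and $O(n)$ ancillas. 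Substituting $\mathcal{N}_1 \ge \mathcal{N}/2$ turns these into $O(k\tilde{\gamma}/\mathcal{N})$, $O(nk\tilde{\gamma}/\mathcal{N})$ and $O(\mathrm{poly}\log(\tilde{\gamma}^2 N/\epsilon^2\mathcal{N}^2))$ respectively (using that in the efficiently-approximable regime $k$ is itself polylogarithmic in its argument), and a final triangle inequality combining the bound from step one with the bound just obtained gives $\lnorm{\ket{\phi} - \frac{1}{\mathcal{N}}\sum_j f(\psi_j)\ket{j}}_2 \le \epsilon$.

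I expect no serious obstacle: the content is entirely in the two cited results, and the only delicate point is the bookkeeping of the error budget — choosing a single $\epsilon_0 = \Theta(\epsilon\mathcal{N}^2/(\gamma N))$ that meets both the requirement of \cref{lemma:l2_norm_error_of_polynomial_state_from_efficiently_analytic_function_state} and the hypothesis of \cref{lemma:technical1}(iii), and splitting $\epsilon$ as $\epsilon/2 + \epsilon/2$ between the approximation error and the state-preparation error. One may additionally remark, via \cref{lemma:upper_bound_on_max_of_reduced_polynomial}, that $\tilde{\gamma} \le L$ for any Lipschitz constant $L$ of $P_k$ on $[-1,1]$, so the same complexities hold with $L$ in place of $\tilde{\gamma}$.
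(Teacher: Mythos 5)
Your proposal is correct and follows essentially the same route as the paper's own proof: bound the normalized-state discrepancy between $f$ and $P_k$ via \cref{lemma:l2_norm_error_of_polynomial_state_from_efficiently_analytic_function_state} with $\epsilon_0=\Theta(\epsilon\mathcal{N}^2/\gamma N)$, use \cref{lemma:technical1}(iii) to get $\mathcal{N}_1\ge\mathcal{N}/2$ so the complexities of \cref{theorem:polynomial_transformation_of_real_state_amplitudes_via_importance_sampling} can be restated in terms of $\mathcal{N}$, and combine the two halves of the error budget by the triangle inequality. Your explicit remarks on the internal $1/(4\tilde\gamma)$ rescaling and on the $\epsilon/2+\epsilon/2$ split are exactly the bookkeeping the paper performs.
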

\begin{proof}
We will bound the $\ell_2$-norm error from the polynomial approximation, and then the $\ell_2$-norm error from the algorithmic error, combining the two via triangle inequality. Define $\mathcal{N}_1^2 := \sum_{j=1}^N |P_k(\psi_j)|^2$.
By \cref{lemma:l2_norm_error_of_polynomial_state_from_efficiently_analytic_function_state}, if $\epsilon/2 \le \min(1, 1/\gamma)$, then 
\begin{align}
    \lnorm{
    \frac{1}{\mathcal{N}} 
    \sum_j f(\psi_j)\ket{j}
    -
    \frac{1}{\mathcal{N}_1} 
    \sum_j P_k(\psi_j)\ket{j}
    }_2
    \le 
    \epsilon/2.
\end{align}
Where $\max_{x\in[-1, 1]}|f(x) - P_k(x)| \le \epsilon_0$, \cref{lemma:l2_norm_error_of_polynomial_state_from_efficiently_analytic_function_state} says $\epsilon/2 = 8\epsilon_0\gamma N/\mathcal{N}^2$, consequently $\epsilon_0 = \epsilon\mathcal{N}^2/16\gamma N \le \mathcal{N}^2/6\gamma N$, and so we can apply \cref{lemma:technical1}.
This gives $\mathcal{N}_1 \ge \mathcal{N}/2$.

We will invoke \cref{theorem:polynomial_transformation_of_real_state_amplitudes_via_importance_sampling} on $P_k(x)$, thereby obtaining the approximation to the desired transformation. Thus, we can prepare the state $\frac{1}{\mathcal{N}_1}\sum_j P_k(\psi_j)\ket{j}$ up to error $\epsilon/2$ by invoking \cref{theorem:polynomial_transformation_of_real_state_amplitudes_via_importance_sampling} with $P_k(x)$. This has a query complexity to a controlled $U$ and controlled $U^{\dagger}$ circuit of $O(\tilde \gamma h(\gamma N/\epsilon\mathcal{N}^2)/\mathcal{N})$. The additional circuit complexity is thus $O(n \tilde \gamma h(\gamma N/\epsilon\mathcal{N}^2)/\mathcal{N})$, requiring $O({\rm poly}(\log(\gamma N/\epsilon\mathcal{N}^2), \log(\tilde \gamma^2 N/ \epsilon^2 \mathcal{N}^2)))$ classical computation, and $O(n)$ ancilla qubits. To increase the simplicity of the reported result, we assume that $\epsilon$ is sufficiently small such that $\log(\gamma N/\epsilon\mathcal{N}^2) \le \log(\tilde \gamma^2 N/ \epsilon^2 \mathcal{N}^2)$, allowing the classical computation complexity to be simplified to $O({\rm poly}\log(\tilde \gamma^2 N/ \epsilon^2 \mathcal{N}^2))$.

\end{proof}
When $K(\epsilon) = \log(1/\epsilon)$, the query and circuit complexities simplify accordingly.

\section{Application: End-to-End Complexities for Applying Various Functions to Arbitrary Quantum States}\label{section:end_to_end_complexity_for_applying_various_functions}

We now use the theorems and lemmas derived so far to easily derive the end-to-end complexity for a number of common functions. When $f(0)=0$, we invoke \cref{theorem:generalized_end_to_end_complexity_for_f_0_is_0_for_efficiently_analytic_functions}. When $f(0) > 0$, we invoke \cref{theorem:generalized_end_to_end_complexity_for_previous_technique_for_efficiently_analytic_functions} (our diagonal variant of \cite{guo2021nonlinear}, extended with $\ell_2$-norm error guarantees, as derived in the appendix).
These examples illustrate how when combined with our new error bound, our technique and the technique of \cite{guo2021nonlinear} enable the transformation of real amplitudes by a rich set of functions \textit{with overall efficient end-to-end complexity}.

\begin{lemma}[Absolute error in polynomial approximation to tanh \cite{guo2021nonlinear}]\label{lemma:polynomial_approximation_to_tanh_absolute_error_bound}
Let $f(x) = \tanh(x)$. Define
\begin{align}
    P_k(x) := \sum_{n=1}^{k}\frac{2^{2n}(2^{2n} - 1)B_{2n}}{(2n)!}x^{2n -1},
\end{align}
where $B_{n}$ is the $n^{th}$ Bernoulli number. Then, in the interval $x \in [-1, 1]$, when $k\ge 2$,
\begin{align}
     \left|f(x) - P_k(x)\right| \le 9 \left(\frac{2}{\pi}\right)^{k}.
\end{align}
Thus, there exists a $k\in O(\log(1/\epsilon))$ such that the error is at most $\epsilon$.
\end{lemma}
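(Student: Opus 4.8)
The plan is to identify $P_k$ with the degree-$(2k-1)$ truncation of the Maclaurin series of $\tanh$, so that the approximation error is exactly the series tail, and then to bound that tail on $[-1,1]$ by a convergent geometric series.

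First I would recall the classical expansion $\tanh(x)=\sum_{n=1}^\infty a_n x^{2n-1}$ with $a_n=\frac{2^{2n}(2^{2n}-1)B_{2n}}{(2n)!}$, valid for $|x|<\pi/2$; since $1<\pi/2$ this converges absolutely on $[-1,1]$, and $P_k$ is precisely the partial sum $\sum_{n=1}^k a_n x^{2n-1}$ (one checks the leading coefficients are $1,-\tfrac13,\tfrac{2}{15},\dots$, and in particular $P_k(0)=0=\tanh(0)$). Consequently, for $x\in[-1,1]$,
\[
\big|\tanh(x)-P_k(x)\big|=\Big|\sum_{n=k+1}^\infty a_n x^{2n-1}\Big|\le\sum_{n=k+1}^\infty |a_n|,
\]
using $|x|^{2n-1}\le 1$.

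Next I would convert the coefficient bound into a geometric one. Using $B_{2n}=(-1)^{n+1}\frac{2(2n)!}{(2\pi)^{2n}}\zeta(2n)$ gives $|a_n|=\frac{2(2^{2n}-1)}{\pi^{2n}}\zeta(2n)\le 2\zeta(2)\,(2/\pi)^{2n}$, since $2^{2n}-1<2^{2n}$ and $\zeta(2n)\le\zeta(2)$ for $n\ge 1$. Summing $\sum_{n=k+1}^\infty (2/\pi)^{2n}$ as a geometric series with ratio $(2/\pi)^2<1$ and using $(2/\pi)^{2k+2}\le(2/\pi)^{k+2}$ for $k\ge0$ then bounds the error by $\frac{2\zeta(2)(2/\pi)^2}{1-(2/\pi)^2}(2/\pi)^k=\frac{(4/3)\pi^2}{\pi^2-4}(2/\pi)^k$, whose constant is comfortably below $9$; this establishes the claimed inequality with room to spare, so the hypothesis $k\ge2$ is more than enough. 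The final sentence then follows by solving $9(2/\pi)^k\le\epsilon$, i.e. taking $k=\lceil\log(9/\epsilon)/\log(\pi/2)\rceil=O(\log(1/\epsilon))$.

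I do not anticipate a genuine obstacle; the only point needing care is the coefficient estimate. If one prefers to avoid the $\zeta$/Bernoulli closed form, an alternative is a Cauchy estimate $|a_n|\le M(r)\,r^{-(2n-1)}$ on a circle $|z|=r$ with $1<r<\pi/2$, where $M(r)=\sup_{|z|=r}|\tanh z|$ is finite because the poles of $\tanh$ nearest the origin sit at $\pm i\pi/2$; this again reduces the tail to a geometric series of ratio $(|x|/r)^2<1$. Either way, the exponential decay — and hence the $O(\log(1/\epsilon))$ degree — comes from the nearest singularity of $\tanh$ lying at distance $\pi/2>1$ from the interval $[-1,1]$.
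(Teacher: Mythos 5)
Your proof is correct, and it follows the same skeleton as the paper's (identify $P_k$ with the truncated Maclaurin series, bound the tail termwise by $\sum_{n>k}|a_n|$, reduce to a geometric series). The difference lies entirely in the coefficient estimate, which is the real technical content here: the paper uses Leeming's inequality $|B_{2n}|\le 5\sqrt{\pi n}\,(n/\pi e)^{2n}$ together with $n!\ge(n/e)^n$ and $\sqrt{\pi n}\le(\pi/2)^n$ (the last step forcing the restriction $n\ge 3$, hence $k\ge 2$) to reach $|a_n|\le 5(2/\pi)^n$, whereas you invoke the exact identity $B_{2n}=(-1)^{n+1}\tfrac{2(2n)!}{(2\pi)^{2n}}\zeta(2n)$ to get $|a_n|\le 2\zeta(2)(2/\pi)^{2n}$ directly. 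Your route is cleaner and strictly stronger: the factorials cancel exactly, the tail decays like $(2/\pi)^{2k}$ rather than $(2/\pi)^{k}$, the resulting constant $\tfrac{4\pi^2}{3(\pi^2-4)}\approx 2.24$ is well under $9$, and the hypothesis $k\ge 2$ becomes unnecessary. The only thing to be careful about, which you correctly note, is that the expansion is valid on $[-1,1]$ because the nearest poles of $\tanh$ sit at $\pm i\pi/2$ with $\pi/2>1$; your alternative Cauchy-estimate argument makes the same point without ever touching Bernoulli numbers. Either version establishes the stated bound (which, being weaker than what you prove, holds a fortiori), and the $k\in O(\log(1/\epsilon))$ conclusion follows as you say.
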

\begin{proof}
The proof follows directly from \cite{guo2021nonlinear}, with (very) minor corrections. 
First, define $\alpha_n := \frac{2^{2n}(2^{2n} - 1)B_{2n}}{(2n)!}$. Note that in the interval $x\in [-\pi/2, \pi/2]$, $\tanh(x) = \sum_{n=1}^{\infty}\alpha_n x^{2n - 1}$. Define $e_k(x) := f(x) - P_k(x)$. Assuming $x\in[-1,1]$, wish to bound,
\begin{align}
    |e_k(x)|
    =
    \left|
        \sum_{n=k+1}^{\infty}\alpha_n x^{2n - 1}
    \right|
    \le 
    \sum_{n=k+1}^{\infty}|\alpha_n|.
\end{align}
Using the fact that $|B_{2n}| \le 5\sqrt{\pi n}(\frac{n}{\pi e})^{2n}$ as per~\cite{leeming1987real} (as noted in ~\cite{guo2021nonlinear}), 
\begin{align}
    \left|
        \alpha_n
    \right|
    \le
    \left|
        \frac{2^{4n}B_{2n}}{(2n)!}
    \right|
    \le 
    \frac{5\sqrt{\pi n}}{(2n)!}\left(\frac{4n}{\pi e}\right)^{2n}.
\end{align}
If we wish to use the polynomial approximation $P_1(x)$, we are just implementing an identity transformation, and so the minimum polynomial we might use is $P_2(x)$. As a result, without a loss of generality, we can assume that the $n$ in the error term $\sum_n^{\infty}\alpha x^{2n - 1}$ is at least $3$. Noting that $\sqrt{\pi n} \le (\pi/2)^n$ for all $n \ge 3$, and that $n! \ge (n/e)^n$, 
\begin{align}
    \left|
        \alpha_n
    \right|
    \le
    5\sqrt{\pi n}\left(\frac{2}{\pi}\right)^{2n}
    \le
    5\left(\frac{2}{\pi}\right)^{n}.
\end{align}
This then gives the bound
\begin{align}
    |e_k(x)|
    \le 
    5\sum_{n=k+1}^{\infty}\left(\frac{2}{\pi}\right)^{n}
    \le 9 \left(\frac{2}{\pi}\right)^{k}.
\end{align}
\end{proof}

\begin{theorem}[End-to-end complexity for several functions]\label{theorem:end_to_end_complexity_for_applying_various_functions}
We are given an arbitrary $n$-qubit quantum state preparation unitary $U$ such that $U\ket{0} = \ket{\psi} = \sum_{j=1}^N\psi_j\ket{j}$ with $\forall j, \psi_j \in \mathbb{R}$
and $N = 2^n$.
Let $\epsilon>0$, for the functions $f(x)$ given below there exists circuits to prepare a state $\ket \phi$ such that $\Vert \ket \phi - \frac{1}{\mathcal{N}}\sum_j f(\psi_j)\ket j\Vert_2 \leq \epsilon$,  where $\mathcal{N}$ is the normalization factor. The circuits all require $O(n)$ ancillas, and their complexities are:\\
\begin{center}
\begin{tabular}{c|c|c|c}
\text{Case} & f(x) & \text{Query complexity} & \text{Depth} \\
\hline
i.) & $e^x$ & $O\left(\log\left(\frac{1}{\epsilon}\right)\right)$ & $O\left(n\log\left(\frac{1}{\epsilon}\right)\right)$ \\
ii.)& $\cos(x)$ & $O\left(\log\left(\frac{1}{\epsilon}\right)\right)$ & $O\left(n\log\left(\frac{1}{\epsilon}\right)\right)$ \\
iii.) & $\frac{1}{1 + e^{-2 x}}$ & $O\left(\log\left(\frac{1}{\epsilon}\right)\right)$ & $O\left(n\log\left(\frac{1}{\epsilon}\right)\right)$ \\
iv.) & $\frac{1}{\sigma\sqrt{2\pi}}e^{-\frac{1}{2}(x/\sigma)^2}$ ($\sigma^2 \ge 1/2$) & $O\left (\sigma\log\left(\frac{\sigma^2}{\epsilon}\right)\right)$ &  $O\left(n\sigma\log\left(\frac{\sigma^2}{\epsilon}\right)\right)$ \\
v.) & $\sin(x)$ & $O\left(\log\left(\frac{N}{\epsilon}\right)\right)$ & $O\left(n\log\left(\frac{N}{\epsilon}\right)\right)$ 
\end{tabular}
\end{center}
The query complexity is to controlled $U$ and $U^{\dagger}$ circuits. We assume that $x\in [-1, 1]$.
\end{theorem}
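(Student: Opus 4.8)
The strategy is to treat each of the five functions by supplying the two inputs required by the end-to-end machinery already built: a polynomial approximation of controlled degree, and a lower bound on the normalization $\mathcal N$ that makes the prefactor dimension-independent. The functions split naturally into two groups according to the value at the origin. For cases i.)--iv.) we have $f(0)\neq 0$ ($e^0=1$, $\cos 0 = 1$, $\tfrac{1}{1+e^0}=\tfrac12$, and the Gaussian is $\tfrac{1}{\sigma\sqrt{2\pi}}>0$ at $0$), so I would invoke \cref{theorem:generalized_end_to_end_complexity_for_previous_technique_for_efficiently_analytic_functions} (the diagonal variant of \cite{guo2021nonlinear} with the $\ell_2$ guarantee). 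For case v.) we have $\sin 0 = 0$, so I would invoke \cref{theorem:generalized_end_to_end_complexity_for_f_0_is_0_for_efficiently_analytic_functions} directly. In every case the $O(n)$ ancilla count, the $O(n)$ depth overhead of the diagonal block-encoding of \cref{theorem:diagonal_block_encoding_of_state_amplitudes}, and the $\text{poly}\log$ classical cost are inherited verbatim from the invoked theorem, so the work is entirely in the two ingredients above.

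\textbf{The $\mathcal N$ lower bounds.} For cases i.)--iii.), $f$ is bounded below on $[-1,1]$ by a strictly positive constant: $e^x\ge e^{-1}$, $\cos x\ge\cos 1$, and $\tfrac{1}{1+e^{-2x}}\ge\tfrac{1}{1+e^{2}}$. Writing $c_f$ for this constant, $\mathcal N^2=\sum_j |f(\psi_j)|^2\ge c_f^2\sum_j|\psi_j|^2\cdot\tfrac{1}{\max|\psi_j|^2}\ge c_f^2$ is too weak; instead use $\mathcal N^2\ge c_f^2 N$ since there are $N$ terms each at least $c_f^2$. Hence $\sqrt N/\mathcal N\le 1/c_f$ and $\gamma N/(\epsilon\mathcal N^2)\le \gamma/(\epsilon c_f^2)$ are both $N$-independent, with $\gamma=\max_{[-1,1]}|f|=O(1)$. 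For case iv.), $\gamma=\tfrac{1}{\sigma\sqrt{2\pi}}$ and, using $\sigma^2\ge 1/2$, $f(x)\ge \tfrac{1}{\sigma\sqrt{2\pi}}e^{-1/(2\sigma^2)}\ge \tfrac{e^{-1}}{\sigma\sqrt{2\pi}}$, so $\mathcal N^2=\Theta(N/\sigma^2)$, giving $\sqrt N/\mathcal N=O(\sigma)$ and $\gamma N/(\epsilon\mathcal N^2)=O(\sigma/\epsilon)$. For case v.), monotonicity of $t\mapsto \sin t/t$ on $(0,1]$ gives $|\sin t|\ge (\sin 1)\,|t|$ for $|t|\le 1$, hence $\mathcal N^2=\sum_j|\sin\psi_j|^2\ge \sin^2(1)\sum_j|\psi_j|^2=\sin^2(1)$, an absolute constant; here $N$ cannot be removed from the prefactor-controlling ratio except inside a logarithm, which is exactly why the $\sin$ row carries $\log(N/\epsilon)$ rather than $\log(1/\epsilon)$.

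\textbf{The polynomial approximations.} Truncated Taylor series handle $e^x$ (remainder $\le e/(k+1)!$ on $[-1,1]$), $\cos x$, and $\sin x$, each giving a degree-$O(\log(1/\epsilon_0))$ polynomial that moreover agrees with $f$ at $x=0$ (so $P_k(0)=f(0)$, as required). For case iii.) I would use $\tfrac{1}{1+e^{-2x}}=\tfrac12\big(1+\tanh x\big)$ together with \cref{lemma:polynomial_approximation_to_tanh_absolute_error_bound}, yielding $P_k(x)=\tfrac12\big(1+P_k^{\tanh}(x)\big)$ of degree $O(\log(1/\epsilon_0))$ with $P_k(0)=\tfrac12$. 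For case iv.) I would compose the degree-$O(\log(1/\epsilon_0))$ Taylor polynomial for $e^y$ with the inner quadratic $y=-\tfrac12(x/\sigma)^2$ via \cref{lemma:composition_of_polynomial_approximations}; because $\sigma^2\ge 1/2$ the inner argument stays in $[-1,0]$, so the $e^y$-bound applies without modification and the composed degree is still $O(\log(1/\epsilon_0))$ up to the constant factor-two blowup from the quadratic. Finally, for case v.) I would also record $\tilde\gamma:=\max_{[-1,1]}|P_k(x)/x|=\max_{[-1,1]}\big|\sum_{n=0}^{k}\tfrac{(-1)^n x^{2n}}{(2n+1)!}\big|\le \sinh(1)=O(1)$, which is what enters \cref{theorem:generalized_end_to_end_complexity_for_f_0_is_0_for_efficiently_analytic_functions}.

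\textbf{Putting it together, and the hard part.} Setting $\epsilon_0$ equal to the accuracy demanded by the invoked theorem -- which is $\Theta(\epsilon\mathcal N^2/(\gamma N))$ -- gives $\epsilon_0=\Theta(\epsilon)$ for i.)--iii.), $\epsilon_0=\Theta(\epsilon/\sigma)$ for iv.), and $\epsilon_0=\Theta(\epsilon/N)$ for v.); the corresponding degrees are $O(\log(1/\epsilon))$, $O(\log(\sigma/\epsilon))=O(\log(\sigma^2/\epsilon))$, and $O(\log(N/\epsilon))$. Multiplying these degrees by the prefactor from the invoked theorem -- $\sqrt N/\mathcal N$, which is $O(1)$ for i.)--iii.) and $O(\sigma)$ for iv.), respectively $\tilde\gamma/\mathcal N=O(1)$ for v.) -- reproduces each query-complexity entry of the table, and appending the factor $n$ from every controlled-$U$ block (and the $O(n)$ diagonal-block-encoding overhead) reproduces the depth column; the $O(n)$ ancilla count is immediate. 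I expect the only genuinely delicate case to be the Gaussian: one must track how $\sigma$ propagates simultaneously through $\gamma=1/(\sigma\sqrt{2\pi})$, the lower bound $\mathcal N=\Omega(\sqrt N/\sigma)$, the required approximation accuracy $\epsilon_0=\Theta(\epsilon/\sigma)$, the degree of the \emph{composed} polynomial, and the amplitude-amplification prefactor $O(\sigma)$, so that these combine to exactly $O(\sigma\log(\sigma^2/\epsilon))$ queries; the other four cases reduce to routine substitution once the $\mathcal N$ bound and the Taylor/$\tanh$ approximations are in hand.
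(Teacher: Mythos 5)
Your proposal is correct and follows essentially the same route as the paper: the same split between the $f(0)\neq 0$ cases i.)--iv.) (handled via \cref{theorem:generalized_end_to_end_complexity_for_previous_technique_for_efficiently_analytic_functions}) and the $f(0)=0$ case of $\sin$ (handled via \cref{theorem:generalized_end_to_end_complexity_for_f_0_is_0_for_efficiently_analytic_functions}), the same truncated Taylor/$\tanh$-series approximations with degree $O(\log(1/\epsilon_0))$, and the same pointwise lower bounds on $|f|$ giving $\mathcal N = \Omega(\sqrt N)$ (resp.\ $\mathcal N = \Omega(1)$ for $\sin$). The only cosmetic deviations are that you bound $\tilde\gamma$ for $\sin$ directly by $\sum_n 1/(2n+1)! \le \sinh(1)$ rather than via the derivative and \cref{lemma:technical1}, and you package the Gaussian through \cref{lemma:composition_of_polynomial_approximations} rather than by direct substitution of $-x^2/2\sigma^2$ into the exponential series --- both yield the same polynomial and the same (admittedly loose, as you note) $O(\sigma\log(\sigma^2/\epsilon))$ entry.
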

\begin{proof}
i.) Using the standard series representation of $f(x)= e^x$, for $x\in[-1, 1]$,
$
f(x) = \sum_{n=0}^{\infty}\frac{x^n}{n!}.
$
We have for the error term that
$\left|
f(x) - P_k(x)
\right|
=
\left|
\sum_{n=k+1}^{\infty}\frac{x^n}{n!}
\right|
\le 
\sum_{n=k+1}^{\infty}\frac{1}{n!}
\le 
\sum_{n=k+1}^{\infty}2^{-n}
=
2^{-k},
$
using the fact that for all $n \ge 3$, $1/n! \le (1/2)^n$. Moreover, $\forall x\in[-1, 1]$, $1/e \le e^x \le e$. As a result, $\mathcal{N}^2 = \sum_j f(\psi_j)^2 \ge N/e^2$, and so $\mathcal{N} \ge \sqrt{N}/e$. Consequently, $O(\log(\gamma N/\epsilon \mathcal{N}^2)\gamma\sqrt{N}/\mathcal{N})$ is  $O(\log(1/\epsilon))$, giving a query complexity of $O(\log(1/\epsilon))$ and an associated circuit depth of $O(n\log(1/\epsilon))$ through~\cref{theorem:generalized_end_to_end_complexity_for_previous_technique_for_efficiently_analytic_functions}.   
\\
ii.) Using the standard series representation of $f(x)=\cos(x)$, for $x\in[-1, 1]$,
$
f(x) = \sum_{n=0}^{\infty}(-1)^n \frac{x^{2n}}{(2n)!}.
$
The error term is
$
|f(x) - P_k(x)|
=
\left|
\sum_{n=k+1}^{\infty}(-1)^n \frac{x^{2n}}{(2n)!}
\right|
\le 
\sum_{n=k+1}^{\infty}
\frac{1}{(2n)!}
\le 
\sum_{n=k+1}^{\infty}
\left(\frac{1}{2}\right)^n
= 
2^{-k}$,
using the fact that for $n \ge 2$, $(2n)! \ge 2^n$. Thus, $|e_k(x)| \le \epsilon_0$ with overall $k \in O(\log(1/\epsilon_0))$.
Of course, $\min_{x\in[-1, 1]}|\cos(x)| = \cos(1) > 1/2$, and so $\mathcal{N}^2 = \sum_j \cos(\psi_j)^2 \ge N/4$ which implies that $\mathcal{N} \ge \sqrt{N}/2$. 
Consequently, $O(\log(\gamma N/\epsilon \mathcal{N}^2)\gamma\sqrt{N}/\mathcal{N})$ is  $O(\log(1/\epsilon))$, giving a query complexity of $O(\log(1/\epsilon))$ and an associated circuit depth of $O(n\log(1/\epsilon))$ through~\cref{theorem:generalized_end_to_end_complexity_for_previous_technique_for_efficiently_analytic_functions}. 
iii.) We use the common identity for the logistic function of
$
f(x) = \frac{1}{2} + \frac{1}{2}\tanh(x),
$
Consequently, to obtain a series expansion for $f(x)$, we invoke \cref{lemma:polynomial_approximation_to_tanh_absolute_error_bound}, again adopting the notation $\alpha_n := \frac{2^{2n}(2^{2n} - 1)B_{2n}}{(2n)!}$, yielding
$f(x) = 
\frac{1}{2}\left(
1 + \sum_{n=1}^{\infty}\alpha_{n}x^{2n - 1}
\right)$.
Consequently, letting $P_k(x) := (1/2)(1 + \sum_{n=1}^{k}\alpha_{n}x^{2n - 1})$, and defining $e_k(x) := f(x) - P_k(x)$, we obtain 
$
|e_k(x)| = \left|
\frac{1}{2}\sum_{n=k+1}^{\infty} \alpha_n x^{2n - 1}
\right|
\le 
5\left(\frac{2}{\pi}\right)^k,
$
where the bound follows directly from \cref{lemma:polynomial_approximation_to_tanh_absolute_error_bound}, and again we require that $k\ge 2$. Moreover, $|f(x)| \ge 1/(1 + e^2) > 1/10$, thus $\mathcal{N} = \sqrt{\sum_{j}f(\psi_j)^2} \ge \sqrt{N}/10$.
Consequently, $O(\log(\gamma N/\epsilon \mathcal{N}^2)\gamma\sqrt{N}/\mathcal{N})$ is  $O(\log(1/\epsilon))$, giving a query complexity of $O(\log(1/\epsilon))$ and an associated circuit depth of $O(n\log(1/\epsilon))$ through~\cref{theorem:generalized_end_to_end_complexity_for_previous_technique_for_efficiently_analytic_functions}. 
iv.) Using the fact that $e^x= \sum_{n=0}^{\infty}\frac{x^n}{n!}$,  defining $\alpha := -1/2\sigma^2$ and $\beta := \frac{1}{\sigma\sqrt{2\pi}}$, we readily obtain
$
    f(x) = \beta e^{\alpha x^2} = \beta \sum_{n=0}^{\infty}\frac{\alpha^n x^{2n}}{n!}.
$A
Defining the first $k + 1$ terms of this series as $P_k(x)$, and defining $e_k(x) := f(x) - P_k(x)$, when $x\in[-1, 1]$, for all $k\ge 3$,
\begin{align}
    |e_k(x)|
    =
    \beta \left|
        \sum_{n=k+1}^{\infty}\frac{\alpha^n x^{2n}}{n!}
    \right|
    \le 
    \frac{1}{\pi}
    \sum_{n=k+1}^{\infty}\frac{|\alpha|^n}{n!}
    \le 
    \frac{1}{\pi}
    \sum_{n=k+1}^{\infty}\frac{1}{n!}
    \le 
    \frac{1}{\pi}
    \sum_{n=k+1}^{\infty}(1/2)^n
    =
    \frac{2^{-k}}{\pi},
\end{align}
following from assumption, since $\sigma^2 \ge 1/2$, and so $|\alpha| \le 1$. Moreover, since $\forall x \in[-1, 1], |f(x)| \ge 1/\sigma\sqrt{2\pi}$ (since $\sigma^2 \ge 1/2$), $\mathcal{N}=\sqrt{\sum_jf(\psi_j)^2}\ge \sqrt{N}/\sigma\sqrt{2\pi}$.
Consequently, $O(\log(\gamma N/\epsilon \mathcal{N}^2)\gamma\sqrt{N}/\mathcal{N})$ is  $O(\sigma\log(\sigma^2/\epsilon))$, giving a query complexity of $O(\sigma\log(\sigma^2/\epsilon))$ and an associated circuit depth of $O(n\sigma\log(\sigma^2/\epsilon))$ through~\cref{theorem:generalized_end_to_end_complexity_for_previous_technique_for_efficiently_analytic_functions}. 
v.) First, we use the series representation for $f(x) = \sin(x)$ in the interval $[-1, 1]$ of 
$
f(x) = \sum_{n=0}^{\infty}(-1)^n\frac{x^{2n+1}}{(2n+1)!}. 
$
The error term is
$
|e_k(x)| = |f(x) - P_k(x)| = \left|\sum_{n=k+1}^{\infty}(-1)^n\frac{x^{2n+1}}{(2n+1)!}\right|
\le 
\sum_{n=k+1}^{\infty}\left|\frac{1}{(2n+1)!}\right|
\le 
\sum_{n=k+1}^{\infty}\frac{1}{2^n} = 2^{-k},
$
using the fact that for $n \ge 4$, $n! \ge 2^n$.  We now prove that $\tilde \gamma := \max_{x\in[-1, 1]}|P'_k(x)| \le 2$.
Noting that $P'_k(x) = \sum_{n=0}^k \frac{(-1)^n x^{2n}}{(2n)!}$, we can observe that this is the degree $k$ truncation of the polynomial representation of $\cos(x)$, allowing us to upper-bound this value with 
\cref{lemma:technical1}, giving $\tilde \gamma \le 2$. 
Obviously, $\max_{x\in[-1, 1]}|f(x)| \le 1$. Additionally, $\lim_{x\to 0}|\sin(x)|/|x| = 1$, and for all $x\in[-1, 1]$, $|\sin(x)|/|x| \ge 3/4$, implying that $\mathcal{N}^2 = \sum_{j} |\sin(\psi_j)|^2 \ge \frac{9}{16}$. 
As a result, $O(\tilde \gamma \log(\gamma N/\epsilon \mathcal{N}^2)/\mathcal{N})$ is  $O(\log(N/\epsilon))$, giving a query complexity of $O(\log(N/\epsilon))$ and an associated circuit depth of $O(n\log(N/\epsilon))$ through~\cref{theorem:generalized_end_to_end_complexity_for_f_0_is_0_for_efficiently_analytic_functions}. 
\end{proof}

\section{Application: Exponential Improvement in Applying Tanh Activation Function in Machine Learning}\label{section:exponential_improvement_tanh}
To clearly demonstrate the exponential improvement offered by our technique over that of~\cite{guo2021nonlinear}, we give the end-to-end complexity statement for both algorithms applied to the important function $f(x) = \tanh(x)$ (using \cref{theorem:modified_variant_of_previous_technique_complexity_of_applying_a_polynomial_to_a_state} for consistency).

\begin{theorem}
[Exponential speedup in importance-weighted application of $\tanh$ activation function]
\label{theorem:exponential_speedup_in_applying_tanh_demonstration}
Given an arbitrary $n$-qubit quantum state preparation unitary $U$ such that $U\ket{0} = \ket{\psi} = \sum_{j=1}^N\psi_j\ket{j}$ (with $\forall j, \psi_j \in \mathbb{R}$, and $N = 2^n$), and the function $f(x)= \tanh(x)$, the algorithm of \cite{guo2021nonlinear}, as described in \cref{theorem:modified_variant_of_previous_technique_complexity_of_applying_a_polynomial_to_a_state}, prepares a state $\epsilon$ close in $\ell_2$-norm distance (for small $\epsilon$) to $\frac{1}{\mathcal{N}}\sum_jf(\psi_j)\ket{j}$ (where $\mathcal{N}$ is the normalization factor) with arbitrarily high probability of success with worst-case $O(\sqrt{N}\log(N/\epsilon))$ query complexity to a controlled $U$ and $U^{\dagger}$ circuit, with overall circuit depth of $O(n\sqrt{N}\log(N/\epsilon))$, and $O(n)$ ancillary qubits.
Our importance-sampling based procedure prepares a state $\epsilon$ close in $\ell_2$-norm distance from $\frac{1}{\mathcal{N}}\sum_jf(\psi_j)\ket{j}$ with arbitrarily high probability of success with worst-case $O(\log(N/\epsilon))$  query complexity, with overall circuit depth of $O(n\log(N/\epsilon))$, and $O(n)$ ancillary qubits.
\end{theorem}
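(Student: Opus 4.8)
The plan is to derive both complexity statements as direct consequences of the general end-to-end results already established, the only genuine content being to verify that for $f=\tanh$ every problem-dependent quantity appearing in those results is a constant, and that the required approximating polynomial has degree $O(\log(N/\epsilon))$.

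First I would record the elementary facts about $f(x)=\tanh(x)$ on $[-1,1]$. Since $\tanh$ is odd, $f(0)=0$, which is exactly what licenses an appeal to \cref{theorem:generalized_end_to_end_complexity_for_f_0_is_0_for_efficiently_analytic_functions} rather than the $f(0)>0$ variant. We have $\gamma:=\max_{x\in[-1,1]}|\tanh(x)|=\tanh(1)<1$, so $\gamma=\Theta(1)$. For the normalization I would use that $\tanh(x)/x$ is even and nonincreasing in $|x|$, hence $|\tanh(x)|\ge\tanh(1)\,|x|$ on $[-1,1]$; together with $|\tanh(x)|\le|x|$ this gives $\tanh^2(1)\le\mathcal{N}^2=\sum_j|\tanh(\psi_j)|^2\le 1$, i.e.\ $\mathcal{N}=\Theta(1)$ uniformly in $N$.

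Next I would take the polynomial approximation of \cref{lemma:polynomial_approximation_to_tanh_absolute_error_bound}, namely the odd polynomial $P_k(x)=\sum_{n=1}^{k}\alpha_n x^{2n-1}$ with $\alpha_n=2^{2n}(2^{2n}-1)B_{2n}/(2n)!$, which satisfies $P_k(0)=0$ and $\max_{x\in[-1,1]}|\tanh(x)-P_k(x)|\le 9(2/\pi)^{k}$. The end-to-end theorem requires a uniform approximation error of order $\epsilon\mathcal{N}^2/(\gamma N)=\Theta(\epsilon/N)$, so by the geometric decay it suffices to take $k=O(\log(N/\epsilon))$. The one quantity needing a short argument is $\tilde\gamma:=\max_{x\in[-1,1]}|P_k(x)/x|$: applying \cref{lemma:upper_bound_on_max_of_reduced_polynomial2} to the monomial-coefficient vector of $P_k$, whose nonzero entries are exactly the $\alpha_n$, gives $\tilde\gamma\le\sum_{n=1}^{k}|\alpha_n|$, and since the Taylor series of $\tanh$ has radius of convergence $\pi/2>1$ this is bounded by the absolute constant $\sum_{n=1}^{\infty}|\alpha_n|<\infty$ (equivalently, the bound $|\alpha_n|\le 5(2/\pi)^{n}$ for $n\ge 3$ used in the proof of \cref{lemma:polynomial_approximation_to_tanh_absolute_error_bound} yields the same conclusion). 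Hence $\tilde\gamma=O(1)$, uniformly in $k$.

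With $\gamma,\mathcal{N},\tilde\gamma=\Theta(1)$ and $k=O(\log(N/\epsilon))$ in hand, the second claim follows immediately from \cref{theorem:generalized_end_to_end_complexity_for_f_0_is_0_for_efficiently_analytic_functions}: query complexity $O(\tilde\gamma k/\mathcal{N})=O(\log(N/\epsilon))$, circuit depth $O(n\tilde\gamma k/\mathcal{N})=O(n\log(N/\epsilon))$, $O(n)$ ancillas. For the algorithm of \cite{guo2021nonlinear} I would instead feed the same polynomial $P_k$ into \cref{theorem:modified_variant_of_previous_technique_complexity_of_applying_a_polynomial_to_a_state}: there the block-encoding of $P_k(A)$ is applied to the uniform superposition, so the amplitude flagged by a successful measurement is $\Theta(\mathcal{N}/\sqrt N)=\Theta(1/\sqrt N)$, and amplitude amplification multiplies the $k=O(\log(N/\epsilon))$ queries by an extra $O(\sqrt N/\mathcal{N})=O(\sqrt N)$, giving $O(\sqrt N\log(N/\epsilon))$ queries, $O(n\sqrt N\log(N/\epsilon))$ depth, and again $O(n)$ ancillas (valid once $\epsilon$ is small enough for the error analyses of those theorems to apply). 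The main obstacle is precisely the uniform-in-$k$ bound on $\tilde\gamma$: a priori the Lipschitz constant of a degree-$k$ polynomial could grow polynomially in $k$, and it is the geometric decay of the $\tanh$ Taylor coefficients that keeps $\tilde\gamma$ — and hence our complexity — free of hidden $k$-dependence; everything else is bookkeeping of the $\Theta(1)$ factors so that each logarithm collapses to $\log(N/\epsilon)$.
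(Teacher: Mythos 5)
Your proposal is correct and follows the same route as the paper: establish that $\gamma$, $\mathcal{N}$, and $\tilde\gamma$ are all $\Theta(1)$ for $\tanh$ on $[-1,1]$, take the degree-$O(\log(N/\epsilon))$ truncated Taylor polynomial of \cref{lemma:polynomial_approximation_to_tanh_absolute_error_bound}, and plug these into \cref{theorem:generalized_end_to_end_complexity_for_f_0_is_0_for_efficiently_analytic_functions} and the appendix analogue of the prior technique. The one sub-step you handle differently is the bound on $\tilde\gamma$: the paper bounds $|P_k'(x)|\le|f'(x)|+|e_k'(x)|$ and needs $k\ge 15$ to control the tail, whereas your $\ell_1$-bound on the coefficients via \cref{lemma:upper_bound_on_max_of_reduced_polynomial2} and the geometric decay $|\alpha_n|\le 5(2/\pi)^n$ is equally valid, works for all $k$, and yields a (slightly larger but still absolute) constant.
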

\begin{proof}

\Cref{lemma:polynomial_approximation_to_tanh_absolute_error_bound} implies that $f(x)$ is an efficiently analytic function as per \cref{def:efficiently_analytic_function}. Defining $\alpha_n$ as per \cref{lemma:polynomial_approximation_to_tanh_absolute_error_bound}, we can write $f(x) = \sum_{n=1}^{\infty}\alpha_n x^{2n - 1}$, $P_k(x) = \sum_{n=1}^{k}\alpha_n x^{2n - 1}$ and $e_k(x) = \sum_{n=k+1}^{\infty}\alpha_n x^{2n - 1}$.

We begin by proving three facts about $f(x)$: (1) $\mathcal{N} \ge 3/4$, (2) $|f(x)| \le 1$, and (3) $\tilde \gamma \le 2$ (for $k \ge 15$). First, noting that $|f(x)|/|x|\ge 3/4$ (with the minimum at $x = \pm 1$) implying that $|f(x)| \ge 3|x|/4$, we have $\mathcal{N}^2 = \sum_{j}|f(\psi_j)|^2 \ge (9/16)\sum_{j}|\psi_j|^2 = 9/16$. Thus, $\mathcal{N} \ge 3/4$. Second, trivially $|f(x)| \le 1$. Finally, we prove that $\tilde \gamma \le 2$.
By \cref{lemma:upper_bound_on_max_of_reduced_polynomial}, $\tilde \gamma := \max_{x\in[-1, 1]}|P_k(x)/x| \le \max_{x\in[-1, 1]}|\frac{d}{dx}P_k(x)|$. To bound $\tilde \gamma$, we will first compute the Lipschitz constant of $f(x) = \tanh(x)$ in the interval $x\in[-1, 1]$, $L$:
\begin{align}\label{eqn:lipschitz_constant_tanh}
    L = \max_{x\in[-1, 1]}\left|
    \frac{d}{dx}
        \tanh(x)
    \right|
    = 
    \max_{x\in[-1, 1]}
    \left |
        1 - \tanh^2(x)
    \right|
    =
    1.
\end{align}
Then,
\begin{align}
    \left|\frac{d}{dx}P_k(x)\right|
    \le 
    \left|\frac{d}{dx}f(x)\right|
    +
    \left|\frac{d}{dx}e_k(x)\right|
    \le 
    1 + 
    \sum_{n = k + 1}^{\infty} (2n-1)\left(\frac{2}{\pi} \right)^{n}.
\end{align}
Observing that $\left(\frac{\pi}{2}\right)^{n/2} \ge (2n - 1)$ for all $n\ge 16$, 
\begin{align}
    \tilde \gamma
    \le 
    1 + 
    \sum_{n = k + 1}^{\infty} \left(\frac{2}{\pi} \right)^{n/2}
    \le 
    1 + 2 \left(\frac{2}{\pi}\right)^k
    \le 
    2,
\end{align}
imposing the restriction that $k \ge 15$.

We now branch the proof into two cases, one using the importance sampling technique, and the other using the $\ell_2$-norm version of \cite{guo2021nonlinear}. In both algorithms, we aim to produce an $\ell_2$-normalized quantum state $\ket{\phi}$ such that $\lnorm{\frac{1}{\mathcal{N}}\sum_jf(\psi_j)\ket{j} - \ket{\phi}}_2 \le \epsilon$ for some $\epsilon \ge 0$.
Noting that for a fixed function the classical computation can be precomputed and so we neglect the classical computation cost in our complexity statement.
To get our result with the importance sampling based technique, we invoke \cref{theorem:generalized_end_to_end_complexity_for_f_0_is_0_for_efficiently_analytic_functions} finding a query complexity (to a controlled $U$ and $U^{\dagger}$ circuit) of $O(\log(N/\epsilon))$, and a circuit complexity of $O(n\log(N/\epsilon))$. We get the result with the previous technique by invoking~\cref{theorem:generalized_end_to_end_complexity_for_previous_technique_for_efficiently_analytic_functions}, finding a query complexity of $O(\sqrt{N}\log(N/\epsilon))$ and a circuit complexity of $O(n\sqrt{N}\log(N/\epsilon))$. Thus, we have achieved an exponential improvement in the end-to-end complexity of applying the $\tanh(x)$ function to an arbitrary quantum state over the previous technique.
\end{proof}

\section{Application: Maximum Finding}\label{section:maximum_finding}

In this section, we derive a technique for maximum finding in the state preparation unitary input model. 
We expect that variants of this technique will be of significant interest to domains ranging from financial derivative pricing~\cite{chakrabarti2021threshold} to real-space molecular chemistry simulations~\cite{chan2023grid}.
This algorithm is made possible by our results on non-linear amplitude transformations, and in particular our importance-weighted results.
Given a state preparation unitary $U$, (as per \cref{def:state_prep_unitary_input}) preparing the $\ell_2$-normalized quantum state $U\ket{0} = \sum_{j=0}^{N-1}\psi_j\ket{j}$, the objective is to identify the basis vector associated with the maximum amplitude, with constant probability of success. 
This problem format has some overlap with the state tomography results of~\cite{van2021quantum, van2023quantum}, but the examination of this overlap remains as future work.

We require two results from~\cite{low2017hamiltonian}: the error function approximation to the sign function, and the polynomial approximation to the shifted error function.

\begin{lemma}[Shifted and scaled error function approximation to the shifted Heavyside function~\cite{low2017hamiltonian}]\label{lemma:shifted_erf_approximation_to_shifted_heavyside}
Let $\delta > 0$, $m>0$, $\tau \in [-1, 1]$, and $\epsilon \in (0, 1]$. Define the shifted heavy function in terms of the sign function, i.e., $\hvy_{\tau}(x) := \frac{1}{2}(\sgn(x - \tau) + 1)$.
Define the shifted and scaled error function as $\erfs_{m, \tau}(x) = \frac{1}{2}(\erf(m(x - \tau)) + 1)$. When $m = \frac{\sqrt{2}}{\delta}\log^{1/2}(\frac{2}{\pi\epsilon^2})$,
\begin{align}
    \max_{x\in [-1, \tau - \delta/2]\cup [\tau + \delta/2, 1]}|\erfs_{m, \tau}(x) - \hvy_{\tau}(x)| \le \epsilon,
\end{align}
\end{lemma}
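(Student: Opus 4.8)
The plan is to reduce the two-sided statement about $\erfs_{m,\tau}$ and $\hvy_{\tau}$ to a one-sided tail estimate for the error function. Unwinding the definitions, for every $x$ one has $\erfs_{m,\tau}(x) - \hvy_{\tau}(x) = \tfrac12\bigl(\erf(m(x-\tau)) - \sgn(x-\tau)\bigr)$, so after the substitution $y := x-\tau$ it suffices to bound $\tfrac12\lvert \erf(my) - \sgn(y)\rvert$ for $\lvert y\rvert \ge \delta/2$; the constraint $x\in[-1,\tau-\delta/2]\cup[\tau+\delta/2,1]$ is exactly this condition on $y$ (the clipping at $\pm1$ only removes values of $y$, never adds them). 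Since $\erf$ and $\sgn$ are both odd, I would then restrict to $y\ge \delta/2$, where $\sgn(y)=1$ and the quantity to control becomes $\tfrac12\bigl(1-\erf(my)\bigr)$.

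Next I would use monotonicity. The map $t\mapsto 1-\erf(t)$ is strictly decreasing for $t\ge 0$ and $m>0$, so $\sup_{y\ge\delta/2}\bigl(1-\erf(my)\bigr) = 1-\erf(m\delta/2)$, and the lemma reduces to the single inequality $1-\erf(m\delta/2)\le 2\epsilon$. Plugging in the prescribed value $m=\frac{\sqrt2}{\delta}\log^{1/2}\!\bigl(\frac{2}{\pi\epsilon^2}\bigr)$ gives $t:=m\delta/2 = \frac{1}{\sqrt2}\log^{1/2}\!\bigl(\frac{2}{\pi\epsilon^2}\bigr)$, hence $e^{-t^2} = \sqrt{\pi/2}\,\epsilon$. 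Applying the standard Gaussian tail bound $1-\erf(t)=\frac{2}{\sqrt\pi}\int_t^{\infty}e^{-u^2}\,du \le \frac{2}{\sqrt\pi}\int_t^{\infty}\tfrac{u}{t}e^{-u^2}\,du = \frac{e^{-t^2}}{t\sqrt\pi}$ and simplifying yields $1-\erf(t)\le \epsilon\big/\log^{1/2}\!\bigl(\frac{2}{\pi\epsilon^2}\bigr)$, which is at most $2\epsilon$ precisely when $\log\!\bigl(\frac{2}{\pi\epsilon^2}\bigr)\ge \tfrac14$.

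The only delicate point is confirming that this last condition holds over the whole stated range $\epsilon\in(0,1]$, and this is a routine case split rather than a genuine obstacle: for $\epsilon\ge \tfrac12$ the target inequality is trivial since $1-\erf(t)\le 1\le 2\epsilon$, and for $\epsilon<\tfrac12$ one has $\frac{2}{\pi\epsilon^2}>\frac{8}{\pi}>2$, so $\log\!\bigl(\frac{2}{\pi\epsilon^2}\bigr)> \log 2 > \tfrac14$ and the tail-bound branch applies. (One should also observe that $m$ as written is real only when $\epsilon\le\sqrt{2/\pi}$, so the interesting regime is implicitly small $\epsilon$.) Since the statement is quoted from~\cite{low2017hamiltonian}, an equally acceptable route is simply to cite that reference for the estimate after performing the elementary reduction above.
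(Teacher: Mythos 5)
Your proof is correct, but it takes a more self-contained route than the paper, whose entire ``proof'' of this lemma is a one-line citation: it simply observes that the statement is a restatement of Lemma 10 of the cited Hamiltonian-simulation reference and defers all analysis there. Your version reconstructs that underlying estimate from scratch: the reduction $\erfs_{m,\tau}(x)-\hvy_{\tau}(x)=\tfrac12(\erf(m(x-\tau))-\sgn(x-\tau))$, the use of oddness and monotonicity to collapse the supremum to the single point $y=\delta/2$, the exact evaluation $e^{-t^2}=\sqrt{\pi/2}\,\epsilon$ at $t=m\delta/2$, and the Gaussian tail bound $1-\erf(t)\le e^{-t^2}/(t\sqrt{\pi})$ are all sound, and your final case split (trivial for $\epsilon\ge\tfrac12$, and $\log(2/\pi\epsilon^2)>\log 2>\tfrac14$ for $\epsilon<\tfrac12$) correctly closes the gap. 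You also rightly flag that $m$ is only real for $\epsilon\le\sqrt{2/\pi}$, a regularity condition the paper's statement glosses over. What the paper's approach buys is brevity and traceability to the standard reference; what yours buys is a verifiable, elementary argument that does not require the reader to unpack the cited lemma's slightly different phrasing. Either is acceptable here, and your closing remark that one could instead cite the reference after the elementary reduction matches exactly what the paper does.
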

\begin{proof}
    This is another way of stating Lemma 10 of~\cite{low2017hamiltonian}.
\end{proof}

\begin{lemma}[Polynomial approximation to the shifted and scaled error function~\cite{low2017hamiltonian}]\label{lemma:polynomial_approximation_to_shifted_error_function}
Let $\tau \in [-1, 1]$, $m > 0$ and $\epsilon \in (0, O(1)]$. Then, we can efficiently construct a degree-$k$ polynomial, $P_{erf, m, \tau, k}(x)$, satisfying
\begin{align}
    \max_{x\in[-1, 1]}|P_{erfs, m, \tau, k}(x) - \erfs_{m, \tau}(x)| \le \epsilon
\end{align}
with $k \in O(m\log(1/\epsilon))$. 
Moreover, let $y = m(x - \tau)$, then
\begin{align}
    P_{erfs, m, \tau, k}(x) := 
    \frac{1}{2}
    +
    \frac{2m e^{-2m^2}}{\sqrt{\pi}}
    \left(
        I_0(2m^2)y
        +
        \sum_{j=1}^{(n-1)/2}
        I_j(2m^2)(-1)^j
        \left(
            \frac{T_{2j+1}(y)}{2j+1}
            -
            \frac{T_{2j-1}(y)}{2j-1}
        \right)
    \right)
\end{align}
where $I_j(\cdot)$ is a modified Bessel function of the first kind, and $T_n(\cdot)$ is a Chebyshev polynomial of the first kind.
\end{lemma}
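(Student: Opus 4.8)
The plan is to reduce everything to the known Jacobi--Anger expansion of the error function and then truncate. First I would recall that the (unshifted, unscaled) error function admits the expansion
$$
\erf(z) = \frac{2}{\sqrt{\pi}} e^{-z^2/2}\sum_{j=0}^\infty \text{(Bessel-weighted Chebyshev terms)},
$$
or more precisely the integral representation $\erf(mx) = \frac{2m}{\sqrt{\pi}}\int_0^x e^{-m^2 t^2}\,dt$ combined with the Jacobi--Anger-type identity $e^{-2m^2 u^2} = e^{-2m^2}\big(I_0(2m^2) + 2\sum_{j\ge 1} I_j(2m^2) T_{2j}(u)\big)$ for $u\in[-1,1]$. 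Substituting $u = $ (the appropriate affine image of the integration variable), integrating term by term, and using $\int T_{2j}(u)\,du = \frac{1}{2}\big(\frac{T_{2j+1}(u)}{2j+1} - \frac{T_{2j-1}(u)}{2j-1}\big)$ produces exactly the displayed series for $\erfs_{m,\tau}(x)$ in the variable $y = m(x-\tau)$, restricted to the regime where $y$ stays in $[-1,1]$ (or, more carefully, one rescales so the argument of the Chebyshev polynomials lies in the valid range — this bookkeeping is where I expect most of the friction). Truncating the infinite sum at $j = (k-1)/2$ defines $P_{erfs,m,\tau,k}(x)$, a degree-$k$ polynomial.

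Next I would bound the tail. The truncation error is controlled by $\sum_{j > (k-1)/2} I_j(2m^2)\cdot(\text{bounded Chebyshev factor})$, and since $|T_n|\le 1$ on $[-1,1]$ and $\big|\frac{T_{2j+1}}{2j+1} - \frac{T_{2j-1}}{2j-1}\big| \le \frac{2}{2j-1}$, the error is at most $O\!\big(m e^{-2m^2}\sum_{j>(k-1)/2} I_j(2m^2)\big)$. The key analytic input is the decay of modified Bessel functions: for $j$ larger than roughly $2m^2$, $I_j(2m^2)$ decays super-exponentially in $j$ (the standard bound $I_j(z) \le \frac{(z/2)^j}{j!}$ gives this once $j \gtrsim e z$, here $z = 2m^2$). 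Hence choosing $k = \Theta(m^2) + O(m\log(1/\epsilon))$ — and noting that for the range of $m$ of interest the dominant term is $O(m\log(1/\epsilon))$, or alternatively invoking the sharper tail estimate from \cite{low2017hamiltonian} directly — forces the tail below $\epsilon$. I would cite the error-function polynomial-approximation lemma of \cite{low2017hamiltonian} (their Corollary/Lemma on approximating $\erf$) for the precise degree bound $k\in O(m\log(1/\epsilon))$ rather than re-deriving the Bessel tail from scratch.

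The main obstacle, as flagged above, is the domain/scaling bookkeeping: the Chebyshev polynomials $T_n(y)$ are only well-behaved (bounded by $1$) for $y\in[-1,1]$, but $y = m(x-\tau)$ can be as large as $2m$ in magnitude when $x,\tau\in[-1,1]$ and $m$ is large. The resolution used in \cite{low2017hamiltonian} is to express the whole construction so that the Chebyshev argument is actually $x$ (or $(x-\tau)$ suitably normalized), not $m(x-\tau)$, by absorbing the scaling $m$ into the coefficients via a different integral substitution; the displayed formula in the lemma statement should be read with that understanding. So the cleanest route is: (i) state the construction as a direct consequence of the $\erf$-approximation result of \cite{low2017hamiltonian}, (ii) apply the affine change of variables $x\mapsto x-\tau$ and linearity to get $\erfs_{m,\tau}$, (iii) observe that an affine reparametrization does not change the polynomial degree, and (iv) record the explicit series form by term-by-term integration of the Jacobi--Anger expansion. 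Steps (ii)--(iii) are routine; step (iv) is a verification that the stated closed form matches, which amounts to checking the antiderivative identity for $T_{2j}$ and is purely mechanical. I would keep the written proof short — essentially ``this is the shifted, rescaled version of the error-function approximation lemma of \cite{low2017hamiltonian}, obtained by integrating the Jacobi--Anger expansion of a Gaussian term by term'' — deferring the Bessel-tail degree count to the cited reference.
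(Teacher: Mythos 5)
Your proposal is correct and takes essentially the same route as the paper: the paper's entire proof is the single sentence ``Follows immediately from Corollary 5 of~\cite{low2017hamiltonian},'' which is exactly your step (i), and your Jacobi--Anger/term-by-term-integration sketch plus the domain-bookkeeping caveat is simply a more detailed unpacking of what that corollary contains. (The only quibble is your heuristic bound $I_j(z)\le (z/2)^j/j!$, which is actually a lower bound on the leading term and needs an extra $e^{z^2/4}$ factor, but you explicitly defer the tail estimate to the cited reference, so nothing rests on it.)
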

\begin{proof}
Follows immediately from Corollary 5 of ~\cite{low2017hamiltonian}. 
\end{proof}

\begin{theorem}[Maximum finding in state preparation  input model]\label{theorem:maximum_finding_in_state_preparation_unitary_input_model}
Let $U$ be an $n$-qubit state preparation unitary $U\ket{0} = \sum_{j=1}^{N}\psi_j \ket{x_j}$ such that $\forall j: \psi_j \in \mathbb{R}_{\ge 0}$, and $\{x_j\}_j$ is some ordering on the set of standard basis vectors, where the subscript makes the mapping to the corresponding amplitude label explicit. Without loss of generality, assume that $\psi_1 > \psi_2 \ge ... \ge \psi_N$, and define the amplitude gap $\Delta := \psi_1 - \psi_2$. 
We are given the values $\psi_1$ and $\Delta$. 
Then, with arbitrarily high probability, and $\tilde{O}(\frac{1}{\psi_{1}\Delta}$) query complexity to a controlled $U$ and $U^{\dagger}$ circuit, we can determine the index of the basis state with maximum probability amplitude (i.e., we learn which basis state the label $x_1$ corresponds to). 
\end{theorem}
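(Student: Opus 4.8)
The plan is to reduce the whole problem to a single invocation of \cref{theorem:generalized_end_to_end_complexity_for_f_0_is_0_for_efficiently_analytic_functions} applied to a carefully chosen ``threshold-then-keep'' non-linear function, engineered so that the transformed state is essentially $\ket{x_1}$. Set the threshold at the midpoint of the gap, $\tau := (\psi_1+\psi_2)/2 \in (0,1]$, and the transition width $\delta := \Delta$, so that $\tau-\delta/2 = \psi_2$ and $\tau+\delta/2 = \psi_1$; hence every $\psi_j$ with $j\ge 2$ lies in $[0,\tau-\delta/2]$ while $\psi_1$ sits at $\tau+\delta/2$. I would then work with
\begin{align}
 f(x) := x\cdot\erfs_{m,\tau}(x), \qquad m := \tfrac{\sqrt 2}{\Delta}\,\log^{1/2}\!\bigl(\tfrac{2}{\pi\epsilon'^2}\bigr),
\end{align}
where $\epsilon' = \Theta(\psi_1/\sqrt N)$ is fixed below. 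The multiplication by $x$ is the crux: it makes $f(0)=0$ (so that the importance-sampling machinery of \cref{theorem:polynomial_transformation_of_real_state_amplitudes_via_importance_sampling} is applicable), it forces the normalization $\mathcal N$ to scale like $\psi_1$ (which is where the $1/\psi_1$ in the complexity must come from), and it keeps the ``reduced maximum'' $\tilde\gamma$ at $O(1)$ instead of growing with $m$ or the degree.

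Next I would record the elementary behaviour of $f$ on the actual amplitudes. Since $\hvy_\tau(\psi_1)=1$ and $\hvy_\tau(\psi_j)=0$ for $j\ge 2$, \cref{lemma:shifted_erf_approximation_to_shifted_heavyside} gives $\erfs_{m,\tau}(\psi_1)\ge 1-\epsilon'$ and $\erfs_{m,\tau}(\psi_j)\le\epsilon'$ for $j\ge 2$, so $f(\psi_1)\in[\psi_1(1-\epsilon'),\psi_1]$ and $|f(\psi_j)|\le\epsilon'$ otherwise. With $\mathcal N^2 := \sum_j f(\psi_j)^2$ this yields $\psi_1(1-\epsilon')\le\mathcal N\le\sqrt{\psi_1^2+N\epsilon'^2}$, hence $\mathcal N=\Theta(\psi_1)$, and $\gamma:=\max_{[-1,1]}|f|\le 1$. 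A short computation then shows $\lnorm{\tfrac{1}{\mathcal N}\sum_j f(\psi_j)\ket{x_j}-\ket{x_1}}_2 = O(\sqrt N\,\epsilon'/\psi_1)$, which is at most $\tfrac18$ once $\epsilon' = \Theta(\psi_1/\sqrt N)$; moreover $\psi_1\ge N^{-1/2}$ (from $N\psi_1^2\ge\sum_j\psi_j^2 = 1$) forces $\log(1/\epsilon')=O(\log N)$ and therefore $m = O(\Delta^{-1}\sqrt{\log N})$. All of these parameters are computable from the given data $\psi_1,\Delta$.

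Then I would build the polynomial and invoke the end-to-end theorem. Take $P(x):= x\cdot P_{\erfs,m,\tau,k'}(x)$ with $P_{\erfs,m,\tau,k'}$ the degree-$k'$ polynomial of \cref{lemma:polynomial_approximation_to_shifted_error_function}; then $P(0)=0$, $\deg P = k'+1$, and $\max_{[-1,1]}|f(x)-P(x)|\le\epsilon_0$ whenever $\max_{[-1,1]}|\erfs_{m,\tau}-P_{\erfs,m,\tau,k'}|\le\epsilon_0$, which holds for $k'=O(m\log(1/\epsilon_0))$. Crucially $\tilde\gamma := \max_{[-1,1]}|P(x)/x| = \max_{[-1,1]}|P_{\erfs,m,\tau,k'}(x)|\le 1+\epsilon_0\le 2$ since $P_{\erfs,m,\tau,k'}$ tracks $\erfs_{m,\tau}\in[0,1]$. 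Now apply \cref{theorem:generalized_end_to_end_complexity_for_f_0_is_0_for_efficiently_analytic_functions} to $f$ (identifying its degree-versus-error function $K(\cdot)$ with the $O(m\log(1/\cdot))$ above) with target $\ell_2$-error $\tfrac14$: the theorem mandates uniform approximation error $\epsilon_0 = \Theta(\tfrac14\,\mathcal N^2/\gamma N) = \Theta(\psi_1^2/N)$, hence degree $k = \Theta(m\log(N/\psi_1^2)) = \tilde O(1/\Delta)$, and it outputs a state $\tfrac14$-close to $\tfrac{1}{\mathcal N}\sum_j f(\psi_j)\ket{x_j}$ using $O(\tilde\gamma k/\mathcal N) = \tilde O(1/(\psi_1\Delta))$ controlled-$U$/$U^\dagger$ queries with $O(n)$ ancillas. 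By the triangle inequality this state is within $\tfrac14+\tfrac18 < \tfrac12$ of $\ket{x_1}$, so a computational-basis measurement returns $x_1$ with probability $\ge 49/64$; repeating $O(\log(1/\delta))$ times and outputting the modal outcome drives the success probability to $1-\delta$ while keeping the total query count $\tilde O(1/(\psi_1\Delta))$.

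I expect the main obstacle to be precisely the choice of $f$: one needs $f(0)=0$ (for the importance-sampling theorem), $\mathcal N=\Theta(\psi_1)$ (so the bound is $\tilde O(1/(\psi_1\Delta))$ rather than $\tilde O(1/\Delta)$), and $\tilde\gamma=O(1)$ (so that $\tilde\gamma$ does not inject an extra $\mathrm{poly}(m)=\mathrm{poly}(1/\Delta)$ factor), all at once — the ``multiply the shifted error function by $x$'' trick is what simultaneously achieves all three. Everything downstream is routine bookkeeping of the three error scales ($\erfs$-vs-$\hvy$ error $\epsilon'$, polynomial-vs-$\erfs$ error $\epsilon_0$, and the algorithmic $\ell_2$ error) together with standard measure-and-repeat amplification.
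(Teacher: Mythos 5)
Your proposal is correct and follows essentially the same route as the paper's proof: the same threshold $\tau=(\psi_1+\psi_2)/2=\psi_1-\Delta/2$ with width $\Delta$, the same function $f(x)=x\cdot\erfs_{m,\tau}(x)$ (with the multiplication by $x$ securing $f(0)=0$, $\mathcal N=\Theta(\psi_1)$, and $\tilde\gamma\le 2$), the same polynomial $xP_{\erfs,m,\tau,k}(x)$, and the same invocation of \cref{theorem:generalized_end_to_end_complexity_for_f_0_is_0_for_efficiently_analytic_functions} yielding $k=\tilde O(1/\Delta)$ and $\tilde O(1/(\psi_1\Delta))$ queries. The only cosmetic differences are your choice of constant target error plus measure-and-repeat amplification (the paper instead fixes $\epsilon=0.1$ for a $99/100$ single-shot success probability) and slightly different but equivalent settings of the $\erfs$-vs-Heaviside error parameter.
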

\begin{proof}

Let $m > 0$, $\epsilon>0$, $\epsilon_0 > 0$ and $\epsilon_1 > 0$. Set the parameters $\delta$ and $\tau$ in \cref{lemma:shifted_erf_approximation_to_shifted_heavyside} as $\delta = \Delta$, and $\tau := \psi_1 - \frac{\Delta}{2}$. \Cref{lemma:polynomial_approximation_to_shifted_error_function} proves that the function $\erfs_{m, \tau}(x)$ is $\epsilon_0$-uniformly approximated by the polynomial $P_{m, k}(x) := P_{erfs, m, \tau, k}(x)$ of degree $k \in O(m\log(1/\epsilon_0))$. Define the function $f_{m, \tau}(x) := x\erfs_{m, \tau}(x)$. The function $f_{m, \tau}(x)$ is $\epsilon_0$-uniformly approximated by the polynomial $x P_{m, k}(x)$, and has $\gamma := \max_{x\in[-1, 1]}|f(x)| \le 1$. Finally, $\tilde \gamma = \max_{x\in[-1, 1]}|x P_{m, k}(x) / x| = \max_{x\in[-1, 1]}|P_{m, k}(x)| \le 2$ (for $\epsilon_0 \le 1$).
Define $\mathcal{N}^2 := \sum_{j}f_{m, \tau}(\psi_j)^2$.
Setting $m = \frac{\sqrt{2}}{\Delta}\log^{1/2}(\frac{2}{\pi \epsilon_1^2})$, \cref{lemma:shifted_erf_approximation_to_shifted_heavyside} ensures that $\erfs_{m, \tau}(\psi_1) \ge 1 - \epsilon_1$, and that $\forall j\neq 1: \erfs_{m, \tau}(\psi_j) \le \epsilon_1$. Consequently, $\mathcal{N}^2 \ge f_{m, \psi_1}(\psi_1)^2 \ge (1-\epsilon_1)^2\psi_1^2 \implies \mathcal{N} \ge (1-\epsilon_1)\psi_1$. Assuming that $\epsilon_1 \le 1/2$, $\mathcal{N} \ge \psi_1/2$. 
Finally, define the $\ell_2$-normalized state $\ket{f_{m, \tau}(\psi)} := \frac{1}{\mathcal{N}}\sum_j f_{m, \tau}(\psi_j)\ket{x_j}$. 

We want to determine the query and circuit complexities required to obtain the $\ell_2$-normalized state $\ket{\phi} = \sum_{j}\phi_j\ket{x_j}$ such that $\lnorm{\ket{\phi} - \ket{x_1}}_2 \le \epsilon$, so that we can then use it to bound the complexity of identifying the basis vector with maximum amplitude, with high probability of success.

First, we will prove that by setting $m = \frac{\sqrt{2}}{\Delta}\log^{1/2}(\frac{16\sqrt{N}}{\pi \epsilon^2})$, we get the guarantee that $\lnorm{\ket{f_{m, \tau}(\psi)} - \ket{x_1}}_2 \le \epsilon/2$. Using $\mathcal{N}^2 \le \psi_1^2(\erfs_{m, \tau}(\psi_1)^2 + \epsilon_1^2 N)$, and that for non-negative numbers $a, b$, $\sqrt{a^2 + b^2} \le a + b$, we have $\mathcal{N} \le \psi_1 (\erfs_{m, \tau}(\psi_1) + \epsilon_1\sqrt{N})$, and so
\begin{align}
    \lnorm{\ket{f_{m, \tau}(\psi)} - \ket{x_1}}_2^2
    &=
    2\left(
    1 - \frac{\psi_1\erfs_{m, \tau}(\psi_1)}{\mathcal{N}}
    \right)
    \le 
    2\left(
    \frac{\sqrt{N}\epsilon_1}{\erfs_{m,\tau}(\psi_1) + \sqrt{N}\epsilon_1}
    \right).
\end{align}
Since $\erfs_{m,\tau}(\psi_1) \ge 1 - \epsilon_1$, requiring $\epsilon_1 \le 1/2\sqrt{N}$, gives $\erfs_{m,\tau}(\psi_1) + \sqrt{N}\epsilon_1 \ge 1/2$. Then, $\lnorm{\ket{f_{m, \tau}(\psi)} - \ket{x_1}}_2^2 \le 4 \epsilon_1 \sqrt{N}$. Setting $\epsilon_1 \le \epsilon^2/8\sqrt{N}$ ensures that $\lnorm{\ket{f_{m, \tau}(\psi)} - \ket{x_1}}_2 \le \epsilon/2$. Thus, the error bound holds when $m = \frac{\sqrt{2}}{\Delta}\log^{1/2}(\frac{16\sqrt{N}}{\pi \epsilon^2})$. 

We will now bound the complexity of preparing the state $\ket{\phi}$ such that $\lnorm{\ket{\phi}- \ket{f_{m,\tau}(\psi)}}_2 \le \epsilon / 2$.
We construct the state $\ket{\phi}$ by invoking \cref{theorem:generalized_end_to_end_complexity_for_f_0_is_0_for_efficiently_analytic_functions} transforming $\ket{\psi}$ by $f_{m,\tau}(x)$. Noting that $\tilde \gamma \le 2$, $\mathcal{N}\ge 1/\psi_1$, and that the degree of the approximation polynomial is given asymptotically by $O(m\log(1/\epsilon_0))$. Noting that \cref{theorem:generalized_end_to_end_complexity_for_f_0_is_0_for_efficiently_analytic_functions} gives the preceding $\epsilon/2$ error bound when $\epsilon_0 \le \frac{\epsilon\mathcal{N}^2}{2\gamma N}$, the degree of the approximating polynomial is thus $k \in O(\log(N/\epsilon\psi_1^2)\log^{1/2}(\sqrt{N}/\epsilon^2)/\Delta) = \tilde{O}(1/\Delta)$. This then gives the guarantee that $\lnorm{\ket{\phi} - \ket{f_{m, \tau}(\psi)}}_2 \le \epsilon/2$ with query complexity $\tilde{O}(\psi_1^{-1}\Delta^{-1})$ to controlled $U$ and $U^{\dagger}$ circuits, with $\tilde{O}(n\psi_1^{-1}\Delta^{-1})$ additional circuit depth. 

Thus, we have prepared a state $\ket{\phi}$ satisfying $\lnorm{\ket{\phi} - \ket{x_1}}_2 \le \epsilon$.
Since $\lnorm{\ket{\phi} - \ket{x_1}}_2 \le \epsilon$ implies $2(1 - \ip{x_1}{\phi}) \le \epsilon^2$, it is easy to show that the probability of successfully measuring $\ket{x_1}$, i.e., $\phi_1^2$, is lower-bounded as $\phi_1^2 \ge (1 - \tfrac{\epsilon^2}{2})^2$. Setting, e.g., $\epsilon=0.1$ ensures that the procedure success with probability at least $99/100$. 
\end{proof}

\section{Application: Continuous State Preparation}\label{section:continuous_state_prep}
In this section, we demonstrate that previous ideas on quantum state preparation can be easily captured in our framework of non-linear amplitude transformations, matching previous asymptotic complexities. Given some function $f$ (e.g. an efficiently computable function as described in \cite{rattew2022preparing}), and a set of $N$ points $\{x_j\}_j$ at which $f$ is to be evaluated, the goal is to prepare the $n$ qubit quantum state (with $N = 2^n$) $\ket{f}_n := \frac{1}{\mathcal{N}}\sum_{j=1}^N f(x_j)\ket{j}_n$ where $\mathcal{N}^2 = \sum_{j=1}^N |f(x_j)|^2$. In this section, we implicitly assume that the function is scaled so that its maximum over its domain is at most $1$.

There is a wide literature on this topic, originating with Grover's work on black-box state synthesis~\cite{grover2000synthesis}, in which a state with amplitudes following an arbitrary function (not just a discretized continuous function) can be prepared with complexity $O(\sqrt{N}/\mathcal{N})$, the inverse to a quantity termed the ``generalized filling ratio'' in ~\cite{rattew2022preparing}. A number of subsequent results have improved the practical efficiency of black-box techniques, for instance by reducing the cost of the required quantum arithmetic operations (see e.g.,~\cite{sanders2019black,bausch2022fast}).

When $f$ is continuous, a domain $[a, b]$ is additionally specified, and the set of points becomes a uniformly spaced grid of $N$ points $x_0, ..., x_{N-1}$ (with $x_0 = a$, $x_{N-1}=b$). In this setting, it was shown in \cite{rattew2022preparing} that the generalized filling ratio converges to an asymptotic constant they called the ``filling ratio'', $\mathcal{F}$. The rate of convergence was shown to be exponential in the number of qubits in the general case of Reimann integrable functions.
The filling ratio was shown to only depend on properties of the function, and intuitively can be visualized by a bound measuring the percentage of a bounding box filled by the function.

There are a number of other techniques in the state preparation literature with different approaches that in some cases exploit additional structure, e.g., via matrix product states (MPS)~\cite{holmes2020efficient,gonzalez2023efficient}, for efficiently integrable function~\cite{grover2002creating}, and a number of techniques both in the variational setting (e.g., ~\cite{lloyd2018quantum,zoufal2019quantum,romero2021variational}), and the in the distribution-specific setting (e.g.,~\cite{kitaev2008wavefunction,rattew2021efficient}).
However, of most relevance to our present work are the ideas in~\cite{mcardle2022quantum, gonzalez2023efficient}.
In~\cite{mcardle2022quantum}, an elegant quantum arithmetic free approach for preparing a continuous quantum state is presented.  Their algorithm creates an $N$-dimensional diagonal block encoding where the $j^{th}$ diagonal entry has value $\sin(j/N)$. To prepare quantum state with amplitudes proportional to a given function $f$ with an efficient uniform approximation, they then apply a polynomial approximately implementing $f((b -a)\arcsin(x) + a)$ to their block encoding, and then apply the result to the uniform superposition. The result is an algorithm which asymptotically prepares the desired state with overall complexity $\tilde{O}(\mathcal F^{-1})$ without using any quantum arithmetic. 
In \cite{gonzalez2023efficient}, they use MPS techniques to prepare a state with amplitudes proportional to the linear function. They then make the interesting observation that applying a non-linear function to the amplitudes of such a state effectively prepares a state with amplitudes proportional to the desired function.

Interestingly, when viewed through the lens of our diagonal block-encoding, \cite{gonzalez2023efficient} and \cite{mcardle2022quantum} are essentially two different ways of looking at the same idea (and essentially have the same asymptotic complexity). 
Viewing \cite{gonzalez2023efficient} through our framework, the state with amplitudes proportional to the linear function is converted into a diagonal block encoding $diag(0, 1/N, ..., (N-1)/N)$ (neglecting normalization of the state, which can be handled by rescaling the input to the function). The block-encoding is then transformed by the desired function, the block-encoding is applied to the uniform superposition, and the ancillary qubits are measured out. Clearly, this can be easily done with rigorous end-to-end complexity analysis in our framework through~\cref{theorem:generalized_end_to_end_complexity_for_previous_technique_for_efficiently_analytic_functions} and is expected to give an overall complexity of $\tilde{O}(\mathcal{F}^{-1})$.

From this perspective, \cite{gonzalez2023efficient} and \cite{mcardle2022quantum} are essentially the same, only \cite{mcardle2022quantum} prepares the diagonal block encoding of the linear function by first preparing a diagonal block encoding of $\sin(j/N)$ and then applying a polynomial approximation to $\arcsin(x)$, whereas \cite{gonzalez2023efficient} just directly prepares the desired block-encoding (albeit, without rigorous theoretical guarantees, due to the difficulty of such analysis with MPS techniques). 

To maintain our end-to-end error bounds while avoiding quantum arithmetic, in \cref{theorem:state_preparation_through_non_linear_transformation} we rederive the approach of \cite{mcardle2022quantum} in our framework, avoiding the need to prepare a quantum state with amplitudes proportional to the linear function without quantum arithmetic. We note that while our framework provides a conceptually simple way to rederive their results, our implementation requires slightly more ancillary qubits than theirs does.

Finally, we note that for functions satisfying $f(0)=0$, we may be able to get better than filling-ratio based complexity for state preparation (in certain settings) by instead using our importance sampling variant, but we leave further exploration of this idea for future work.

We will now provide a theorem stating the complexity of state preparation in the non-linear amplitude transformation framework, giving a simple sketch for the proof.

\begin{theorem}[State preparation through non-linear amplitude transformation~\cite{mcardle2022quantum,gonzalez2023efficient}]\label{theorem:state_preparation_through_non_linear_transformation}
Let $\epsilon > 0$.
Let $f : [a, b] \mapsto \mathbb R$ be a Lipschitz function such that $\max_{x\in[-1, 1]}|f(x)| \le 1$. Let $\{x_j\}_j$ be a uniform grid such that $x_0 = a$ and $x_{N-1} = b$.  
Define 
$\ket{f(\psi)}_n := \frac{1}{\mathcal{N}}\sum_j f(x_j)\ket{j}_n$, where $\mathcal{N} := \sum_j |f(x_j)|^2$.
Then, using~\cref{theorem:generalized_end_to_end_complexity_for_previous_technique_for_efficiently_analytic_functions}, a quantum state $\ket{\phi}_n$ satisfying $\lnorm{\ket{\psi}_n - \ket{f(\psi)}_n}_2 \le \epsilon$ can be prepared with overall complexity $\tilde{O}(\mathcal{F}^{-1})$, where $\mathcal{F}^{-1}$ is the function-dependent asymptotic constant called the ``filling ratio'' as per~\cite{rattew2022preparing}.
\end{theorem}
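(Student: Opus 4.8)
The plan is to instantiate, inside our framework, the quantum-arithmetic-free construction of \cite{mcardle2022quantum}, combining the generalized (block-encoding) input model of \cref{section:state_preparation_block_encodings} with \cref{theorem:generalized_end_to_end_complexity_for_previous_technique_for_efficiently_analytic_functions}. First I would build, using no quantum arithmetic, a diagonal $(1,O(1),0)$-block-encoding $U_D$ of $D:=\mathrm{diag}(d_0,\dots,d_{N-1})$ with $d_j:=\cos(c_0+c_1 j)$ (the analog of McArdle's $\sin(j/N)$), where $c_0,c_1>0$ are fixed so that $c_0+c_1 j\in[\eta,\tfrac{\pi}{2}-\eta]$ for every $j\in[N]$ and some fixed $\eta\in(0,\tfrac{\pi}{4})$ (e.g.\ $c_0=\eta$, $c_1=(\tfrac{\pi}{2}-2\eta)/(N-1)$). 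This is done by a cascade in which index qubit $i$ controls an $R_y$ rotation by $2^{i+1}c_1$ on a single ancilla, preceded by a global $R_y(2c_0)$, followed by projecting that ancilla onto $\ket 0$; it has $O(n)$ depth and $O(1)$ ancillas. The crucial feature — and the reason to work in the block-encoding input model rather than through a normalized state-preparation unitary — is that the entries $d_j$ are $\Theta(1)$ rather than $\Theta(1/\sqrt N)$, so the $\arccos$-composition below stays well-conditioned.

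Next I would define $g(y):=f\!\big(a+(b-a)\tfrac{\arccos(y)-c_0}{c_1(N-1)}\big)$, chosen so that $g(d_j)=f\!\big(a+(b-a)\tfrac{j}{N-1}\big)=f(x_j)$ for all $j$. Since every angle $c_0+c_1 j$ lies in $[\eta,\tfrac{\pi}{2}-\eta]$, $g$ is only ever evaluated at $y\in[\sin\eta,\cos\eta]\subset(-1,1)$, where $\arccos$ is analytic with $|(\arccos)'|\le 1/\sin\eta$; composing with the (degree-one) affine maps — whose slopes are $N$-independent because $c_1(N-1)$ is — and with the $L_f$-Lipschitz $f$ shows $g$ is Lipschitz with a constant depending only on $L_f$, $b-a$, $\eta$, and $\max_{[-1,1]}|g|=\max|f|\le 1=:\gamma$. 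Assuming $f$ admits an efficient uniform polynomial approximation (the setting of \cite{rattew2022preparing,mcardle2022quantum}), the same holds for $g$: after affine rescaling so that $\arccos$ is evaluated only in its region of analyticity it is $\epsilon$-approximable in degree $O(\log(1/\epsilon))$, the affine maps are exact, and \cref{lemma:composition_of_polynomial_approximations} assembles these into an $\epsilon$-approximating polynomial $P_k$ for $g$ of degree $k=\widetilde O(\mathrm{polylog}(1/\epsilon))$.

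I would then invoke \cref{theorem:generalized_end_to_end_complexity_for_previous_technique_for_efficiently_analytic_functions} — the $\ell_2$-norm, diagonal-block-encoding variant of \cite{guo2021nonlinear} from \Cref{appComparison}, which we use here rather than \cref{theorem:generalized_end_to_end_complexity_for_f_0_is_0_for_efficiently_analytic_functions} precisely because $g(0)\neq 0$ in general — on $U_D$ and $g$. This yields a state $\epsilon$-close in $\ell_2$ distance to $\tfrac{1}{\mathcal N}\sum_j g(d_j)\ket j=\tfrac{1}{\mathcal N}\sum_j f(x_j)\ket j=\ket{f(\psi)}_n$ with $\mathcal N^2=\sum_j f(x_j)^2$. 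Tracking that theorem's cost, the number of queries to $U_D$ — and hence, since each query costs $O(n)$ gates and no external circuit $U$ is involved, the total gate count up to the $O(n)$ factor — is $\widetilde O(\sqrt N/\mathcal N)$: the $\sqrt N$ from applying the transformed block encoding $P_k(D)$ to the uniform superposition, the $1/\mathcal N$ from amplifying the post-selection success probability $\mathcal N^2/N$, and the remaining factors ($k$, $\gamma$, the QSVT overhead) polylog in $N$, $1/\epsilon$.

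Finally I would identify $\sqrt N/\mathcal N$ with $\mathcal F^{-1}$: by \cite{rattew2022preparing}, for a Riemann-integrable (a fortiori Lipschitz) $f$ on a uniform grid of $[a,b]$ the generalized filling ratio $\mathcal N/\sqrt N$ converges to the filling ratio $\mathcal F=\big(\tfrac{1}{b-a}\int_a^b f(x)^2\,dx\big)^{1/2}$ at a rate exponential in $n$, so for all $n$ above an $f$-dependent constant $\mathcal N=\Theta(\sqrt N\,\mathcal F)$ and the overall complexity is $\widetilde O(\mathcal F^{-1})$, matching \cite{mcardle2022quantum,gonzalez2023efficient}. I expect the main obstacles to be: (i) choosing the angle map $j\mapsto c_0+c_1 j$ so that $g$ has Lipschitz constant and approximation degree independent of $N$ — i.e.\ keeping the argument of $\arccos$ uniformly bounded away from $\pm 1$ — which is where essentially all the care goes; and (ii) the clean transfer of the $\mathcal N$-dependent complexity to $\mathcal F^{-1}$, which rests on importing the convergence-rate estimate of \cite{rattew2022preparing} and absorbing the transient small-$n$ regime into $\widetilde O(\cdot)$.
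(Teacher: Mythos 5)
Your proposal is correct at the same (sketch) level of rigor as the paper's own proof, but it takes a genuinely different route through the machinery, and the difference is substantive. The paper builds a unitary $U_{\psi_0}$ that is an $(\sqrt{N}/\mathcal{N}_0,1,0)$-SPBE for the \emph{normalized} state $\ket{\psi_0}_n=\frac{1}{\mathcal{N}_0}\sum_j\sin(j/N)\ket{j}_n$, feeds it through the SPBE layer (\cref{lemma:spbe_fixed_amplitude_amplification}, \cref{generalizing_previous_results_to_spbe}) so that \cref{theorem:generalized_end_to_end_complexity_for_previous_technique_for_efficiently_analytic_functions} acts on the amplitudes of $\ket{\psi_0}$, and then composes $f$ with a polynomial approximation to $\arcsin$ on $[-0.85,0.85]$. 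You instead construct the diagonal $(1,1,0)$-block-encoding of $\mathrm{diag}(\cos(c_0+c_1 j))$ directly from the rotation cascade and projection of the single ancilla, skipping the SPBE layer and the reconstruction of a diagonal block-encoding from a state-preparation unitary, and compose $f$ with $\arccos$ on an interval bounded away from $\pm 1$. What your route buys is real: the diagonal entries you transform are $\Theta(1)$, so the inverse-trigonometric composition has $N$-independent Lipschitz constant and degree. In the paper's route the transformed quantities are the normalized amplitudes $\sin(j/N)/\mathcal{N}_0=\Theta(1/\sqrt{N})$, and recovering $j/N$ from them requires $\arcsin(\mathcal{N}_0\,y)$ rather than $\arcsin(y)$; a polynomial bounded on $[-1,1]$ whose slope reaches $\Theta(\sqrt{N})$ must have degree $\Omega(N^{1/4})$ by Markov's inequality, so the paper's parenthetical ``rescaling the input to the function'' hides a genuine conditioning issue that your construction simply avoids. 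Two small formal points you should make explicit: (i) \cref{theorem:generalized_end_to_end_complexity_for_previous_technique_for_efficiently_analytic_functions} is stated for a state-preparation unitary whose diagonal block-encoding is built internally, so you need to remark that its proof (via \cref{lemma:polynomial_transformation_of_the_real_part_block_encoding} and \cref{theorem:modified_variant_of_previous_technique_complexity_of_applying_a_polynomial_to_a_state}) applies verbatim to any exact real diagonal block-encoding supplied directly; and (ii) the composed polynomial $P_k(\mathrm{affine}(Q_l(y)))$ must be bounded on all of $[-1,1]$, not just on $[\sin\eta,\cos\eta]$, before QSVT can be applied --- a gap your sketch shares with the paper's and with \cref{lemma:composition_of_polynomial_approximations} as stated.
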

\begin{proof}
We now re-derive \cite{mcardle2022quantum} in our framework. 

Define the $n$-qubit state $\ket{\psi_0}_n := \frac{1}{\mathcal{N}_0}\sum_{j=0}^{N-1}\sin(j/N)\ket{j}_n$ where $N = 2^n$ and $\mathcal{N}_0^2 := \sum_{j=0}^{N-1}\sin(j/N)^2$.
There exists a unitary $U_{\psi_0}$ which is an $(1, 1, 0)$-SPBE for $\ket{\psi_0}$, where $\alpha = \sqrt{N}/\mathcal{N}_0$. Note that a unitary with the action 
\[U_{\psi_0}\ket{0}_{1 + n} = \frac{1}{\sqrt{N}}\left( \sum_{j=0}^N \left(\sin(j/N)\ket{0}_1 + \cos(j/N)\ket{1}_1 \right)\ket{j}_n
\right)\] is essentially a special case of the standard ``rotate'' gate (e.g., as used implicitly in \cite{harrow2009quantum}) only with Hadamards first applied.
As such, it can be implemented by first applying $(I_1 \otimes H^{\otimes n})$ (with $H$ the standard Hadamard gate), followed by a sequence of $n$ controlled $e^{-i Y \theta_j}$ gates (with $Y$ the standard Pauli-$Y$ gate) with the target qubit the ancilla, and the control of the $j^{th}$ gate set to the $j^{th}$ most-significant qubit of the main register (and with angle $2^{-(j+1)}$, counting from $1$ to $n$), finally followed by a Pauli-$X$ gate on the ancilla qubit. We can then rewrite the action of this gate as $U_{\psi_0}\ket{0}_{1+n} = (\frac{\mathcal{N}_0}{\sqrt{N}})\ket{0}_1\ket{\psi_0} + \sqrt{1 - (\frac{\mathcal{N}_0}{\sqrt{N}})^2}\ket{\bot}_{1+n}$ (for some orthogonal state $\ket{\bot}_{1+n}$, such that $\bra{\bot}_{1+n} \ket{0}_1\ket{\psi}_n = 0$). As such, by \cref{lemma:simple_spbe_construction}, we immediately find that $U_{\psi_0}$ is an $(\sqrt{N}/\mathcal{N}_0, 1, 0)$-SPBE for $\ket{\psi_0}_n$. 
Using observations first made in \cite{rattew2022preparing} about the asymptotic properties of continuous functions, it is clear that $\lim_{N\mapsto \infty}\sqrt{N}/\mathcal{N}_0 < 2$. Moreover, one can readily show that in the finite case, $\sqrt{N}/\mathcal{N}_0 \le 4$.

On the interval $x\in [-0.85, 0.85]$, as per Lemma 70 of \cite{gilyen2019quantum}, $\arcsin(x)$ is $\epsilon_1$-approximable by a polynomial $Q_l(x)$ with degree $l \in O(\log(1/\epsilon_1))$. 
By assumption, the input function $f$ is $\epsilon_0$-approximable by a polynomial $P_k(x)$ with degree $O(polylog(1/\epsilon_0))$ on the interval $[a, b]$. Consequently, \cref{lemma:composition_of_polynomial_approximations} says that the function $f((b-a)\arcsin(x) + a)$ is $\epsilon_0 + L\epsilon_1$ approximable by the polynomial $\tilde{P}_{m} := P_k((b - a)Q_l(x) + a)$ (where $m := kl$). Noting that $L$ is a constant, we can exponentially suppress $\epsilon_1$, meaning that effectively $m \in O(polylog(1/\epsilon_0))$. 

As a result, we can invoke \cref{theorem:generalized_end_to_end_complexity_for_previous_technique_for_efficiently_analytic_functions,generalizing_previous_results_to_spbe} with $\tilde{P}_{m}$ and our $(O(1), 1, 0)$-SPBE, obtaining the state $\ket{\phi}_n$ with total circuit complexity $\tilde{O}(\mathcal{F}^{-1})$, noting that $\sqrt{N}/\mathcal{N} = \mathcal{F}^{-1}$.
\end{proof}

\section{Conclusion}\label{section:discussion}
We have presented a new framework for applying non-linear functions to the amplitudes of quantum states. We give a diagonal block-encoding of the amplitudes of an input state specified by a state preparation unitary, simplifying analysis over previous approaches. We then give a new algorithm based off of importance sampling, which gives up to an exponential speedup in important applications. 
Our importance sampling algorithm also enables a new maximum finding application and result. Furthermore, we demonstrate how this framework can be used to efficiently apply a wide range of functions to the amplitudes of quantum states. We conclude by re-deriving state-preparation algorithms in our framework, showing its general versatility and power.

For future work, one direction is to generalize our results to work with quantum states with complex amplitudes, potentially using the recent Ref.~\cite{motlagh2023generalized}.
Additionally, for an $n$-qubit quantum state, it would be worth exploring ways to implement the diagonal block-encoding of state amplitudes with $O(1)$ ancilla qubits instead of the $O(n)$ currently required.
Another direction is to investigate the quantum state exponentiation setting \cite{lloyd2014quantum,kimmel2017hamiltonian} as opposed to the state preparation circuit input model.
Moreover, application to problems related to non-linear ordinary or partial differential equations are interesting to consider and may be enabled by our  subroutine. 
Furthermore, these techniques naturally lend themselves to further developing the theory of quantum machine learning, both in the parameterized quantum circuit model~\cite{cerezo2022challenges}, and in the fault-tolerant linear algebra model~\cite{gilyen2019quantum} (potentially requiring QRAM~\cite{jaques2023qram}).
Beyond the computational model considered in this work, certain quantum systems may offer non-linear functions at the hardware level, in contrast to the systematic polynomial approximation approach.

\section*{Acknowledgement}

We would like to thank B\'alint Koczor and Marco Pistoia for their helpful comments. 
We would like to thank Naixu Guo for input regarding his original paper~\cite{guo2021nonlinear} and for multiple subsequent discussions. 
We would also like to give particularly notable thanks to Shouvanik Chakrabarti for many helpful conversations, and in particular for helping to prove \cref{lemma:upper_bound_on_max_of_reduced_polynomial}. Finally, A.G.R. would like to thank the whole Global Technology Applied Research team at JPMorgan Chase for enabling this research.
This work is partially supported by the National Research Foundation, Singapore, and A*STAR under its CQT Bridging Grant and its Quantum Engineering Programme under grant NRF2021-QEP2-02-P05.

\section*{Disclaimer}
This paper was prepared for informational purposes with contributions from the Global Technology Applied Research center of JPMorgan Chase \& Co. This paper is not a product of the Research Department of JPMorgan Chase \& Co. or its affiliates. Neither JPMorgan Chase \& Co. nor any of its affiliates makes any explicit or implied representation or warranty and none of them accept any liability in connection with this paper, including, without limitation, with respect to the completeness, accuracy, or reliability of the information contained herein and the potential legal, compliance, tax, or accounting effects thereof. This document is not intended as investment research or investment advice, or as a recommendation, offer, or solicitation for the purchase or sale of any security, financial instrument, financial product or service, or to be used in any way for evaluating the merits of participating in any transaction.

\bibliography{bibliography}

\begin{thebibliography}{51}%
\makeatletter
\providecommand \@ifxundefined [1]{%
 \@ifx{#1\undefined}
}%
\providecommand \@ifnum [1]{%
 \ifnum #1\expandafter \@firstoftwo
 \else \expandafter \@secondoftwo
 \fi
}%
\providecommand \@ifx [1]{%
 \ifx #1\expandafter \@firstoftwo
 \else \expandafter \@secondoftwo
 \fi
}%
\providecommand \natexlab [1]{#1}%
\providecommand \enquote  [1]{``#1''}%
\providecommand \bibnamefont  [1]{#1}%
\providecommand \bibfnamefont [1]{#1}%
\providecommand \citenamefont [1]{#1}%
\providecommand \href@noop [0]{\@secondoftwo}%
\providecommand \href [0]{\begingroup \@sanitize@url \@href}%
\providecommand \@href[1]{\@@startlink{#1}\@@href}%
\providecommand \@@href[1]{\endgroup#1\@@endlink}%
\providecommand \@sanitize@url [0]{\catcode `\\12\catcode `\$12\catcode `\&12\catcode `\#12\catcode `\^12\catcode `\_12\catcode `\%12\relax}%
\providecommand \@@startlink[1]{}%
\providecommand \@@endlink[0]{}%
\providecommand \url  [0]{\begingroup\@sanitize@url \@url }%
\providecommand \@url [1]{\endgroup\@href {#1}{\urlprefix }}%
\providecommand \urlprefix  [0]{URL }%
\providecommand \Eprint [0]{\href }%
\providecommand \doibase [0]{https://doi.org/}%
\providecommand \selectlanguage [0]{\@gobble}%
\providecommand \bibinfo  [0]{\@secondoftwo}%
\providecommand \bibfield  [0]{\@secondoftwo}%
\providecommand \translation [1]{[#1]}%
\providecommand \BibitemOpen [0]{}%
\providecommand \bibitemStop [0]{}%
\providecommand \bibitemNoStop [0]{.\EOS\space}%
\providecommand \EOS [0]{\spacefactor3000\relax}%
\providecommand \BibitemShut  [1]{\csname bibitem#1\endcsname}%
\let\auto@bib@innerbib\@empty
\bibitem [{\citenamefont {Ames}(1965)}]{ames1965nonlinear}%
  \BibitemOpen
  \bibfield  {author} {\bibinfo {author} {\bibfnamefont {W.~F.}\ \bibnamefont {Ames}},\ }\href@noop {} {\emph {\bibinfo {title} {Nonlinear partial differential equations in engineering}}}\ (\bibinfo  {publisher} {Academic press},\ \bibinfo {year} {1965})\BibitemShut {NoStop}%
\bibitem [{\citenamefont {Finlayson}(2003)}]{finlayson2003nonlinear}%
  \BibitemOpen
  \bibfield  {author} {\bibinfo {author} {\bibfnamefont {B.~A.}\ \bibnamefont {Finlayson}},\ }\href@noop {} {\emph {\bibinfo {title} {Nonlinear analysis in chemical engineering}}}\ (\bibinfo  {publisher} {Bruce Alan Finlayson},\ \bibinfo {year} {2003})\BibitemShut {NoStop}%
\bibitem [{\citenamefont {Burer}\ and\ \citenamefont {Letchford}(2012)}]{burer2012non}%
  \BibitemOpen
  \bibfield  {author} {\bibinfo {author} {\bibfnamefont {S.}~\bibnamefont {Burer}}\ and\ \bibinfo {author} {\bibfnamefont {A.~N.}\ \bibnamefont {Letchford}},\ }\bibfield  {title} {\bibinfo {title} {Non-convex mixed-integer nonlinear programming: A survey},\ }\href@noop {} {\bibfield  {journal} {\bibinfo  {journal} {Surveys in Operations Research and Management Science}\ }\textbf {\bibinfo {volume} {17}},\ \bibinfo {pages} {97} (\bibinfo {year} {2012})}\BibitemShut {NoStop}%
\bibitem [{\citenamefont {Goodfellow}\ \emph {et~al.}(2016)\citenamefont {Goodfellow}, \citenamefont {Bengio},\ and\ \citenamefont {Courville}}]{goodfellow2016deep}%
  \BibitemOpen
  \bibfield  {author} {\bibinfo {author} {\bibfnamefont {I.}~\bibnamefont {Goodfellow}}, \bibinfo {author} {\bibfnamefont {Y.}~\bibnamefont {Bengio}},\ and\ \bibinfo {author} {\bibfnamefont {A.}~\bibnamefont {Courville}},\ }\href@noop {} {\emph {\bibinfo {title} {Deep learning}}}\ (\bibinfo  {publisher} {MIT press},\ \bibinfo {year} {2016})\BibitemShut {NoStop}%
\bibitem [{\citenamefont {Deng}(2014)}]{deng2014tutorial}%
  \BibitemOpen
  \bibfield  {author} {\bibinfo {author} {\bibfnamefont {L.}~\bibnamefont {Deng}},\ }\bibfield  {title} {\bibinfo {title} {A tutorial survey of architectures, algorithms, and applications for deep learning},\ }\href@noop {} {\bibfield  {journal} {\bibinfo  {journal} {APSIPA transactions on Signal and Information Processing}\ }\textbf {\bibinfo {volume} {3}},\ \bibinfo {pages} {e2} (\bibinfo {year} {2014})}\BibitemShut {NoStop}%
\bibitem [{\citenamefont {Forsyth}\ and\ \citenamefont {Labahn}(2007)}]{forsyth2007numerical}%
  \BibitemOpen
  \bibfield  {author} {\bibinfo {author} {\bibfnamefont {P.~A.}\ \bibnamefont {Forsyth}}\ and\ \bibinfo {author} {\bibfnamefont {G.}~\bibnamefont {Labahn}},\ }\bibfield  {title} {\bibinfo {title} {Numerical methods for controlled hamilton-jacobi-bellman pdes in finance},\ }\href@noop {} {\bibfield  {journal} {\bibinfo  {journal} {Journal of Computational Finance}\ }\textbf {\bibinfo {volume} {11}},\ \bibinfo {pages} {1} (\bibinfo {year} {2007})}\BibitemShut {NoStop}%
\bibitem [{\citenamefont {Pistoia}\ \emph {et~al.}(2021)\citenamefont {Pistoia}, \citenamefont {Ahmad}, \citenamefont {Ajagekar}, \citenamefont {Buts}, \citenamefont {Chakrabarti}, \citenamefont {Herman}, \citenamefont {Hu}, \citenamefont {Jena}, \citenamefont {Minssen}, \citenamefont {Niroula} \emph {et~al.}}]{pistoia2021quantum}%
  \BibitemOpen
  \bibfield  {author} {\bibinfo {author} {\bibfnamefont {M.}~\bibnamefont {Pistoia}}, \bibinfo {author} {\bibfnamefont {S.~F.}\ \bibnamefont {Ahmad}}, \bibinfo {author} {\bibfnamefont {A.}~\bibnamefont {Ajagekar}}, \bibinfo {author} {\bibfnamefont {A.}~\bibnamefont {Buts}}, \bibinfo {author} {\bibfnamefont {S.}~\bibnamefont {Chakrabarti}}, \bibinfo {author} {\bibfnamefont {D.}~\bibnamefont {Herman}}, \bibinfo {author} {\bibfnamefont {S.}~\bibnamefont {Hu}}, \bibinfo {author} {\bibfnamefont {A.}~\bibnamefont {Jena}}, \bibinfo {author} {\bibfnamefont {P.}~\bibnamefont {Minssen}}, \bibinfo {author} {\bibfnamefont {P.}~\bibnamefont {Niroula}}, \emph {et~al.},\ }\bibfield  {title} {\bibinfo {title} {Quantum machine learning for finance iccad special session paper},\ }in\ \href@noop {} {\emph {\bibinfo {booktitle} {2021 IEEE/ACM International Conference On Computer Aided Design (ICCAD)}}}\ (\bibinfo {organization} {IEEE},\ \bibinfo {year} {2021})\ pp.\ \bibinfo {pages} {1--9}\BibitemShut {NoStop}%
\bibitem [{\citenamefont {Kolmogorov}(1957)}]{kolmogorov1957representation}%
  \BibitemOpen
  \bibfield  {author} {\bibinfo {author} {\bibfnamefont {A.~N.}\ \bibnamefont {Kolmogorov}},\ }\bibfield  {title} {\bibinfo {title} {On the representation of continuous functions of many variables by superposition of continuous functions of one variable and addition},\ }in\ \href@noop {} {\emph {\bibinfo {booktitle} {Doklady Akademii Nauk}}},\ Vol.\ \bibinfo {volume} {114}\ (\bibinfo {organization} {Russian Academy of Sciences},\ \bibinfo {year} {1957})\ pp.\ \bibinfo {pages} {953--956}\BibitemShut {NoStop}%
\bibitem [{\citenamefont {Low}\ and\ \citenamefont {Chuang}(2017{\natexlab{a}})}]{low2017optimal}%
  \BibitemOpen
  \bibfield  {author} {\bibinfo {author} {\bibfnamefont {G.~H.}\ \bibnamefont {Low}}\ and\ \bibinfo {author} {\bibfnamefont {I.~L.}\ \bibnamefont {Chuang}},\ }\bibfield  {title} {\bibinfo {title} {Optimal hamiltonian simulation by quantum signal processing},\ }\href@noop {} {\bibfield  {journal} {\bibinfo  {journal} {Physical review letters}\ }\textbf {\bibinfo {volume} {118}},\ \bibinfo {pages} {010501} (\bibinfo {year} {2017}{\natexlab{a}})}\BibitemShut {NoStop}%
\bibitem [{\citenamefont {Low}\ and\ \citenamefont {Chuang}(2019)}]{low2019hamiltonian}%
  \BibitemOpen
  \bibfield  {author} {\bibinfo {author} {\bibfnamefont {G.~H.}\ \bibnamefont {Low}}\ and\ \bibinfo {author} {\bibfnamefont {I.~L.}\ \bibnamefont {Chuang}},\ }\bibfield  {title} {\bibinfo {title} {Hamiltonian simulation by qubitization},\ }\href@noop {} {\bibfield  {journal} {\bibinfo  {journal} {Quantum}\ }\textbf {\bibinfo {volume} {3}},\ \bibinfo {pages} {163} (\bibinfo {year} {2019})}\BibitemShut {NoStop}%
\bibitem [{\citenamefont {Gily{\'e}n}\ \emph {et~al.}(2019)\citenamefont {Gily{\'e}n}, \citenamefont {Su}, \citenamefont {Low},\ and\ \citenamefont {Wiebe}}]{gilyen2019quantum}%
  \BibitemOpen
  \bibfield  {author} {\bibinfo {author} {\bibfnamefont {A.}~\bibnamefont {Gily{\'e}n}}, \bibinfo {author} {\bibfnamefont {Y.}~\bibnamefont {Su}}, \bibinfo {author} {\bibfnamefont {G.~H.}\ \bibnamefont {Low}},\ and\ \bibinfo {author} {\bibfnamefont {N.}~\bibnamefont {Wiebe}},\ }\bibfield  {title} {\bibinfo {title} {Quantum singular value transformation and beyond: exponential improvements for quantum matrix arithmetics},\ }in\ \href@noop {} {\emph {\bibinfo {booktitle} {Proceedings of the 51st Annual ACM SIGACT Symposium on Theory of Computing}}}\ (\bibinfo {year} {2019})\ pp.\ \bibinfo {pages} {193--204}\BibitemShut {NoStop}%
\bibitem [{\citenamefont {Martyn}\ \emph {et~al.}(2021)\citenamefont {Martyn}, \citenamefont {Rossi}, \citenamefont {Tan},\ and\ \citenamefont {Chuang}}]{martyn2021grand}%
  \BibitemOpen
  \bibfield  {author} {\bibinfo {author} {\bibfnamefont {J.~M.}\ \bibnamefont {Martyn}}, \bibinfo {author} {\bibfnamefont {Z.~M.}\ \bibnamefont {Rossi}}, \bibinfo {author} {\bibfnamefont {A.~K.}\ \bibnamefont {Tan}},\ and\ \bibinfo {author} {\bibfnamefont {I.~L.}\ \bibnamefont {Chuang}},\ }\bibfield  {title} {\bibinfo {title} {Grand unification of quantum algorithms},\ }\href@noop {} {\bibfield  {journal} {\bibinfo  {journal} {PRX Quantum}\ }\textbf {\bibinfo {volume} {2}},\ \bibinfo {pages} {040203} (\bibinfo {year} {2021})}\BibitemShut {NoStop}%
\bibitem [{\citenamefont {Mitarai}\ \emph {et~al.}(2019)\citenamefont {Mitarai}, \citenamefont {Kitagawa},\ and\ \citenamefont {Fujii}}]{mitarai2019quantum}%
  \BibitemOpen
  \bibfield  {author} {\bibinfo {author} {\bibfnamefont {K.}~\bibnamefont {Mitarai}}, \bibinfo {author} {\bibfnamefont {M.}~\bibnamefont {Kitagawa}},\ and\ \bibinfo {author} {\bibfnamefont {K.}~\bibnamefont {Fujii}},\ }\bibfield  {title} {\bibinfo {title} {Quantum analog-digital conversion},\ }\href@noop {} {\bibfield  {journal} {\bibinfo  {journal} {Physical Review A}\ }\textbf {\bibinfo {volume} {99}},\ \bibinfo {pages} {012301} (\bibinfo {year} {2019})}\BibitemShut {NoStop}%
\bibitem [{\citenamefont {Harrow}\ \emph {et~al.}(2009)\citenamefont {Harrow}, \citenamefont {Hassidim},\ and\ \citenamefont {Lloyd}}]{harrow2009quantum}%
  \BibitemOpen
  \bibfield  {author} {\bibinfo {author} {\bibfnamefont {A.~W.}\ \bibnamefont {Harrow}}, \bibinfo {author} {\bibfnamefont {A.}~\bibnamefont {Hassidim}},\ and\ \bibinfo {author} {\bibfnamefont {S.}~\bibnamefont {Lloyd}},\ }\bibfield  {title} {\bibinfo {title} {Quantum algorithm for linear systems of equations},\ }\href@noop {} {\bibfield  {journal} {\bibinfo  {journal} {Physical review letters}\ }\textbf {\bibinfo {volume} {103}},\ \bibinfo {pages} {150502} (\bibinfo {year} {2009})}\BibitemShut {NoStop}%
\bibitem [{\citenamefont {Guo}\ \emph {et~al.}(2021)\citenamefont {Guo}, \citenamefont {Mitarai},\ and\ \citenamefont {Fujii}}]{guo2021nonlinear}%
  \BibitemOpen
  \bibfield  {author} {\bibinfo {author} {\bibfnamefont {N.}~\bibnamefont {Guo}}, \bibinfo {author} {\bibfnamefont {K.}~\bibnamefont {Mitarai}},\ and\ \bibinfo {author} {\bibfnamefont {K.}~\bibnamefont {Fujii}},\ }\bibfield  {title} {\bibinfo {title} {Nonlinear transformation of complex amplitudes via quantum singular value transformation},\ }\href@noop {} {\bibfield  {journal} {\bibinfo  {journal} {arXiv preprint arXiv:2107.10764}\ } (\bibinfo {year} {2021})}\BibitemShut {NoStop}%
\bibitem [{\citenamefont {Tang}(2018)}]{Tang2018}%
  \BibitemOpen
  \bibfield  {author} {\bibinfo {author} {\bibfnamefont {E.}~\bibnamefont {Tang}},\ }\bibfield  {title} {\bibinfo {title} {A quantum-inspired classical algorithm for recommendation systems},\ }\href@noop {} {\bibfield  {journal} {\bibinfo  {journal} {Electronic Colloquium on Computational Complexity}\ }\textbf {\bibinfo {volume} {128}} (\bibinfo {year} {2018})}\BibitemShut {NoStop}%
\bibitem [{\citenamefont {Chia}\ \emph {et~al.}(2022)\citenamefont {Chia}, \citenamefont {Gily{\'e}n}, \citenamefont {Li}, \citenamefont {Lin}, \citenamefont {Tang},\ and\ \citenamefont {Wang}}]{chia2022sampling}%
  \BibitemOpen
  \bibfield  {author} {\bibinfo {author} {\bibfnamefont {N.-H.}\ \bibnamefont {Chia}}, \bibinfo {author} {\bibfnamefont {A.~P.}\ \bibnamefont {Gily{\'e}n}}, \bibinfo {author} {\bibfnamefont {T.}~\bibnamefont {Li}}, \bibinfo {author} {\bibfnamefont {H.-H.}\ \bibnamefont {Lin}}, \bibinfo {author} {\bibfnamefont {E.}~\bibnamefont {Tang}},\ and\ \bibinfo {author} {\bibfnamefont {C.}~\bibnamefont {Wang}},\ }\bibfield  {title} {\bibinfo {title} {Sampling-based sublinear low-rank matrix arithmetic framework for dequantizing quantum machine learning},\ }\href@noop {} {\bibfield  {journal} {\bibinfo  {journal} {Journal of the ACM}\ }\textbf {\bibinfo {volume} {69}},\ \bibinfo {pages} {1} (\bibinfo {year} {2022})}\BibitemShut {NoStop}%
\bibitem [{\citenamefont {Gily{\'e}n}\ \emph {et~al.}(2022)\citenamefont {Gily{\'e}n}, \citenamefont {Song},\ and\ \citenamefont {Tang}}]{gilyen2022improved}%
  \BibitemOpen
  \bibfield  {author} {\bibinfo {author} {\bibfnamefont {A.}~\bibnamefont {Gily{\'e}n}}, \bibinfo {author} {\bibfnamefont {Z.}~\bibnamefont {Song}},\ and\ \bibinfo {author} {\bibfnamefont {E.}~\bibnamefont {Tang}},\ }\bibfield  {title} {\bibinfo {title} {An improved quantum-inspired algorithm for linear regression},\ }\href@noop {} {\bibfield  {journal} {\bibinfo  {journal} {Quantum}\ }\textbf {\bibinfo {volume} {6}},\ \bibinfo {pages} {754} (\bibinfo {year} {2022})}\BibitemShut {NoStop}%
\bibitem [{\citenamefont {Bakshi}\ and\ \citenamefont {Tang}(2023)}]{bakshi2023improved}%
  \BibitemOpen
  \bibfield  {author} {\bibinfo {author} {\bibfnamefont {A.}~\bibnamefont {Bakshi}}\ and\ \bibinfo {author} {\bibfnamefont {E.}~\bibnamefont {Tang}},\ }\bibfield  {title} {\bibinfo {title} {An improved classical singular value transformation for quantum machine learning},\ }\href@noop {} {\bibfield  {journal} {\bibinfo  {journal} {arXiv preprint arXiv:2303.01492}\ } (\bibinfo {year} {2023})}\BibitemShut {NoStop}%
\bibitem [{\citenamefont {Montanaro}(2015)}]{Montanaro2015}%
  \BibitemOpen
  \bibfield  {author} {\bibinfo {author} {\bibfnamefont {A.}~\bibnamefont {Montanaro}},\ }\bibfield  {title} {\bibinfo {title} {Quantum speedup of monte carlo methods},\ }\href@noop {} {\bibfield  {journal} {\bibinfo  {journal} {Proc. R. Soc. A}\ }\textbf {\bibinfo {volume} {471}},\ \bibinfo {pages} {0301} (\bibinfo {year} {2015})}\BibitemShut {NoStop}%
\bibitem [{\citenamefont {Holmes}\ \emph {et~al.}(2023)\citenamefont {Holmes}, \citenamefont {Coble}, \citenamefont {Sornborger},\ and\ \citenamefont {Suba{\c{s}}{\i}}}]{holmes2023nonlinear}%
  \BibitemOpen
  \bibfield  {author} {\bibinfo {author} {\bibfnamefont {Z.}~\bibnamefont {Holmes}}, \bibinfo {author} {\bibfnamefont {N.~J.}\ \bibnamefont {Coble}}, \bibinfo {author} {\bibfnamefont {A.~T.}\ \bibnamefont {Sornborger}},\ and\ \bibinfo {author} {\bibfnamefont {Y.}~\bibnamefont {Suba{\c{s}}{\i}}},\ }\bibfield  {title} {\bibinfo {title} {Nonlinear transformations in quantum computation},\ }\href@noop {} {\bibfield  {journal} {\bibinfo  {journal} {Physical Review Research}\ }\textbf {\bibinfo {volume} {5}},\ \bibinfo {pages} {013105} (\bibinfo {year} {2023})}\BibitemShut {NoStop}%
\bibitem [{\citenamefont {van Apeldoorn}(2021)}]{van2021quantum}%
  \BibitemOpen
  \bibfield  {author} {\bibinfo {author} {\bibfnamefont {J.}~\bibnamefont {van Apeldoorn}},\ }\bibfield  {title} {\bibinfo {title} {Quantum probability oracles \& multidimensional amplitude estimation},\ }in\ \href@noop {} {\emph {\bibinfo {booktitle} {16th Conference on the Theory of Quantum Computation, Communication and Cryptography (TQC 2021)}}}\ (\bibinfo {organization} {Schloss Dagstuhl-Leibniz-Zentrum f{\"u}r Informatik},\ \bibinfo {year} {2021})\BibitemShut {NoStop}%
\bibitem [{\citenamefont {van Apeldoorn}\ \emph {et~al.}(2023)\citenamefont {van Apeldoorn}, \citenamefont {Cornelissen}, \citenamefont {Gily{\'e}n},\ and\ \citenamefont {Nannicini}}]{van2023quantum}%
  \BibitemOpen
  \bibfield  {author} {\bibinfo {author} {\bibfnamefont {J.}~\bibnamefont {van Apeldoorn}}, \bibinfo {author} {\bibfnamefont {A.}~\bibnamefont {Cornelissen}}, \bibinfo {author} {\bibfnamefont {A.}~\bibnamefont {Gily{\'e}n}},\ and\ \bibinfo {author} {\bibfnamefont {G.}~\bibnamefont {Nannicini}},\ }\bibfield  {title} {\bibinfo {title} {Quantum tomography using state-preparation unitaries},\ }in\ \href@noop {} {\emph {\bibinfo {booktitle} {Proceedings of the 2023 Annual ACM-SIAM Symposium on Discrete Algorithms (SODA)}}}\ (\bibinfo {organization} {SIAM},\ \bibinfo {year} {2023})\ pp.\ \bibinfo {pages} {1265--1318}\BibitemShut {NoStop}%
\bibitem [{\citenamefont {Gonzalez-Conde}\ \emph {et~al.}(2023)\citenamefont {Gonzalez-Conde}, \citenamefont {Watts}, \citenamefont {Rodriguez-Grasa},\ and\ \citenamefont {Sanz}}]{gonzalez2023efficient}%
  \BibitemOpen
  \bibfield  {author} {\bibinfo {author} {\bibfnamefont {J.}~\bibnamefont {Gonzalez-Conde}}, \bibinfo {author} {\bibfnamefont {T.~W.}\ \bibnamefont {Watts}}, \bibinfo {author} {\bibfnamefont {P.}~\bibnamefont {Rodriguez-Grasa}},\ and\ \bibinfo {author} {\bibfnamefont {M.}~\bibnamefont {Sanz}},\ }\bibfield  {title} {\bibinfo {title} {Efficient amplitude encoding of polynomial functions into quantum computers},\ }\href@noop {} {\bibfield  {journal} {\bibinfo  {journal} {arXiv preprint arXiv:2307.10917}\ } (\bibinfo {year} {2023})}\BibitemShut {NoStop}%
\bibitem [{\citenamefont {McArdle}\ \emph {et~al.}(2022)\citenamefont {McArdle}, \citenamefont {Gily{\'e}n},\ and\ \citenamefont {Berta}}]{mcardle2022quantum}%
  \BibitemOpen
  \bibfield  {author} {\bibinfo {author} {\bibfnamefont {S.}~\bibnamefont {McArdle}}, \bibinfo {author} {\bibfnamefont {A.}~\bibnamefont {Gily{\'e}n}},\ and\ \bibinfo {author} {\bibfnamefont {M.}~\bibnamefont {Berta}},\ }\bibfield  {title} {\bibinfo {title} {Quantum state preparation without coherent arithmetic},\ }\href@noop {} {\bibfield  {journal} {\bibinfo  {journal} {arXiv preprint arXiv:2210.14892}\ } (\bibinfo {year} {2022})}\BibitemShut {NoStop}%
\bibitem [{\citenamefont {Nielsen}\ and\ \citenamefont {Chuang}(2010)}]{Nielsen2010}%
  \BibitemOpen
  \bibfield  {author} {\bibinfo {author} {\bibfnamefont {M.~A.}\ \bibnamefont {Nielsen}}\ and\ \bibinfo {author} {\bibfnamefont {I.~L.}\ \bibnamefont {Chuang}},\ }\href {https://doi.org/10.1017/cbo9780511976667} {\emph {\bibinfo {title} {Quantum Computation and Quantum Information}}}\ (\bibinfo {year} {2010})\BibitemShut {NoStop}%
\bibitem [{\citenamefont {Saeedi}\ and\ \citenamefont {Pedram}(2013)}]{saeedi2013linear}%
  \BibitemOpen
  \bibfield  {author} {\bibinfo {author} {\bibfnamefont {M.}~\bibnamefont {Saeedi}}\ and\ \bibinfo {author} {\bibfnamefont {M.}~\bibnamefont {Pedram}},\ }\bibfield  {title} {\bibinfo {title} {Linear-depth quantum circuits for n-qubit toffoli gates with no ancilla},\ }\href@noop {} {\bibfield  {journal} {\bibinfo  {journal} {Physical Review A}\ }\textbf {\bibinfo {volume} {87}},\ \bibinfo {pages} {062318} (\bibinfo {year} {2013})}\BibitemShut {NoStop}%
\bibitem [{\citenamefont {Quek}\ and\ \citenamefont {Rebentrost}(2021)}]{quek2021fast}%
  \BibitemOpen
  \bibfield  {author} {\bibinfo {author} {\bibfnamefont {Y.}~\bibnamefont {Quek}}\ and\ \bibinfo {author} {\bibfnamefont {P.}~\bibnamefont {Rebentrost}},\ }\bibfield  {title} {\bibinfo {title} {Fast algorithm for quantum polar decomposition, pretty-good measurements, and the procrustes problem},\ }\href@noop {} {\bibfield  {journal} {\bibinfo  {journal} {arXiv preprint arXiv:2106.07634}\ } (\bibinfo {year} {2021})}\BibitemShut {NoStop}%
\bibitem [{\citenamefont {Leeming}(1987)}]{leeming1987real}%
  \BibitemOpen
  \bibfield  {author} {\bibinfo {author} {\bibfnamefont {D.~J.}\ \bibnamefont {Leeming}},\ }\href@noop {} {\emph {\bibinfo {title} {The real zeros of the Bernoulli polynomials}}},\ \bibinfo {type} {Tech. Rep.}\ (\bibinfo {year} {1987})\BibitemShut {NoStop}%
\bibitem [{\citenamefont {Chakrabarti}\ \emph {et~al.}(2021)\citenamefont {Chakrabarti}, \citenamefont {Krishnakumar}, \citenamefont {Mazzola}, \citenamefont {Stamatopoulos}, \citenamefont {Woerner},\ and\ \citenamefont {Zeng}}]{chakrabarti2021threshold}%
  \BibitemOpen
  \bibfield  {author} {\bibinfo {author} {\bibfnamefont {S.}~\bibnamefont {Chakrabarti}}, \bibinfo {author} {\bibfnamefont {R.}~\bibnamefont {Krishnakumar}}, \bibinfo {author} {\bibfnamefont {G.}~\bibnamefont {Mazzola}}, \bibinfo {author} {\bibfnamefont {N.}~\bibnamefont {Stamatopoulos}}, \bibinfo {author} {\bibfnamefont {S.}~\bibnamefont {Woerner}},\ and\ \bibinfo {author} {\bibfnamefont {W.~J.}\ \bibnamefont {Zeng}},\ }\bibfield  {title} {\bibinfo {title} {A threshold for quantum advantage in derivative pricing},\ }\href@noop {} {\bibfield  {journal} {\bibinfo  {journal} {Quantum}\ }\textbf {\bibinfo {volume} {5}},\ \bibinfo {pages} {463} (\bibinfo {year} {2021})}\BibitemShut {NoStop}%
\bibitem [{\citenamefont {Chan}\ \emph {et~al.}(2023)\citenamefont {Chan}, \citenamefont {Meister}, \citenamefont {Jones}, \citenamefont {Tew},\ and\ \citenamefont {Benjamin}}]{chan2023grid}%
  \BibitemOpen
  \bibfield  {author} {\bibinfo {author} {\bibfnamefont {H.~H.~S.}\ \bibnamefont {Chan}}, \bibinfo {author} {\bibfnamefont {R.}~\bibnamefont {Meister}}, \bibinfo {author} {\bibfnamefont {T.}~\bibnamefont {Jones}}, \bibinfo {author} {\bibfnamefont {D.~P.}\ \bibnamefont {Tew}},\ and\ \bibinfo {author} {\bibfnamefont {S.~C.}\ \bibnamefont {Benjamin}},\ }\bibfield  {title} {\bibinfo {title} {Grid-based methods for chemistry simulations on a quantum computer},\ }\href@noop {} {\bibfield  {journal} {\bibinfo  {journal} {Science Advances}\ }\textbf {\bibinfo {volume} {9}},\ \bibinfo {pages} {eabo7484} (\bibinfo {year} {2023})}\BibitemShut {NoStop}%
\bibitem [{\citenamefont {Low}\ and\ \citenamefont {Chuang}(2017{\natexlab{b}})}]{low2017hamiltonian}%
  \BibitemOpen
  \bibfield  {author} {\bibinfo {author} {\bibfnamefont {G.~H.}\ \bibnamefont {Low}}\ and\ \bibinfo {author} {\bibfnamefont {I.~L.}\ \bibnamefont {Chuang}},\ }\bibfield  {title} {\bibinfo {title} {Hamiltonian simulation by uniform spectral amplification},\ }\href@noop {} {\bibfield  {journal} {\bibinfo  {journal} {arXiv preprint arXiv:1707.05391}\ } (\bibinfo {year} {2017}{\natexlab{b}})}\BibitemShut {NoStop}%
\bibitem [{\citenamefont {Rattew}\ and\ \citenamefont {Koczor}(2022)}]{rattew2022preparing}%
  \BibitemOpen
  \bibfield  {author} {\bibinfo {author} {\bibfnamefont {A.~G.}\ \bibnamefont {Rattew}}\ and\ \bibinfo {author} {\bibfnamefont {B.}~\bibnamefont {Koczor}},\ }\bibfield  {title} {\bibinfo {title} {Preparing arbitrary continuous functions in quantum registers with logarithmic complexity},\ }\href@noop {} {\bibfield  {journal} {\bibinfo  {journal} {arXiv preprint arXiv:2205.00519}\ } (\bibinfo {year} {2022})}\BibitemShut {NoStop}%
\bibitem [{\citenamefont {Grover}(2000)}]{grover2000synthesis}%
  \BibitemOpen
  \bibfield  {author} {\bibinfo {author} {\bibfnamefont {L.~K.}\ \bibnamefont {Grover}},\ }\bibfield  {title} {\bibinfo {title} {Synthesis of quantum superpositions by quantum computation},\ }\href@noop {} {\bibfield  {journal} {\bibinfo  {journal} {Physical review letters}\ }\textbf {\bibinfo {volume} {85}},\ \bibinfo {pages} {1334} (\bibinfo {year} {2000})}\BibitemShut {NoStop}%
\bibitem [{\citenamefont {Sanders}\ \emph {et~al.}(2019)\citenamefont {Sanders}, \citenamefont {Low}, \citenamefont {Scherer},\ and\ \citenamefont {Berry}}]{sanders2019black}%
  \BibitemOpen
  \bibfield  {author} {\bibinfo {author} {\bibfnamefont {Y.~R.}\ \bibnamefont {Sanders}}, \bibinfo {author} {\bibfnamefont {G.~H.}\ \bibnamefont {Low}}, \bibinfo {author} {\bibfnamefont {A.}~\bibnamefont {Scherer}},\ and\ \bibinfo {author} {\bibfnamefont {D.~W.}\ \bibnamefont {Berry}},\ }\bibfield  {title} {\bibinfo {title} {Black-box quantum state preparation without arithmetic},\ }\href@noop {} {\bibfield  {journal} {\bibinfo  {journal} {Physical review letters}\ }\textbf {\bibinfo {volume} {122}},\ \bibinfo {pages} {020502} (\bibinfo {year} {2019})}\BibitemShut {NoStop}%
\bibitem [{\citenamefont {Bausch}(2022)}]{bausch2022fast}%
  \BibitemOpen
  \bibfield  {author} {\bibinfo {author} {\bibfnamefont {J.}~\bibnamefont {Bausch}},\ }\bibfield  {title} {\bibinfo {title} {Fast black-box quantum state preparation},\ }\href@noop {} {\bibfield  {journal} {\bibinfo  {journal} {Quantum}\ }\textbf {\bibinfo {volume} {6}},\ \bibinfo {pages} {773} (\bibinfo {year} {2022})}\BibitemShut {NoStop}%
\bibitem [{\citenamefont {Holmes}\ and\ \citenamefont {Matsuura}(2020)}]{holmes2020efficient}%
  \BibitemOpen
  \bibfield  {author} {\bibinfo {author} {\bibfnamefont {A.}~\bibnamefont {Holmes}}\ and\ \bibinfo {author} {\bibfnamefont {A.~Y.}\ \bibnamefont {Matsuura}},\ }\bibfield  {title} {\bibinfo {title} {Efficient quantum circuits for accurate state preparation of smooth, differentiable functions},\ }in\ \href@noop {} {\emph {\bibinfo {booktitle} {2020 IEEE International Conference on Quantum Computing and Engineering (QCE)}}}\ (\bibinfo {organization} {IEEE},\ \bibinfo {year} {2020})\ pp.\ \bibinfo {pages} {169--179}\BibitemShut {NoStop}%
\bibitem [{\citenamefont {Grover}\ and\ \citenamefont {Rudolph}(2002)}]{grover2002creating}%
  \BibitemOpen
  \bibfield  {author} {\bibinfo {author} {\bibfnamefont {L.}~\bibnamefont {Grover}}\ and\ \bibinfo {author} {\bibfnamefont {T.}~\bibnamefont {Rudolph}},\ }\bibfield  {title} {\bibinfo {title} {Creating superpositions that correspond to efficiently integrable probability distributions},\ }\href@noop {} {\bibfield  {journal} {\bibinfo  {journal} {arXiv preprint quant-ph/0208112}\ } (\bibinfo {year} {2002})}\BibitemShut {NoStop}%
\bibitem [{\citenamefont {Lloyd}\ and\ \citenamefont {Weedbrook}(2018)}]{lloyd2018quantum}%
  \BibitemOpen
  \bibfield  {author} {\bibinfo {author} {\bibfnamefont {S.}~\bibnamefont {Lloyd}}\ and\ \bibinfo {author} {\bibfnamefont {C.}~\bibnamefont {Weedbrook}},\ }\bibfield  {title} {\bibinfo {title} {Quantum generative adversarial learning},\ }\href@noop {} {\bibfield  {journal} {\bibinfo  {journal} {Physical review letters}\ }\textbf {\bibinfo {volume} {121}},\ \bibinfo {pages} {040502} (\bibinfo {year} {2018})}\BibitemShut {NoStop}%
\bibitem [{\citenamefont {Zoufal}\ \emph {et~al.}(2019)\citenamefont {Zoufal}, \citenamefont {Lucchi},\ and\ \citenamefont {Woerner}}]{zoufal2019quantum}%
  \BibitemOpen
  \bibfield  {author} {\bibinfo {author} {\bibfnamefont {C.}~\bibnamefont {Zoufal}}, \bibinfo {author} {\bibfnamefont {A.}~\bibnamefont {Lucchi}},\ and\ \bibinfo {author} {\bibfnamefont {S.}~\bibnamefont {Woerner}},\ }\bibfield  {title} {\bibinfo {title} {Quantum generative adversarial networks for learning and loading random distributions},\ }\href@noop {} {\bibfield  {journal} {\bibinfo  {journal} {npj Quantum Information}\ }\textbf {\bibinfo {volume} {5}},\ \bibinfo {pages} {103} (\bibinfo {year} {2019})}\BibitemShut {NoStop}%
\bibitem [{\citenamefont {Romero}\ and\ \citenamefont {Aspuru-Guzik}(2021)}]{romero2021variational}%
  \BibitemOpen
  \bibfield  {author} {\bibinfo {author} {\bibfnamefont {J.}~\bibnamefont {Romero}}\ and\ \bibinfo {author} {\bibfnamefont {A.}~\bibnamefont {Aspuru-Guzik}},\ }\bibfield  {title} {\bibinfo {title} {Variational quantum generators: Generative adversarial quantum machine learning for continuous distributions},\ }\href@noop {} {\bibfield  {journal} {\bibinfo  {journal} {Advanced Quantum Technologies}\ }\textbf {\bibinfo {volume} {4}},\ \bibinfo {pages} {2000003} (\bibinfo {year} {2021})}\BibitemShut {NoStop}%
\bibitem [{\citenamefont {Kitaev}\ and\ \citenamefont {Webb}(2008)}]{kitaev2008wavefunction}%
  \BibitemOpen
  \bibfield  {author} {\bibinfo {author} {\bibfnamefont {A.}~\bibnamefont {Kitaev}}\ and\ \bibinfo {author} {\bibfnamefont {W.~A.}\ \bibnamefont {Webb}},\ }\bibfield  {title} {\bibinfo {title} {Wavefunction preparation and resampling using a quantum computer},\ }\href@noop {} {\bibfield  {journal} {\bibinfo  {journal} {arXiv preprint arXiv:0801.0342}\ } (\bibinfo {year} {2008})}\BibitemShut {NoStop}%
\bibitem [{\citenamefont {Rattew}\ \emph {et~al.}(2021)\citenamefont {Rattew}, \citenamefont {Sun}, \citenamefont {Minssen},\ and\ \citenamefont {Pistoia}}]{rattew2021efficient}%
  \BibitemOpen
  \bibfield  {author} {\bibinfo {author} {\bibfnamefont {A.~G.}\ \bibnamefont {Rattew}}, \bibinfo {author} {\bibfnamefont {Y.}~\bibnamefont {Sun}}, \bibinfo {author} {\bibfnamefont {P.}~\bibnamefont {Minssen}},\ and\ \bibinfo {author} {\bibfnamefont {M.}~\bibnamefont {Pistoia}},\ }\bibfield  {title} {\bibinfo {title} {The efficient preparation of normal distributions in quantum registers},\ }\href@noop {} {\bibfield  {journal} {\bibinfo  {journal} {Quantum}\ }\textbf {\bibinfo {volume} {5}},\ \bibinfo {pages} {609} (\bibinfo {year} {2021})}\BibitemShut {NoStop}%
\bibitem [{\citenamefont {Motlagh}\ and\ \citenamefont {Wiebe}(2023)}]{motlagh2023generalized}%
  \BibitemOpen
  \bibfield  {author} {\bibinfo {author} {\bibfnamefont {D.}~\bibnamefont {Motlagh}}\ and\ \bibinfo {author} {\bibfnamefont {N.}~\bibnamefont {Wiebe}},\ }\bibfield  {title} {\bibinfo {title} {Generalized quantum signal processing},\ }\href@noop {} {\bibfield  {journal} {\bibinfo  {journal} {arXiv preprint arXiv:2308.01501}\ } (\bibinfo {year} {2023})}\BibitemShut {NoStop}%
\bibitem [{\citenamefont {Lloyd}\ \emph {et~al.}(2014)\citenamefont {Lloyd}, \citenamefont {Mohseni},\ and\ \citenamefont {Rebentrost}}]{lloyd2014quantum}%
  \BibitemOpen
  \bibfield  {author} {\bibinfo {author} {\bibfnamefont {S.}~\bibnamefont {Lloyd}}, \bibinfo {author} {\bibfnamefont {M.}~\bibnamefont {Mohseni}},\ and\ \bibinfo {author} {\bibfnamefont {P.}~\bibnamefont {Rebentrost}},\ }\bibfield  {title} {\bibinfo {title} {Quantum principal component analysis},\ }\href@noop {} {\bibfield  {journal} {\bibinfo  {journal} {Nature Physics}\ }\textbf {\bibinfo {volume} {10}},\ \bibinfo {pages} {631} (\bibinfo {year} {2014})}\BibitemShut {NoStop}%
\bibitem [{\citenamefont {Kimmel}\ \emph {et~al.}(2017)\citenamefont {Kimmel}, \citenamefont {Lin}, \citenamefont {Low}, \citenamefont {Ozols},\ and\ \citenamefont {Yoder}}]{kimmel2017hamiltonian}%
  \BibitemOpen
  \bibfield  {author} {\bibinfo {author} {\bibfnamefont {S.}~\bibnamefont {Kimmel}}, \bibinfo {author} {\bibfnamefont {C.~Y.-Y.}\ \bibnamefont {Lin}}, \bibinfo {author} {\bibfnamefont {G.~H.}\ \bibnamefont {Low}}, \bibinfo {author} {\bibfnamefont {M.}~\bibnamefont {Ozols}},\ and\ \bibinfo {author} {\bibfnamefont {T.~J.}\ \bibnamefont {Yoder}},\ }\bibfield  {title} {\bibinfo {title} {Hamiltonian simulation with optimal sample complexity},\ }\href@noop {} {\bibfield  {journal} {\bibinfo  {journal} {npj Quantum Information}\ }\textbf {\bibinfo {volume} {3}},\ \bibinfo {pages} {13} (\bibinfo {year} {2017})}\BibitemShut {NoStop}%
\bibitem [{\citenamefont {Cerezo}\ \emph {et~al.}(2022)\citenamefont {Cerezo}, \citenamefont {Verdon}, \citenamefont {Huang}, \citenamefont {Cincio},\ and\ \citenamefont {Coles}}]{cerezo2022challenges}%
  \BibitemOpen
  \bibfield  {author} {\bibinfo {author} {\bibfnamefont {M.}~\bibnamefont {Cerezo}}, \bibinfo {author} {\bibfnamefont {G.}~\bibnamefont {Verdon}}, \bibinfo {author} {\bibfnamefont {H.-Y.}\ \bibnamefont {Huang}}, \bibinfo {author} {\bibfnamefont {L.}~\bibnamefont {Cincio}},\ and\ \bibinfo {author} {\bibfnamefont {P.~J.}\ \bibnamefont {Coles}},\ }\bibfield  {title} {\bibinfo {title} {Challenges and opportunities in quantum machine learning},\ }\href@noop {} {\bibfield  {journal} {\bibinfo  {journal} {Nature Computational Science}\ }\textbf {\bibinfo {volume} {2}},\ \bibinfo {pages} {567} (\bibinfo {year} {2022})}\BibitemShut {NoStop}%
\bibitem [{\citenamefont {Jaques}\ and\ \citenamefont {Rattew}(2023)}]{jaques2023qram}%
  \BibitemOpen
  \bibfield  {author} {\bibinfo {author} {\bibfnamefont {S.}~\bibnamefont {Jaques}}\ and\ \bibinfo {author} {\bibfnamefont {A.~G.}\ \bibnamefont {Rattew}},\ }\bibfield  {title} {\bibinfo {title} {Qram: A survey and critique},\ }\href@noop {} {\bibfield  {journal} {\bibinfo  {journal} {arXiv preprint arXiv:2305.10310}\ } (\bibinfo {year} {2023})}\BibitemShut {NoStop}%
\bibitem [{\citenamefont {Chakraborty}\ \emph {et~al.}(2018)\citenamefont {Chakraborty}, \citenamefont {Gily{\'e}n},\ and\ \citenamefont {Jeffery}}]{chakraborty2018power}%
  \BibitemOpen
  \bibfield  {author} {\bibinfo {author} {\bibfnamefont {S.}~\bibnamefont {Chakraborty}}, \bibinfo {author} {\bibfnamefont {A.}~\bibnamefont {Gily{\'e}n}},\ and\ \bibinfo {author} {\bibfnamefont {S.}~\bibnamefont {Jeffery}},\ }\bibfield  {title} {\bibinfo {title} {The power of block-encoded matrix powers: improved regression techniques via faster hamiltonian simulation},\ }\href@noop {} {\bibfield  {journal} {\bibinfo  {journal} {arXiv preprint arXiv:1804.01973}\ } (\bibinfo {year} {2018})}\BibitemShut {NoStop}%
\bibitem [{\citenamefont {Van~Apeldoorn}\ \emph {et~al.}(2020)\citenamefont {Van~Apeldoorn}, \citenamefont {Gily{\'e}n}, \citenamefont {Gribling},\ and\ \citenamefont {de~Wolf}}]{van2020quantum}%
  \BibitemOpen
  \bibfield  {author} {\bibinfo {author} {\bibfnamefont {J.}~\bibnamefont {Van~Apeldoorn}}, \bibinfo {author} {\bibfnamefont {A.}~\bibnamefont {Gily{\'e}n}}, \bibinfo {author} {\bibfnamefont {S.}~\bibnamefont {Gribling}},\ and\ \bibinfo {author} {\bibfnamefont {R.}~\bibnamefont {de~Wolf}},\ }\bibfield  {title} {\bibinfo {title} {Quantum sdp-solvers: Better upper and lower bounds},\ }\href@noop {} {\bibfield  {journal} {\bibinfo  {journal} {Quantum}\ }\textbf {\bibinfo {volume} {4}},\ \bibinfo {pages} {230} (\bibinfo {year} {2020})}\BibitemShut {NoStop}%
\bibitem [{\citenamefont {Grover}(2005)}]{grover2005different}%
  \BibitemOpen
  \bibfield  {author} {\bibinfo {author} {\bibfnamefont {L.~K.}\ \bibnamefont {Grover}},\ }\bibfield  {title} {\bibinfo {title} {A different kind of quantum search},\ }\href@noop {} {\bibfield  {journal} {\bibinfo  {journal} {arXiv preprint quant-ph/0503205}\ } (\bibinfo {year} {2005})}\BibitemShut {NoStop}%
\end{thebibliography}%

\appendix
\section{Comparison to \cite{guo2021nonlinear}}
\label{appComparison}

Next, in \cref{theorem:modified_variant_of_previous_technique_complexity_of_applying_a_polynomial_to_a_state}, we derive an analogous guarantee to that in \cite{guo2021nonlinear}, with the same asymptotic properties, only using our modified diagonal block encoding framework. We also extend their result to $\ell_2$-norm error bound, which was not included in the previous work. 
The proof mirrors that in \cref{theorem:polynomial_transformation_of_real_state_amplitudes_via_importance_sampling}.
We subsequently derive a new result in \cref{theorem:generalized_end_to_end_complexity_for_previous_technique_for_efficiently_analytic_functions} (building on \cref{theorem:modified_variant_of_previous_technique_complexity_of_applying_a_polynomial_to_a_state}) with end-to-end complexity guarantees in the case where a non-polynomial function (with an efficient uniform approximation by a polynomial) is to be applied. The proof mirrors that in \cref{theorem:generalized_end_to_end_complexity_for_f_0_is_0_for_efficiently_analytic_functions}.

\begin{theorem}[Modification of non-linear transformation of real amplitudes from \cite{guo2021nonlinear} with $\ell_2$-error bounds]\label{theorem:modified_variant_of_previous_technique_complexity_of_applying_a_polynomial_to_a_state}
Given a state preparation unitary $U\ket{0}_n = \ket{\psi}_n = \sum_j \psi_j\ket{j}_n$ (such that $\forall j, \psi_j \in \mathbb{R}$), and a degree $k$ polynomial function, $P(x) = \sum_{j=0}^k a_j x^j$, such that $\gamma := \max_{x\in [-1, 1]}|P(x)|$, and $\forall j\in[k+1], a_j \in \mathbb{C}$, define $\mathcal{N}^2 := \sum_{j}|P(\psi_j)|^2$. Then, the technique in \cite{guo2021nonlinear} constructs an $\ell_2$-normalized quantum state $\ket{\phi}$ such that $\lnorm{\ket{\phi} - \frac{1}{\mathcal{N}}\sum_j f(\psi_j)\ket{j}}_2 \le \epsilon$ with arbitrarily high probability of success with a total of $O(k\gamma \sqrt{N}/\mathcal{N})$ queries to a controlled-$U$ circuit and a controlled-$U^{\dagger}$ circuit, with $O(kn\gamma\sqrt{N}/\mathcal{N})$ additional circuit depth, and with a total of $O(poly(k, \log(N\gamma^2/\epsilon^2 \mathcal{N}^2)))$ classical computation. This requires $O(n)$ ancillary qubits.
\end{theorem}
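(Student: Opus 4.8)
The plan is to reproduce the structure of \cref{theorem:polynomial_transformation_of_real_state_amplitudes_via_importance_sampling}, changing only the state to which the QSVT-processed block-encoding is applied: here we feed it the uniform superposition $\ket u := H^{\otimes n}\ket 0_n = \tfrac{1}{\sqrt N}\sum_j\ket j$ rather than a reduced-polynomial transform of $\ket\psi$, and this is precisely what produces the $\sqrt N$ overhead. Since $\psi_j\in\mathbb R$, the diagonal matrix $A:=\mathrm{diag}(\psi_0,\dots,\psi_{N-1})$ equals $A^{(0)}$, so \cref{lemma:diagonal_block_encoding_of_real_state_parts} gives a $(1,n+2,0)$-block-encoding $U^{(0)}$ of $A$ with $O(n)$ depth and $O(1)$ controlled-$U$ queries. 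Rescaling $P$ to $g(x):=P(x)/(4\gamma)$ (so $|g(x)|\le 1/4$ on $[-1,1]$) and applying \cref{lemma:polynomial_transformation_of_the_real_part_block_encoding} yields $\tilde U_g$, a $(1,n+4,\delta_0)$-block-encoding of $g(A)$, using $O(kn)$ depth, $k$ controlled-$U^{(0)}$/$(U^{(0)})^\dagger$ queries (hence $O(k)$ controlled-$U$/$U^\dagger$ queries), and $O(\mathrm{poly}(k,\log(1/\delta_0)))$ classical time; let $\tilde g$ be the function with $\tilde U_g$ an \emph{exact} block-encoding of $\tilde g(A)$ and $|g(\psi_j)-\tilde g(\psi_j)|\le\delta_0$ for all $j$.

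Next I would run $\tilde U_g(\ket 0_{n+4}\otimes\ket u)$ and post-select the flag on $\ket 0_{n+4}$, obtaining a state proportional to $\tilde g(A)\ket u=\tfrac{1}{\sqrt N}\sum_j\tilde g(\psi_j)\ket j$. The exact target is $\ket{P(\psi)}:=g(A)\ket u=\tfrac{1}{4\gamma\sqrt N}\sum_j P(\psi_j)\ket j$ with $\mathcal N_1:=\lnorm{\ket{P(\psi)}}_2=\mathcal N/(4\gamma\sqrt N)$; write $\mathcal N_2:=\lnorm{\tilde g(A)\ket u}_2$. The $\ell_2$-error analysis (the new content relative to \cite{guo2021nonlinear}) then needs two bounds so that \cref{lemma:normalized_state_deviation_due_to_normalization} applies. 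First, $\lnorm{\ket{P(\psi)}-\tilde g(A)\ket u}_2\le\lnorm{g(A)-\tilde g(A)}_2\lnorm{\ket u}_2\le\delta_0$. Second, $|\mathcal N_1^2-\mathcal N_2^2|=\tfrac1N\big|\sum_j(|g(\psi_j)|^2-|\tilde g(\psi_j)|^2)\big|\le\tfrac1N\sum_j|g(\psi_j)-\tilde g(\psi_j)|\,(|g(\psi_j)|+|\tilde g(\psi_j)|)\le\delta_0$, using $|g|\le1/4$ and $|\tilde g|\le3/4$ for $\delta_0\le1/2$; hence $|\mathcal N_1-\mathcal N_2|\le\sqrt{\delta_0}$ by $|\sqrt a-\sqrt b|\le\sqrt{|a-b|}$. \cref{lemma:normalized_state_deviation_due_to_normalization} then gives total $\ell_2$-error at most $(\sqrt{\delta_0}+\delta_0)/\mathcal N_1\le 2\sqrt{\delta_0}/\mathcal N_1=8\gamma\sqrt{N\delta_0}/\mathcal N$, so choosing $\delta_0=\epsilon^2\mathcal N^2/(64\gamma^2 N)$ makes it $\le\epsilon$.

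For the cost: $|\mathcal N_1^2-\mathcal N_2^2|\le\delta_0=(\epsilon^2/4)\mathcal N_1^2$ gives $\mathcal N_2^2\ge\tfrac34\mathcal N_1^2$, so $O(1/\mathcal N_1)=O(\gamma\sqrt N/\mathcal N)$ rounds of (fixed-point or oblivious) amplitude amplification suffice for constant success probability; each round uses $O(1)$ copies of $\tilde U_g,\tilde U_g^\dagger$, i.e.\ $O(k)$ controlled-$U$/$U^\dagger$ queries and $O(kn)$ depth, giving $O(k\gamma\sqrt N/\mathcal N)$ queries, $O(kn\gamma\sqrt N/\mathcal N)$ depth, $O(\mathrm{poly}(k,\log(1/\delta_0)))=O(\mathrm{poly}(k,\log(N\gamma^2/\epsilon^2\mathcal N^2)))$ classical time, and $O(n)$ ancillas. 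The main obstacle is the new normalization bound $|\mathcal N_1-\mathcal N_2|\le\sqrt{\delta_0}$: the naive estimate via $\lnorm{\ket{P(\psi)}-\tilde g(A)\ket u}_2$ by itself is too weak to pin down the $\sqrt N$ dependence, and one must instead exploit the cancellation inside $\sum_j(|g|^2-|\tilde g|^2)$ together with $\lnorm{\ket u}_2=1$ to keep $\delta_0$ only $\mathrm{poly}$-logarithmically small (and hence the classical cost $\mathrm{poly}$ in $\log$); the remainder is a transcription of \cref{theorem:polynomial_transformation_of_real_state_amplitudes_via_importance_sampling} with the input state swapped.
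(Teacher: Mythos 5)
Your proposal is correct and follows essentially the same route as the paper's proof: rescale $P$ by $4\gamma$, QSVT-process the diagonal block-encoding of $A$ via \cref{lemma:polynomial_transformation_of_the_real_part_block_encoding}, apply the result to the uniform superposition, combine the two error sources through \cref{lemma:normalized_state_deviation_due_to_normalization}, and amplitude-amplify $O(1/\mathcal{N}_1)=O(\gamma\sqrt{N}/\mathcal{N})$ times. The only divergence is your bound $|\mathcal{N}_1-\mathcal{N}_2|\le\sqrt{\delta_0}$, which exploits the entrywise bound $|g(\psi_j)-\tilde g(\psi_j)|\le\delta_0$ (i.e., treats the implemented operator as diagonal, as the paper itself does in \cref{theorem:polynomial_transformation_of_real_state_amplitudes_via_importance_sampling}), whereas the paper allows a general perturbation and obtains the weaker $\sqrt{2\delta_0 N}$, forcing $\delta_0\propto N^{-2}$ rather than your $N^{-1}$ --- both give the same asymptotic complexity, and in fact the reverse triangle inequality $|\mathcal{N}_1-\mathcal{N}_2|\le\lnorm{(g(A)-\tilde g(A))\ket{u}}_2\le\delta_0$ would be simpler still, so your closing remark that the ``naive estimate is too weak'' is not quite right.
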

\begin{proof}

Let $f(x) := P(x)/4\gamma$, i.e. $\forall x, |f(x)| \le 1/4$.
Next, we invoke \cref{lemma:polynomial_transformation_of_the_real_part_block_encoding} to construct a $(1, n+4, \delta_0)$-block-encoding of $f(A)$, called $U_f$, with $O(kn)$ circuit depth, using a total of $k$ calls to a controlled $U_{\Re{A}}$ and $U_{\Re{A}}^{\dagger}$ circuit, and with $O(poly(k, \log(1/\delta_0)))$ classical time complexity. Then, in direct analogy to \cite{mitarai2019quantum, guo2021nonlinear}, we apply $U_f$ to the state $\ket{0}_{n+4}\ket{+^n}_n$ (where $\ket{+^n} := \frac{1}{\sqrt{N}}\sum_{j}\ket{j}$). Since our block-encoding has error $\delta_0$, we actually implement the transformation $\tilde{f}(A)$, where $\tilde{f}(A)$ is some linear operator satisfying $\lnorm{\tilde{f}(A) - f(A)}_2 \le \delta_0$. Define $\mathcal{N}_1 := \lnorm{f(A)\ket{+^n}}_2$ and $\mathcal{N}_2 := \lnorm{\tilde{f}(A)\ket{+^n}}_2$. Then, noting that $\frac{1}{\mathcal{N}}\sum_j f(\psi_j)\ket{j} = \frac{1}{\mathcal{N}_1}f(A)\ket{+^n}$, we wish to bound the total error $\epsilon$, in $\lnorm{\frac{1}{\mathcal{N}_1}f(A)\ket{+^n} - \frac{1}{\mathcal{N}_2}\tilde{f}(A)\ket{+^n}}_2 \le \epsilon$, and we also wish to bound the probability of success. 

Again, to bound the error, we must bound two quantities, $|\mathcal{N}_1 - \mathcal{N}_2|$ and $\lnorm{f(A)\ket{+^n} - \tilde{f}(A)\ket{+^n}}_2$, so that \cref{lemma:normalized_state_deviation_due_to_normalization} may be applied. The bound of,
\begin{align}\label{eqn:midstep_transform_of_real_state_amplitudes_unnormalized_bound_simple}
    \lnorm{f(A)\ket{+^n} - \tilde{f}(A)\ket{+^n}}_2 \le \delta_0
\end{align}
follows trivially from the fact that for any operator $M$ and for any vector $\bm{z}$, $\lnorm{M\bm{z}}_2\le \lnorm{M}_2\lnorm{\bm{z}}_2$, and the fact that $\lnorm{\tilde{f}(A) - f(A)}_2 \le \delta_0$.

Define the notation $\tilde{f}(A)\ket{+^n} = \frac{1}{\sqrt{N}}\sum_j\phi_j \ket{j}$. For any non-negative real numbers $a,b$, we have $|\sqrt{a} - \sqrt{b}| \le \sqrt{|a - b|}$. Consequently, $|\mathcal{N}_1 - \mathcal{N}_2| \le \sqrt{|\mathcal{N}_1^2 - \mathcal{N}_2^2|}$. Thus,
\begin{align}
    |\mathcal{N}_1 - \mathcal{N}_2| 
    &\le
    \sqrt{\left|
    \frac{1}{N}
    \sum_j |f(\psi_j)|^2 - |\phi_j|^2
    \right|}
    \le 
    \sqrt{\frac{1}{N}
    \max_j\left| |f(\psi_j)| - |\phi_j| \right|\sum_j\left(|f(\psi_j)| + |\phi_j| \right)}.
\end{align}
Since $||a| - |b|| \le |a - b|$, $\max_j\left| |f(\psi_j)| - |\phi_j| \right| \le \max_j\left| f(\psi_j) - \phi_j \right|$. From \cref{eqn:midstep_transform_of_real_state_amplitudes_unnormalized_bound_simple} we have that $\sum_j |f(\psi_j) - \phi_j|^2 \le \delta_0^2 N$, implying that $\forall j, |f(\psi_j) - \phi_j| \le \delta_0\sqrt{N}$. As a result,
\begin{align}
    |\mathcal{N}_1 - \mathcal{N}_2| 
    &\le
    \sqrt{\frac{\delta_0}{\sqrt{N}}\sum_j \left(|f(\psi_j)| + |\phi_j| \right)}.
\end{align}
Observing that $\lnorm{f(A)\ket{+^n} + \tilde{f}(A)\ket{+^n}}_2^2 \le 4$,  $\forall j, |f(\psi_j) + \phi_j| \le 2\sqrt{N}$. Thus,
\begin{align}\label{eqn:midstep_transform_of_real_state_amplitudes_norm_dif_bound_simple}
    |\mathcal{N}_1 - \mathcal{N}_2| 
    &\le
    \sqrt{2\delta_0 N}.
\end{align}
Combining 
\cref{eqn:midstep_transform_of_real_state_amplitudes_unnormalized_bound_simple}, \cref{eqn:midstep_transform_of_real_state_amplitudes_norm_dif_bound_simple} and \cref{lemma:normalized_state_deviation_due_to_normalization}, we then obtain the following bound on the total $\ell_2$-norm error,
\begin{align}
    \lnorm{\frac{1}{\mathcal{N}_1}f(A)\ket{+^n} - \frac{1}{\mathcal{N}_2}\tilde{f}(A)\ket{+^n}}_2 \le
    \frac{\sqrt{2\delta_0 N} + \delta_0}{\mathcal{N}_1}
    \le 
    \frac{16N\gamma\sqrt{\delta_0}}{\mathcal{N}},
\end{align}
using the fact that $\mathcal{N}_1 = \mathcal{N}/4\gamma\sqrt{N}$.
We must now lower-bound the probability of success, and thus upper-bound the number of amplitude amplification steps required to obtain an arbitrarily high probability of success. 
The probability of success of the procedure is given by $\lnorm{\tilde{f}(A)\ket{+^n}}_2^2 = \mathcal{N}_2^2$. From  $|\mathcal{N}_1^2 - \mathcal{N}_2^2| \le 2\delta_0N$, follows that
$\mathcal{N}_2^2 \ge \mathcal{N}_1^2 - 2\delta_0 N$.
Imposing the restriction that $\mathcal{N}_1^2 \ge 2\delta_0 N$, which as we show later does not result in a loss of generality, this gives a number of AA steps required of $O(1/\sqrt{\mathcal{N}_1^2 - 2\delta_0 N})$. 
To obtain an overall error of at most $\epsilon$,  
set $\delta_0 = \epsilon^2\mathcal{N}^2/256 \gamma^2 N^2$, satisfying the requirement that $\mathcal{N}^2/N \ge 32\delta_0 N$, since $1 \ge \mathcal{N}/\gamma\sqrt{N}$.
This results in the upper-bound on the number of AA steps required of $O(\gamma\sqrt{N}/\mathcal{N})$. Thus, since we require $O(\gamma\sqrt{N}/\mathcal{N})$ queries to the block-encoding of $f(A)$, the overall algorithm succeeds with arbitrarily high probability of success with a total of $O(k \gamma\sqrt{N}/\mathcal{N})$ queries to a controlled $U$ and controlled $U^{\dagger}$ circuit, with a total of $O(k n\gamma \sqrt{N}/\mathcal{N})$ circuit depth, and with $O(poly(k, \log(N\gamma^2/\epsilon^2\mathcal{N}^2)))$ classical time complexity. 
\end{proof}

\begin{theorem}\label{theorem:generalized_end_to_end_complexity_for_previous_technique_for_efficiently_analytic_functions}
We are given an $n$-qubit circuit $U$ such that $U\ket{0} = \sum_{j=1}^N\psi_j\ket{j}$, with $N = 2^n$ and $\forall j, \psi_j\in \mathbb R$, and a function $f(x)$ with $f(0) > 0$, $\gamma:=\max_{x\in[-1,1]}|f(x)|$, and $\mathcal{N}^2 := \sum_{j=1}^N |f(\psi_j)|^2$. 
For $\epsilon >0$, let $f(x)$ be $\frac{\gamma N}{\epsilon\mathcal{N}^2}$-approximable by a polynomial $P_k(x)$ with degree 
$k = k(\gamma N/\epsilon\mathcal{N}^2)$, where $k$ is a function describing the degree of the polynomial in terms of the error.
Additionally, define $\gamma := \max_{x \in [-1, 1]}|P_k(x)/x|$. 
Then, the algorithm of \cite{guo2021nonlinear}, as described in \cref{theorem:modified_variant_of_previous_technique_complexity_of_applying_a_polynomial_to_a_state}, can prepare a state $\epsilon$ close in $\ell_2$ norm-distance (for small $\epsilon$) from $\frac{1}{\mathcal{N}}\sum_j f(\psi_j)\ket{j}$ (where $\mathcal{N}$ is the normalization factor) with arbitrarily high probability of success with $O(\log(\gamma N/\epsilon\mathcal{N}^2)\gamma \sqrt{N}/\mathcal{N})$ query complexity to a controlled $U$ and $U^{\dagger}$ circuit, with overall circuit depth of $O(n\log(\gamma N/\epsilon\mathcal{N}^2)\gamma \sqrt{N}/\mathcal{N})$, $O(polylog(N\gamma^2/\epsilon^2\mathcal{N}^2))$ classical computation, and $O(n)$ ancillary qubits.
\end{theorem}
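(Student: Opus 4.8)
The plan is to follow the proof of \cref{theorem:generalized_end_to_end_complexity_for_f_0_is_0_for_efficiently_analytic_functions} essentially verbatim, with the single change that the polynomial sub-routine \cref{theorem:polynomial_transformation_of_real_state_amplitudes_via_importance_sampling} is replaced by its non-importance-sampled counterpart \cref{theorem:modified_variant_of_previous_technique_complexity_of_applying_a_polynomial_to_a_state}; since the latter applies the block-encoding to the uniform superposition rather than to $\ket{\psi}$, it does not require the approximating polynomial to vanish at the origin, so the argument carries over to the case $f(0)>0$ without modification. Concretely, I would introduce the degree-$k$ approximant $P_k$, set $\mathcal{N}_1^2 := \sum_j |P_k(\psi_j)|^2$, and split the target error by the triangle inequality into (a) $\lnorm{\frac{1}{\mathcal{N}}\sum_j f(\psi_j)\ket{j} - \frac{1}{\mathcal{N}_1}\sum_j P_k(\psi_j)\ket{j}}_2$ and (b) the distance between the normalized polynomial state and the state actually output by \cref{theorem:modified_variant_of_previous_technique_complexity_of_applying_a_polynomial_to_a_state}, bounding each term by $\epsilon/2$.

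For term (a), write $\epsilon_0 := \max_{x\in[-1,1]}|f(x) - P_k(x)|$. Choosing $\epsilon_0$ as the approximation quality asserted in the hypothesis (which the proof pins down to $\epsilon_0 = \Theta(\epsilon\mathcal{N}^2/\gamma N)$, exactly as in \cref{lemma:l2_norm_error_of_polynomial_state_from_efficiently_analytic_function_state}), that lemma bounds term (a) by $\epsilon/2$. The same choice of $\epsilon_0$ also satisfies $\epsilon_0 \le \mathcal{N}^2/6\gamma N$, so \cref{lemma:technical1}(iii) gives $\mathcal{N}_1 \ge \mathcal{N}/2$, and \cref{lemma:technical1}(i) gives $\gamma_P := \max_{x\in[-1,1]}|P_k(x)| \le 2\gamma$; by definition of the degree function the degree of $P_k$ is then $k = k(\epsilon\mathcal{N}^2/\gamma N)$.

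For term (b), apply \cref{theorem:modified_variant_of_previous_technique_complexity_of_applying_a_polynomial_to_a_state} to $P_k$ with target error $\epsilon/2$: its query complexity is $O(k\gamma_P\sqrt{N}/\mathcal{N}_1)$, its circuit depth $O(nk\gamma_P\sqrt{N}/\mathcal{N}_1)$, and its classical cost $O(\mathrm{poly}(k,\log(N\gamma_P^2/\epsilon^2\mathcal{N}_1^2)))$; substituting $\gamma_P \le 2\gamma$ and $\mathcal{N}_1 \ge \mathcal{N}/2$ turns these into $O(k\gamma\sqrt{N}/\mathcal{N})$, $O(nk\gamma\sqrt{N}/\mathcal{N})$, and $O(\mathrm{poly}(k,\log(N\gamma^2/\epsilon^2\mathcal{N}^2)))$ respectively. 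Inserting the degree $k$ from the hypothesis (and using that it is logarithmic in its argument, so that $\mathrm{poly}(k)=\mathrm{polylog}$ and the two logarithmic factors merge under the mild assumption $\log(\gamma N/\epsilon\mathcal{N}^2) \le \log(N\gamma^2/\epsilon^2\mathcal{N}^2)$) yields the stated complexities. Combining (a) and (b) by the triangle inequality gives the overall $\epsilon$ bound, and the ancilla count $O(n)$ is inherited directly from \cref{theorem:modified_variant_of_previous_technique_complexity_of_applying_a_polynomial_to_a_state}.

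The main obstacle — which is purely bookkeeping rather than a genuine difficulty — is managing the three normalization constants simultaneously: $\mathcal{N}$ for the image of $f$, $\mathcal{N}_1$ for the exact image of $P_k$, and the perturbed constant $\mathcal{N}_2$ internal to \cref{theorem:modified_variant_of_previous_technique_complexity_of_applying_a_polynomial_to_a_state}, and checking that a single choice of $\epsilon_0$ meets the smallness requirements of \cref{lemma:l2_norm_error_of_polynomial_state_from_efficiently_analytic_function_state}, \cref{lemma:technical1}, and the sub-routine at once, valid for all sufficiently small $\epsilon$. This is exactly the bookkeeping already carried out in the $f(0)=0$ case, so no new estimates are required.
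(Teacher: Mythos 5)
Your proposal is correct and follows exactly the route the paper takes: its proof of this theorem is a one-line remark saying to repeat the argument of \cref{theorem:generalized_end_to_end_complexity_for_f_0_is_0_for_efficiently_analytic_functions} with the subroutine swapped for \cref{theorem:modified_variant_of_previous_technique_complexity_of_applying_a_polynomial_to_a_state}, splitting the error into the polynomial-approximation part (via \cref{lemma:l2_norm_error_of_polynomial_state_from_efficiently_analytic_function_state} and \cref{lemma:technical1}) and the algorithmic part, combined by the triangle inequality. Your write-up simply supplies the bookkeeping the paper elides, including the sensible reading of the theorem's doubly-defined $\gamma$ as $\max_{x}|P_k(x)|\le 2\gamma$ for this non-importance-sampled variant.
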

\begin{proof}
As in \cref{theorem:generalized_end_to_end_complexity_for_f_0_is_0_for_efficiently_analytic_functions}, we bound the $\ell_2$-norm error from the polynomial approximation, and then the $\ell_2$-norm error from the algorithmic error, combining the two via triangle inequality. 
\end{proof}

\section{State Preparation Block Encodings}\label{section:state_preparation_block_encodings}

So far, we have used the language of state preparation unitaries, as per \cref{def:state_prep_unitary_input}. However, states are often produced utilizing ancillary qubits or projective partial measurements which are not naturally captured in this framework. Consequently, we introduce the notion of a state preparation block encoding (SPBE) in \cref{def:state_prep_block_encoding} to generalize our previous results. SPBEs are a special case of block-encodings (as proposed by \cite{gilyen2019quantum}, building off of the previous work of e.g., \cite{low2017hamiltonian,chakraborty2018power,van2020quantum}) only imposing that the $a+n$ qubit unitary block-encoding is applied to the $\ket{0}_{a + n}$ state. 
The results in this section will be obvious to authors familiar with the QSVT and QSP literature, but we include these results for the sake of generality.

In \cref{lemma:simple_spbe_construction}, we provide a simple and common example of an SPBE, which is commonly encountered in the literature. 
In \cref{lemma:spbe_fixed_amplitude_amplification}, we show how an SPBE can be trivially converted to a state preparation unitary in a higher-dimensional space through fixed-point amplitude amplification.
In \cref{lemma:deviation_in_output_state_of_function_with_ddeviation_in_input_states} we prove that amplitude transformation procedures (which can be thought of as mappings between SPBEs) are not sensitive to perturbations in their input states. 
We then provide a simple observation in \cref{generalizing_previous_results_to_spbe} which shows how \cref{theorem:polynomial_transformation_of_real_state_amplitudes_via_importance_sampling,theorem:modified_variant_of_previous_technique_complexity_of_applying_a_polynomial_to_a_state,theorem:generalized_end_to_end_complexity_for_f_0_is_0_for_efficiently_analytic_functions,theorem:generalized_end_to_end_complexity_for_previous_technique_for_efficiently_analytic_functions} can be used in this different input model essentially without modification.

\begin{definition}[State Preparation Block Encoding (SPBE)]\label{def:state_prep_block_encoding}
Let $\alpha \ge 1$, $a\in \mathbb{N}$ and $\epsilon \ge 0$.
We call the $n + a$ qubit unitary matrix $U_{\psi}$ an $(\alpha, a, \epsilon)$-SPBE for the $n$ qubit quantum state $\ket{\psi}_n$, if 
\begin{align}
    \lnorm{\ket{\psi}_n - \alpha(\bra{0}_a\otimes I_n)U_{\psi}\ket{0}_{a + n}}_2 \le \epsilon.
\end{align}
\end{definition}

\begin{remark}
    A $(1, 0, 0)$-SPBE is simply a state preparation unitary, as per \cref{def:state_prep_unitary_input}.
\end{remark}

\begin{lemma}[A standard SPBE]\label{lemma:simple_spbe_construction}
    Define the $n$ qubit quantum state $\ket{\psi}_n$, and the $a + n$ qubit state $\ket{\bot}_{a+n}$, such that $\bra{\bot}_{a+n}(\ket{0}_a\otimes I_n) = \bm 0$ (i.e., the all zero vector). Then, any $n+a$ qubit unitary satisfying
    \begin{align}
        U_{\psi}\ket{0}_{a + n} = (1/\alpha)\ket{0}_a\ket{\psi}_n + \sqrt{1 - |1/\alpha|^2}\ket{\bot}_{a + n},
    \end{align}
    is an $(\alpha, a, 0)$-SPBE for the $n$-qubit quantum state $\ket{\psi}_n$.
\end{lemma}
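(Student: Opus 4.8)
The plan is to verify \cref{def:state_prep_block_encoding} directly: since we want an $(\alpha,a,0)$-SPBE, the claimed inequality is actually the equality $\lnorm{\ket{\psi}_n - \alpha(\bra{0}_a\otimes I_n)U_\psi\ket{0}_{a+n}}_2 = 0$, which I will check by expanding the projector against the given action of $U_\psi$. Before that, I would first record why such a unitary $U_\psi$ can exist, so the statement is not vacuous: the two summands $\ket{0}_a\ket{\psi}_n$ and $\ket{\bot}_{a+n}$ are orthogonal — this is exactly the hypothesis $\bra{\bot}_{a+n}(\ket{0}_a\otimes I_n) = \bm 0$ evaluated on $\ket{\psi}_n$ — and $\alpha \ge 1$ forces $|1/\alpha|\le 1$, so $\sqrt{1-|1/\alpha|^2}$ is a well-defined nonnegative real and the target vector has squared norm $|1/\alpha|^2 + (1-|1/\alpha|^2) = 1$; any such normalized image vector is the image of $\ket{0}_{a+n}$ under some unitary.

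Next I would compute $\alpha(\bra{0}_a\otimes I_n)U_\psi\ket{0}_{a+n}$ by substituting the displayed formula for $U_\psi\ket{0}_{a+n}$ and using linearity. The contribution of the $\sqrt{1-|1/\alpha|^2}\,\ket{\bot}_{a+n}$ term vanishes because $(\bra{0}_a\otimes I_n)\ket{\bot}_{a+n} = \bm 0$ by hypothesis. The contribution of the $(1/\alpha)\ket{0}_a\ket{\psi}_n$ term is $\alpha\cdot(1/\alpha)\cdot(\bra{0}_a\otimes I_n)(\ket{0}_a\otimes\ket{\psi}_n) = \ip{0}{0}_a\,\ket{\psi}_n = \ket{\psi}_n$. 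Hence $\ket{\psi}_n - \alpha(\bra{0}_a\otimes I_n)U_\psi\ket{0}_{a+n} = \bm 0$, whose $\ell_2$-norm is $0\le 0$, so $U_\psi$ is an $(\alpha,a,0)$-SPBE for $\ket{\psi}_n$ by \cref{def:state_prep_block_encoding}.

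There is essentially no real obstacle here; the only thing to be careful about is the tensor-product bookkeeping — namely that $(\bra{0}_a\otimes I_n)$ contracts only the ancilla register and acts as the identity on the $n$-qubit system register — and recording the normalization argument above so that the lemma is not trivially about a nonexistent object.
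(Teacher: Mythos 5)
Your proposal is correct and takes the same (essentially only) route as the paper, which simply states that the lemma follows immediately from \cref{def:state_prep_block_encoding}; your explicit computation that the projector annihilates the $\ket{\bot}_{a+n}$ component and recovers $\ket{\psi}_n$ from the other term is exactly the verification being implied. The added remark on the existence and normalization of such a $U_\psi$ is a harmless (and reasonable) extra.
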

\begin{proof}
    This follows immediately from \cref{def:state_prep_block_encoding}.
\end{proof}

\begin{lemma}\label{lemma:spbe_fixed_amplitude_amplification}
Let $\epsilon_0 \in [0, 1/2]$, $\alpha \ge 1$, $a \in \mathbb N$.
Given a unitary $U_{\psi}$, an $(\alpha, a, \epsilon_0)$-SPBE for the $\ell_2$-normalized quantum state $\ket{\psi}$ with circuit complexity $O(T)$, we can construct a $(1, a + 1, \sqrt{2\epsilon_0} + \epsilon_1)$-SPBE with circuit complexity $O(T(a + n)\alpha \log(1/\epsilon_1))$ and with $O(\alpha\log(1/\epsilon_1))$ queries to a $U_{\psi}$ and $U_{\psi}^{\dagger}$ circuit.
\end{lemma}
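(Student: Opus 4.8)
The plan is to apply fixed-point amplitude amplification (in the ``uniform amplification'' form of \cite{gilyen2019quantum}) to the ``good'' projector $\Pi := \op{0}{0}_a\otimes I_n$ on the output of $U_\psi$, boosting the weight in that subspace from $\Omega(1/\alpha)$ up to $1-O(\epsilon_1)$, and then reading off an SPBE with $\alpha=1$ directly from \cref{def:state_prep_block_encoding}. First I would unpack the hypothesis: write $\ket{\tilde\psi}_n := \alpha(\bra{0}_a\otimes I_n)U_\psi\ket{0}_{a+n}$, so that $\lnorm{\ket{\psi}_n-\ket{\tilde\psi}_n}_2\le\epsilon_0$ and hence, since $\ket{\psi}$ is normalized, $\bigl|1-\lnorm{\ket{\tilde\psi}}_2\bigr|\le\epsilon_0$; in particular $\lnorm{\ket{\tilde\psi}}_2\ge 1-\epsilon_0\ge 1/2$ because $\epsilon_0\le 1/2$. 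Decomposing $U_\psi\ket{0}_{a+n}=\tfrac1\alpha\ket{0}_a\ket{\tilde\psi}_n+(I-\Pi)U_\psi\ket{0}_{a+n}$, the amplitude of the good subspace is $\lnorm{\Pi U_\psi\ket{0}_{a+n}}_2=\lnorm{\ket{\tilde\psi}}_2/\alpha\ge \tfrac1{2\alpha}$, so $\delta:=\tfrac1{2\alpha}$ is a valid a priori lower bound on this amplitude — which is exactly the input that fixed-point amplitude amplification needs.

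Next I would run fixed-point amplitude amplification with this $\delta$, using one additional ancilla qubit (hence $a+1$ ancillas in total) to realize the phased reflections about $\ket{0}_{a+n}$ and about $\Pi$; the former costs $O(a+n)$ elementary gates and the latter $O(a)$ gates (multi-controlled-$Z$ decompositions), and each round uses $O(1)$ calls to $U_\psi$ and $U_\psi^\dagger$. After $O(\tfrac1\delta\log(1/\epsilon_1))=O(\alpha\log(1/\epsilon_1))$ rounds one obtains a unitary $\tilde U$ on $a+1+n$ qubits with $O(\alpha\log(1/\epsilon_1))$ queries to $U_\psi,U_\psi^\dagger$ and total circuit complexity $O\bigl((T+a+n)\alpha\log(1/\epsilon_1)\bigr)\subseteq O\bigl(T(a+n)\alpha\log(1/\epsilon_1)\bigr)$, such that the renormalized projection of $\tilde U\ket{0}_{a+1+n}$ onto the good subspace $\op{0}{0}_{a+1}\otimes I_n$ is the target direction, i.e.
\begin{align*}
(\bra{0}_{a+1}\otimes I_n)\tilde U\ket{0}_{a+1+n}=c\,\frac{\ket{\tilde\psi}_n}{\lnorm{\ket{\tilde\psi}}_2},\qquad |1-c|\le\epsilon_1,
\end{align*}
where $c$ accounts for the amplification tolerance (and, if necessary, a known global phase corrected by a single-qubit phase gate). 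Here I would use that fixed-point amplification rotates only within the two-dimensional invariant subspace spanned by the normalized good and bad components of $U_\psi\ket{0}_{a+n}$, so the good direction it amplifies is $\ket{\tilde\psi}/\lnorm{\ket{\tilde\psi}}_2$, not $\ket{\psi}$.

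Finally I would combine the two error sources by the triangle inequality:
\begin{align*}
\lnorm{\ket{\psi}_n-(\bra{0}_{a+1}\otimes I_n)\tilde U\ket{0}_{a+1+n}}_2
&\le \lnorm{\ket{\psi}_n-\frac{\ket{\tilde\psi}_n}{\lnorm{\ket{\tilde\psi}}_2}}_2+|1-c|\\
&\le 2\epsilon_0+\epsilon_1\le\sqrt{2\epsilon_0}+\epsilon_1,
\end{align*}
where the first term is bounded via \cref{lemma:normalized_state_deviation_due_to_normalization} applied with $\mathcal N_a=1$, $\mathcal N_b=\lnorm{\ket{\tilde\psi}}_2$, $|\mathcal N_a-\mathcal N_b|\le\epsilon_0$, vector distance $\le\epsilon_0$, and $\max\{\mathcal N_a,\mathcal N_b\}\ge1$, and the last step uses $2\epsilon_0\le\sqrt{2\epsilon_0}$ for $\epsilon_0\le1/2$; by \cref{def:state_prep_block_encoding} this exhibits $\tilde U$ as a $(1,a+1,\sqrt{2\epsilon_0}+\epsilon_1)$-SPBE for $\ket{\psi}_n$. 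I expect the only genuine subtlety to be bookkeeping the fact that the block of $U_\psi$ encodes the perturbed, unnormalized vector $\ket{\tilde\psi}$: the $\sqrt{2\epsilon_0}$ part of the final error comes from renormalizing it (via \cref{lemma:normalized_state_deviation_due_to_normalization}) rather than from the amplification step, while the $\epsilon_1$ part is purely the fixed-point tolerance; verifying the lower bound $\delta=\tfrac1{2\alpha}$ and the reflection-gate counts is routine.
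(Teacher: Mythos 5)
Your proof is correct, but it implements the amplification step by a genuinely different route than the paper. The paper forms the rank-one operator $A=\Pi U_{\psi}\op{0}{0}_{a+n}$ with $\Pi=\op{0}{0}_a\otimes I_n$, builds a $(1,2,0)$-block-encoding of $A$ as a product of block-encodings (via \cref{lemma:product_of_block_encodings}), and then applies a degree-$O(\alpha\log(1/\epsilon_1))$ polynomial approximation of the sign function to push the single nonzero singular value $\mathcal{N}_{\phi}\ge 1/(2\alpha)$ up to within $\epsilon_1$ of $1$; this QSVT-based construction costs three extra ancillas, so the paper actually proves an $(1,a+3,\sqrt{2\epsilon_0}+\epsilon_1)$-SPBE and merely remarks that fewer ancillas are possible. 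Your version runs standard fixed-point amplitude amplification with phased reflections about $\ket{0}_{a+n}$ and about $\Pi$, using a single extra ancilla, and therefore actually delivers the $a+1$ ancilla count claimed in the lemma statement, which the paper's own sketch does not. The query count, the lower bound $\lnorm{\Pi U_{\psi}\ket{0}_{a+n}}_2\ge 1/(2\alpha)$, and the circuit-depth bookkeeping coincide in both arguments. Your error analysis is also different and arguably cleaner: you obtain $\lnorm{\ket{\psi}-\ket{\tilde\psi}/\lnorm{\ket{\tilde\psi}}_2}_2\le 2\epsilon_0\le\sqrt{2\epsilon_0}$ by reusing \cref{lemma:normalized_state_deviation_due_to_normalization} together with $\epsilon_0\le 1/2$, whereas the paper derives the $\sqrt{2\epsilon_0}$ bound from a direct reverse-triangle-inequality and inner-product computation. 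The one point worth stating more carefully is the claim that the projection of the amplified state onto $\op{0}{0}_{a+1}\otimes I_n$ is exactly proportional to $\ket{\tilde\psi}/\lnorm{\ket{\tilde\psi}}_2$ with $|1-c|\le\epsilon_1$: this holds because fixed-point amplification acts within the two-dimensional span of the good and bad components and the bad component is annihilated by the projector, but one should also confirm that the auxiliary qubit is returned (up to the stated tolerance) to $\ket{0}$ so that the residual weight can be absorbed into $\epsilon_1$; at the level of rigor of the paper's own sketch this is acceptable.
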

\begin{proof}
    This follows similarly to the fixed-point amplitude amplification result of \cite{gilyen2019quantum}, which is an improved version of~\cite{grover2005different} (the later, clearly not having been cast in the block-encoding framework). We give a simple sketch of the procedure, noting that it is possible to implement this with fewer ancilla qubits than we show.

    Define the projector $\Pi := \op{0}{0}_a\otimes I_n$. 
    Define the $a + n$ qubit operator $A:= \Pi U_{\psi}\op{0}{0}_{a + n}$. 
    Define the (not necessarily $\ell_2$-normalized) vector $\ket{\phi} := (\bra{0}_{a}\otimes I_n) U_{\psi}\ket{0}_{a + n}$. Then, it is easy to show that $A = (\ket{0}_a\ket{\phi}_n)\bra{0}_{a + n}$.

    We can construct a $(1, 1, 0)$-block-encoding of $\Pi$, by applying a not gate to the single ancilla of the new block-encoding, and then applying a multiple-controlled $X$ gate targeting the ancilla, with the $a$ controls conditioned on the $\ket{0}_a\otimes I_n$ state of the main register. We can similarly construct a $(1, 1, 0)$-block-encoding of $\op{0}{0}_{a+n}$ with the same circuit, only with the $a + n$ controls conditioned on the $\ket{0}_{a + n}$ state of the main register instead. Noting that $U_{\psi}$ is a $(1, 0, 0)$-block-encoding of itself, we can then take the product of the three preceding block encodings in the appropriate order (via \cref{lemma:product_of_block_encodings}) to construct the unitary $U_A$, a $(1, 2, 0)$-block-encoding of $A$.

    Let $\mathcal{N}_{\phi} := \lnorm{\ket{\phi}}_2$, and define $\ket{\tilde{\phi}} := \ket{\phi}/\mathcal{N}_{\phi}$. Clearly, $A$ is rank-one, and has non-zero singular value $\lnorm{A}_2 = \mathcal{N}_{\phi}$. We need to lower-bound this value, so that we know what threshold to set in our application of the sign function, to map this singular value to one. By definition of an SPBE, we know that $\lnorm{\ket{\psi}_n - \alpha \ket{\phi}_n}_2 \le \epsilon_0$, which by reverse-triangle inequality implies that $\mathcal{N}_{\phi} \ge (1 - \epsilon_0)/\alpha \ge 1/(2\alpha)$ (additionally using the assumption that $\epsilon_0 \le 1/2$).

    Following e.g., Theorems 26 and 27 of \cite{gilyen2019quantum}, 
    we can construct an odd polynomial of degree $k \in O(\alpha\log(1/\epsilon_1))$ approximating the sign function on the interval 
    $[-1, 1]\backslash(-1/\alpha, 1/\alpha)$ which can be applied to our block encoding to effectively bring its singular value $\epsilon_1$-close to $1$,    
    using an additional ancillary qubit and $k$ calls to $U_A$ and $U_A^{\dagger}$. This yields the unitary $U_{\tilde{A}}$, a $(1, 3, \epsilon_1)$-block-encoding of $\tilde{A} := (\ket{0}_a\ket{\tilde{\phi}}_n)\bra{0}_{a + n}$ such that $\lnorm{\tilde{A} - (\bra{0}_3\otimes I_{a + n}) U_{\tilde{A}}(\ket{0}_3\otimes I_{a + n})}_2 \le \epsilon_1$.
    Let $B := (\bra{0}_3\otimes I_{a + n}) U_{\tilde{A}}(\ket{0}_3\otimes I_{a + n})$, and define $E:= \tilde{A} - B$, i.e. $\lnorm{E}_2 \le \epsilon_1$.
    Noting that $U_A$ asymptotically consists a call to $U_{\psi}$ and an $a+n$ controlled $X$ gate, and that an $a + n$ controlled-Toffoli can be decomposed into single and two qubit gates with depth $O(a + n)$ (see e.g., \cite{saeedi2013linear}), $U_{\tilde{A}}$ has circuit complexity $O(T(a + n)\alpha\log(1/\epsilon_1))$.

    We will now show that $U_{\tilde{A}}$ is a $(1, a + 3, \sqrt{2\epsilon_0} + \epsilon_1)$-SPBE for $\ket{\psi}$. To determine what SPBE we have, we must bound $\lnorm{\ket{\psi}_n - (\bra{0}_{3+a}\otimes I_n)U_{\tilde{A}}\ket{0}_{3 + a + n}}_2$. Noting that $\ket{0}_{3 + a + n} = (\ket{0}_3\otimes I_{n+a})(\ket{0}_{a + n})$ and that $\bra{0}_{3+a}\otimes I_n = (\bra{0}_a\otimes I_n)(\bra{0}_3\otimes I_{a + n})$, $(\bra{0}_{3+a}\otimes I_n)U_{\tilde{A}}\ket{0}_{3 + a + n} = (\bra{0}_a\otimes I_n) B \ket{0}_{a + n}$. Consequently, since $\tilde{A} = (\ket{0}_a\ket{\tilde{\phi}}_n)\bra{0}_{a + n}$, and inserting $B = \tilde{A} - E$,
    \begin{align}
        \lnorm{\ket{\psi}_n - (\bra{0}_a\otimes I_n) B \ket{0}_{a + n}}_2
        \le 
        \lnorm{\ket{\psi}_n - \ket{\tilde{\phi}}_n}_2 + \lnorm{E}_2.
    \end{align}
    Noting that applying the reverse triangle inequality to $\lnorm{\ket{\psi}-\alpha \ket{\phi}}_2 \le \epsilon_1$ implies that $\alpha\mathcal{N}_{\phi} \le 1 + \epsilon_0$, $1 - \epsilon_0 \le \alpha\mathcal{N}_{\phi}$, and $\frac{1 + \alpha^2\mathcal{N}_{\phi}^2 - \epsilon_0^2}{2\alpha} \le \Re{\ip{\psi}{\phi}}$, assuming that $\epsilon_0 \le 1$, it can be easily shown that $\lnorm{\ket{\psi} - \ket{\tilde{\phi}}}_2 \le \sqrt{2\epsilon_0}$. As a result, we get the final bound, $\lnorm{\ket{\psi}_n - (\bra{0}_a\otimes I_n) B \ket{0}_{a + n}}_2 \le \sqrt{2\epsilon_0} + \epsilon_1$, and so $U_{\tilde{A}}$ is a $(1, a + 3, \sqrt{2\epsilon_0} + \epsilon_1)$-SPBE for $\ket{\psi}_n$.

    The circuit and query complexities follow from simple bookkeeping. 
\end{proof}

\begin{lemma}\label{lemma:deviation_in_output_state_of_function_with_ddeviation_in_input_states}

Define the $N$-dimensional quantum states $\ket{\psi} := \sum_{j}\psi_j\ket{j}$, and $\ket{\phi} := \sum_{j}\phi_j\ket{j}$, where both states are $\ell_2$-normalized. Let $f : \mathbb{C} \mapsto \mathbb{C}$ be an $L$-Lipschitz continuous function. Define $\gamma := \max_{x\in[-1, 1]}|f(x)|$. Define the unnormalized quantum states $\ket{f(\psi)} := \sum_{j}f(\psi_j)\ket{j}$, $\ket{f(\phi)} := \sum_{j}f(\phi_j)\ket{j}$, and normalization constants $\mathcal{N}_{\psi} := \lnorm{\ket{f(\psi)}}_2$ and $\mathcal{N}_{\phi} := \lnorm{\ket{f(\phi)}}_2$.
Then, if $\lnorm{\ket{\phi} - \ket{\psi}}_2 \le \epsilon_0$,
\begin{align}
    \lnorm{\frac{1}{\mathcal{N}_{\psi}}\ket{f(\psi)} - \frac{1}{\mathcal{N}_{\phi}}\ket{f(\phi)}}_2 \le 3\gamma L \epsilon_0 N/\mathcal{N}_{\psi}^2.
\end{align}
\end{lemma}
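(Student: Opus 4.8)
The plan is to reduce the statement to \cref{lemma:normalized_state_deviation_due_to_normalization} by separately controlling the unnormalized distance $\lnorm{\ket{f(\psi)} - \ket{f(\phi)}}_2$ and the normalization gap $|\mathcal{N}_\psi - \mathcal{N}_\phi|$. Since $\ket\psi$ and $\ket\phi$ are $\ell_2$-normalized, each amplitude satisfies $|\psi_j|^2 \le \sum_k|\psi_k|^2 = 1$, so it lies in $[-1,1]$ (resp.\ the unit disk); in particular $|f(\psi_j)|, |f(\phi_j)| \le \gamma$ for all $j$, hence $\mathcal{N}_\psi^2 = \sum_j |f(\psi_j)|^2 \le \gamma^2 N$, i.e.\ $\mathcal{N}_\psi \le \gamma\sqrt N$. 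This upper bound on $\mathcal{N}_\psi$ is exactly what converts a clean $1/\mathcal{N}_\psi$ factor into the advertised $\gamma N/\mathcal{N}_\psi^2$ at the end.

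First I would bound the unnormalized distance. Expressing both states in the standard basis $\{\ket j\}$ and using $L$-Lipschitz continuity termwise, $|f(\psi_j) - f(\phi_j)| \le L|\psi_j - \phi_j|$, so, since the squared $\ell_2$-norm is the sum of squared moduli of the coordinates, $\lnorm{\ket{f(\psi)} - \ket{f(\phi)}}_2^2 = \sum_j |f(\psi_j) - f(\phi_j)|^2 \le L^2 \sum_j |\psi_j - \phi_j|^2 = L^2\lnorm{\ket\psi - \ket\phi}_2^2 \le L^2\epsilon_0^2$, i.e.\ $\lnorm{\ket{f(\psi)} - \ket{f(\phi)}}_2 \le L\epsilon_0$.

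Next I would bound $|\mathcal{N}_\psi - \mathcal{N}_\phi|$. The quickest route is the reverse triangle inequality for the $\ell_2$-norm: $|\mathcal{N}_\psi - \mathcal{N}_\phi| = \big|\lnorm{\ket{f(\psi)}}_2 - \lnorm{\ket{f(\phi)}}_2\big| \le \lnorm{\ket{f(\psi)} - \ket{f(\phi)}}_2 \le L\epsilon_0$. (Alternatively, mirroring \cref{lemma:technical1}, one can write $|\mathcal{N}_\psi - \mathcal{N}_\phi| \le |\mathcal{N}_\psi^2 - \mathcal{N}_\phi^2|/\mathcal{N}_\psi$ and bound $|\mathcal{N}_\psi^2 - \mathcal{N}_\phi^2| \le \sum_j \big||f(\psi_j)| - |f(\phi_j)|\big|\,(|f(\psi_j)| + |f(\phi_j)|) \le 2\gamma L\sum_j|\psi_j - \phi_j| \le 2\gamma L\sqrt N\,\epsilon_0$ by Cauchy--Schwarz; either suffices.) Then I apply \cref{lemma:normalized_state_deviation_due_to_normalization} with $\ket a = \ket{f(\psi)}$, $\ket b = \ket{f(\phi)}$, the role of $\epsilon_1$ there played by $L\epsilon_0$ and the role of $\epsilon_0$ there played by $L\epsilon_0$, obtaining $\lnorm{\tfrac1{\mathcal{N}_\psi}\ket{f(\psi)} - \tfrac1{\mathcal{N}_\phi}\ket{f(\phi)}}_2 \le 2L\epsilon_0/\max\{\mathcal{N}_\psi,\mathcal{N}_\phi\} \le 2L\epsilon_0/\mathcal{N}_\psi$. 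Finally, since $\mathcal{N}_\psi \le \gamma\sqrt N \le \gamma N$ for $N \ge 1$, we have $1/\mathcal{N}_\psi \le \gamma N/\mathcal{N}_\psi^2$, so $2L\epsilon_0/\mathcal{N}_\psi \le 2\gamma L\epsilon_0 N/\mathcal{N}_\psi^2 \le 3\gamma L\epsilon_0 N/\mathcal{N}_\psi^2$, which is the claim.

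There is no substantive obstacle here: the statement is a routine stability estimate, and both ingredients (the termwise Lipschitz bound and \cref{lemma:normalized_state_deviation_due_to_normalization}) are essentially immediate. The only point needing a little care is the last step, where the bound one naturally gets, $2L\epsilon_0/\mathcal{N}_\psi$ (or $3\gamma L\sqrt N\,\epsilon_0/\mathcal{N}_\psi^2$ via the second route), is actually \emph{tighter} than what is claimed, so one must deliberately loosen it using $\mathcal{N}_\psi \le \gamma\sqrt N$ to land exactly on the stated form $3\gamma L\epsilon_0 N/\mathcal{N}_\psi^2$. One should also note that the normalization hypotheses $\lnorm{\ket\psi}_2 = \lnorm{\ket\phi}_2 = 1$ are used not only for the distance bound but also to guarantee that the amplitudes lie in $[-1,1]$ (so that $|f(\psi_j)| \le \gamma$), and that the statement implicitly assumes $\mathcal{N}_\psi, \mathcal{N}_\phi > 0$ so the normalized states are defined.
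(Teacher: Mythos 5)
Your proof is correct, and it follows the same high-level decomposition as the paper's: bound the unnormalized distance $\lnorm{\ket{f(\psi)}-\ket{f(\phi)}}_2$, bound the normalization gap $|\mathcal{N}_\psi-\mathcal{N}_\phi|$, then invoke \cref{lemma:normalized_state_deviation_due_to_normalization}. The difference is in how you obtain the two ingredient bounds, and your versions are genuinely sharper. The paper passes through the $\ell_\infty$ bound $\max_j|\psi_j-\phi_j|\le\epsilon_0$, applies the Lipschitz estimate per coordinate, and re-sums, which costs a factor $\sqrt N$ and yields $\lnorm{\ket{f(\psi)}-\ket{f(\phi)}}_2\le L\epsilon_0\sqrt N$; it then further loosens this to $L\epsilon_0\gamma N/\mathcal{N}_\psi$ via $\gamma\sqrt N/\mathcal{N}_\psi\ge 1$, and bounds the norm gap by $2\gamma L\epsilon_0 N/\mathcal{N}_\psi$ through $|\mathcal{N}_\psi^2-\mathcal{N}_\phi^2|$. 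You instead apply the Lipschitz bound termwise \emph{inside} the $\ell_2$ sum to get $\lnorm{\ket{f(\psi)}-\ket{f(\phi)}}_2\le L\epsilon_0$ with no dimension factor, and use the reverse triangle inequality for the norm gap, landing on the strictly stronger conclusion $2L\epsilon_0/\mathcal{N}_\psi$ before deliberately relaxing to the stated $3\gamma L\epsilon_0 N/\mathcal{N}_\psi^2$ via $\mathcal{N}_\psi\le\gamma\sqrt N$. So your route actually proves a dimension-free stability estimate that the paper's bound obscures; if this lemma were used on its own rather than to match the stated form, your intermediate bound would be the one to keep. Your side remarks (that $\gamma$ is defined as a max over real $x\in[-1,1]$ while the amplitudes live in the unit disk, and that $\mathcal{N}_\psi,\mathcal{N}_\phi>0$ is implicitly assumed) apply equally to the paper's own argument and are fair observations, not gaps in your proof.
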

\begin{proof}
Using the fact that for any vector $\bm{x}$, $\lnorm{\bm x}_{\infty} \le \lnorm{\bm{x}}_2$, we immediately get that $\max_j |\phi_j - \psi_j| \le \epsilon_0$ as well. 
Denote the corresponding $\ell_2$-normalized quantum states as $\ket{\tilde{f}(\psi)} := \ket{f(\psi)}/\mathcal{N}_{\psi}$ and $\ket{\tilde{f}(\phi)} := \ket{f(\phi)}/\mathcal{N}_{\phi}$.
We will now prove upper-bounds on $|\mathcal{N}_{\psi}-\mathcal{N}_{\phi}|$ and $\lnorm{\ket{f(\psi)} - \ket{f(\phi)}}_2$ and a lower-bound on $\mathcal{N}_{\phi}$ so that we may apply \cref{lemma:normalized_state_deviation_due_to_normalization} to bound $\lnorm{\ket{\tilde{f}(\psi)} - \ket{\tilde{f}(\phi)}}_2$.
Since $f$ has a Lipschitz constant of $L$, and $|\psi_j - \phi_j| \le \epsilon_0$, $|f(\psi_j) - f(\phi_j)| \le L\epsilon_0$. Consequently, by simple algebra with the square of the norm of the difference, we get $\lnorm{\ket{f(\psi)} - \ket{f(\phi)}}_2 \le L\epsilon_0 \sqrt{N}$. 
Moreover, $\frac{\gamma \sqrt{N}}{\mathcal{N}_{\psi}}\ge 1 \implies \epsilon_0\gamma N/\mathcal{N}_{\psi} \ge \epsilon_0 \sqrt{N}$, and so  $\lnorm{\ket{f(\psi)} - \ket{f(\phi)}}_2 \le L\epsilon_0\gamma N/\mathcal{N}_{\psi}$.
Additionally, since $|\mathcal{N}_{\psi}^2-\mathcal{N}_{\phi}^2| \le 2\gamma L \epsilon_0 N$, then 
$|\mathcal{N}_{\psi}-\mathcal{N}_{\phi}| = \frac{|\mathcal{N}_{\psi}^2-\mathcal{N}_{\phi}^2|}{|\mathcal{N}_{\psi}+\mathcal{N}_{\phi}|} \le |\mathcal{N}_{\psi}^2-\mathcal{N}_{\phi}^2|/\mathcal{N}_{\psi} \le (2\gamma L \epsilon_0 N)/\mathcal{N}_{\psi}$. 
Using $\max\{\mathcal{N}_{\psi}, \mathcal{N}_{\phi}\} \ge \mathcal{N}_{\psi}$, we can then invoke \cref{lemma:normalized_state_deviation_due_to_normalization}, getting the bound $\lnorm{\ket{\tilde{f}(\psi)} - \ket{\tilde{f}(\phi)}}_2 \le 3\gamma L \epsilon_0 N/\mathcal{N}_{\psi}^2$. To get the overall error-bound of $\epsilon$, we can set $\epsilon_0 \le \frac{\mathcal{N}_{\psi}^2 \epsilon}{3 L\gamma N}$.
\end{proof}

\begin{definition}[State amplitude transformation procedure]\label{def:state_amplitude_transformation_procedure}
A state amplitude transformation procedure (as in e.g., \cref{theorem:polynomial_transformation_of_real_state_amplitudes_via_importance_sampling,theorem:modified_variant_of_previous_technique_complexity_of_applying_a_polynomial_to_a_state,theorem:generalized_end_to_end_complexity_for_f_0_is_0_for_efficiently_analytic_functions,theorem:generalized_end_to_end_complexity_for_previous_technique_for_efficiently_analytic_functions}) accepts a state preparation unitary $U$ such that $U\ket{0}_n = \sum_j \psi_j\ket{j}_n =: \ket{\psi}_n$, and an $L$-Lipschitz continuous function $f : \mathbb{C} \mapsto \mathbb{C}$. Define $\ket{\tilde{f}(\psi)}_n := \frac{1}{\mathcal{N}}\sum_{j}f(\psi_j)\ket{j}_n$ (where $\mathcal{N}^2 := \sum_{j}|f(\psi_j)|^2$).
It then produces a quantum state $\ket{\phi}$, satisfying $\lnorm{\ket{\tilde{f}(\psi)}_n - \ket{\phi}_n}_2 \le \epsilon$ with $O(T_{n, \epsilon})$ query complexity to a controlled $U$ and $U^{\dagger}$ circuit, and with $\tilde{O}(T_{n, \epsilon})$ circuit complexity. The subscripts make the error-dependence and dimension-dependence explicit.     
\end{definition}

Note that for an $(\alpha, a, \epsilon)$-SPBE, $\alpha$ can be intuitively thought of as an estimate proportional to the amount of amplitude amplification required to bring the norm of the relevant subspace of the state preparation block encoding to near unity.
In the following theorem, we show that up to logarithmic factors (and a multiplicative $\alpha$ term), state amplitude transformation procedures operating on state preparation unitaries and state amplitude transformation procedures operating state preparation block encodings have equivalent complexity.

\begin{theorem}\label{generalizing_previous_results_to_spbe}
We are given an $L$-Lipschitz continuous function $f : \mathbb{C} \mapsto \mathbb{C}$. Define $\gamma := \max_{x\in[-1, 1]}|f(x)|$.
Given a state amplitude transformation procedure (as per \cref{def:state_amplitude_transformation_procedure}) which accepts an $n$ qubit state preparation unitary $U$ such that $U\ket{0}_n = \sum_j \psi_j\ket{j}_n$ and outputs an $\ell_2$-normalized quantum state $\epsilon$-close (in $\ell_2$-norm distance) to $\ket{\tilde{f}(\psi)}_n := \frac{1}{\mathcal{N}}\sum_{j}f(\psi_j)\ket{j}_n$ (where $\mathcal{N}^2 := \sum_{j}|f(\psi_j)|^2$) with $O(T_{n, \epsilon})$ queries to a controlled $U$ and $U^{\dagger}$ circuit and with $\tilde{O}(T_{n, \epsilon})$ circuit complexity, then there exists a procedure which accepts a unitary $U_{\psi}$, an $(\alpha, a', 0)$-SPBE for $\ket{\psi}_n$ with circuit complexity $T_{\psi}$, and outputs a state $\epsilon$-close to $\ket{\tilde{f}(\psi)}_n$ with $O(\alpha T_{a+ n, \epsilon} \log(\frac{\gamma^2 L^2 N^2}{\epsilon^2 \mathcal{N}^4}))$ query complexity to a controlled $U_{\psi}$ and $U_{\psi}^{\dagger}$ circuit, and with $\tilde{O}(\alpha T_{a' + n, \epsilon} \log(\frac{\gamma^2 L^2 N^2}{\epsilon^2 \mathcal{N}^4}))$ total additional circuit depth. 

\end{theorem}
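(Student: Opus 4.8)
The plan is to reduce the SPBE input model to the state preparation unitary input model, invoke the given amplitude transformation procedure as a black box, and then carefully propagate the error. \textbf{Step 1 (de-blocking the SPBE).} First I would apply \cref{lemma:spbe_fixed_amplitude_amplification} with input error $\epsilon_0 = 0$ to convert the given $(\alpha, a', 0)$-SPBE $U_\psi$ into a $(1, b, \epsilon_1)$-SPBE $V$, where $b = a' + O(1)$ and $\epsilon_1 > 0$ is a free parameter fixed at the end; this costs $O(\alpha \log(1/\epsilon_1))$ queries to controlled-$U_\psi$ and controlled-$U_\psi^\dagger$, with only polylog circuit overhead for the multi-controlled gates. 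Writing $m := b + n$ and viewing $V$ as an $m$-qubit state preparation unitary, a one-line argument from \cref{def:state_prep_block_encoding} and normalization (the amplitude leaked outside the $\ket{0}_b$ block has squared $\ell_2$-norm at most $2\epsilon_1$) gives $\lnorm{V\ket{0}_m - \ket{0}_b\ket{\psi}_n}_2 \le \epsilon_1 + \sqrt{2\epsilon_1} =: \epsilon_1'$. Because the input SPBE has exactly zero error, the $\ket{0}_b$-block of $V\ket{0}_m$ is exactly proportional to the (real-amplitude) state $\ket{\psi}_n$, so $V$ is compatible with the hypotheses of the underlying theorems up to the $\epsilon_1'$ perturbation.

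\textbf{Step 2 (run the procedure and propagate error).} Next I would feed $V$, as an $m$-qubit state preparation unitary, into the given state amplitude transformation procedure with target accuracy $\epsilon/3$, obtaining a state $\ket{\phi}_m$ that is $\epsilon/3$-close to the ideal transform of $V\ket{0}_m$. The key structural observation is that when $f(0)=0$ the $2^m - N$ basis states carrying zero amplitude in $\ket{0}_b\ket{\psi}_n$ are mapped to zero amplitude, so the ideal transform of $\ket{0}_b\ket{\psi}_n$ is exactly $\ket{0}_b\ket{\tilde f(\psi)}_n$ with the \emph{same} normalization $\mathcal N$; for the $f(0)\neq 0$ case (\cref{theorem:modified_variant_of_previous_technique_complexity_of_applying_a_polynomial_to_a_state}) one applies the internal diagonal block-encoding of \cref{theorem:diagonal_block_encoding_of_state_amplitudes} to $\ket{0}_b\ket{+^n}$ rather than the full uniform superposition, which has the same effect and does not change the complexity. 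Then \cref{lemma:deviation_in_output_state_of_function_with_ddeviation_in_input_states}, applied with input perturbation $\epsilon_1'$, Lipschitz constant $L$, and ambient dimension $2^m$, shows the ideal transform of $V\ket{0}_m$ is within $O(\gamma L \epsilon_1' 2^m / \mathcal N^2)$ of $\ket{0}_b\ket{\tilde f(\psi)}_n$; a triangle inequality with the procedure's own $\epsilon/3$ error, followed by discarding the first $b$ qubits (or one extra round of amplitude amplification to postselect on $\ket{0}_b$), yields an $n$-qubit state $\epsilon$-close to $\ket{\tilde f(\psi)}_n$.

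\textbf{Step 3 (choose $\epsilon_1$ and collect complexities).} I would then set $\epsilon_1 = \Theta\!\left( \epsilon^2 \mathcal N^4 / (\gamma^2 L^2 2^{2b} N^2) \right)$ to make the propagated error at most $\epsilon/3$; absorbing the constant $a'$ (equivalently $b$) gives $\log(1/\epsilon_1) = O\!\left( \log(\gamma^2 L^2 N^2 / \epsilon^2 \mathcal N^4) \right)$, the fourth power of $\mathcal N$ (and square of $\gamma L N$) arising from the $\sqrt{\epsilon_1}$ in $\epsilon_1'$. The procedure makes $O(T_{m, \epsilon/3}) = O(T_{a'+n, \epsilon})$ queries to controlled-$V$, and each costs $O(\alpha \log(1/\epsilon_1))$ queries to controlled-$U_\psi$ and controlled-$U_\psi^\dagger$, giving the claimed query complexity $O(\alpha T_{a'+n, \epsilon} \log(\gamma^2 L^2 N^2 / \epsilon^2 \mathcal N^4))$; the circuit-depth bound follows from the same bookkeeping together with the $\widetilde O(\cdot)$ overhead of \cref{lemma:spbe_fixed_amplitude_amplification}.

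\textbf{Main obstacle.} The delicate point is ensuring the dimension blow-up from moving to the $m = b+n$ qubit register enters only logarithmically: a careless use of \cref{lemma:deviation_in_output_state_of_function_with_ddeviation_in_input_states} would charge a multiplicative $2^{a'}\sqrt N$ factor, but since $f(0)=0$ (or the modified construction in the $f(0)\neq 0$ case) forces the extra amplitudes, and hence the normalization, to be genuinely unchanged, $2^m$ surfaces only inside $\log(1/\epsilon_1)$, where $a'$ is constant and $\log N = n$ is harmless. The secondary subtlety — that the intermediate unitary $V$ output by fixed-point amplitude amplification need not have purely real amplitudes — is defused by the exactness ($\epsilon_0 = 0$) of the input SPBE, which keeps the $\ket{0}_b$-block exactly proportional to $\ket{\psi}$, so the complex leakage is entirely absorbed by the robustness lemma without modifying \cref{theorem:polynomial_transformation_of_real_state_amplitudes_via_importance_sampling}.
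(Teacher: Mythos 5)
Your proposal is correct and follows essentially the same route as the paper's proof: convert the SPBE to an approximate state-preparation unitary via \cref{lemma:spbe_fixed_amplitude_amplification}, bound $\lnorm{V\ket{0}_m - \ket{0}_b\ket{\psi}_n}_2 = O(\sqrt{\epsilon_1})$, run the black-box procedure, combine its intrinsic error with the input-perturbation error from \cref{lemma:deviation_in_output_state_of_function_with_ddeviation_in_input_states} by the triangle inequality, and choose $\epsilon_1 = \Theta(\epsilon^2\mathcal{N}^4/\gamma^2 L^2 N^2)$ so that the amplification overhead is only the stated logarithm. Your added care about the ambient dimension $2^{b+n}$ and the $f(0)\neq 0$ case is a genuine refinement of a point the paper's proof glosses over, but it does not change the argument or the resulting complexities.
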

\begin{proof}

Let $a := a' + 3$, and $\epsilon_0, \epsilon_0' \ge 0$. Let $\epsilon_0 = \sqrt{2\epsilon_0'}$.
Using \cref{lemma:spbe_fixed_amplitude_amplification}, we can convert $U_{\psi}$, an $(\alpha, a', \epsilon_0')$-SPBE for $\ket{\psi}_n$, into $U_A$, a $(1, a, \epsilon_0 + \epsilon_1)$-SPBE for $\ket{\psi}_n$ with circuit complexity $O(T(a + n)\alpha \log(1/\epsilon_1))$ and with $O(\alpha\log(1/\epsilon_1))$ queries to $U_{\psi}$.

Define the set of $2^a$ orthogonal projectors, $\{p_j\}_j$ with $p_j := \op{j}{j}_a\otimes I_n$. Then,
\begin{align}
    \lnorm{\ket{0}_a\ket{\psi}_n - U_A\ket{0}_{a + n}}_2^2
    =
    \lnorm{\ket{0}_a\ket{\psi}_n - p_0U_A\ket{0}_{a + n}}_2^2
    +
    \sum_{j=1}^{2^a - 1}\lnorm{p_jU_A\ket{0}_{a + n}}_2^2.
\end{align}
We will now use the definition of $U_A$ as an SPBE twice, i.e. that $\lnorm{\ket{\psi}_n - (\bra{0}_a\otimes I_n)U_A\ket{0}_{a+n}}_2 \le \epsilon_0 + \epsilon_1$.
First, $\lnorm{\ket{0}_a\ket{\psi}_n - p_0\ket{0}_{a + n}}_2 \le \epsilon_0 + \epsilon_1$ trivially following by applying the definition, and by factoring out a common $\ket{0}_a\otimes I_n$ term. 
Next, by using the definition, and by using the reverse triangle inequality, it can be easily shown that $\lnorm{p_0 U_A \ket{0}_{a+n}}_2^2 \ge (1 - \epsilon_0 - \epsilon_1)^2$.
Similarly, by observing that $1 - \lnorm{p_0 U_A \ket{0}_{a+n}}_2^2 = \sum_{j=1}^{2^a - 1}\lnorm{p_jU_A\ket{0}_{a + n}}_2^2$, it is easy to prove $\sum_{j=1}^{2^a - 1}\lnorm{p_jU_A\ket{0}_{a + n}}_2^2 \le 1 - (1 - \epsilon_0 - \epsilon_1)^2$. Consequently, we get the overall bound $\lnorm{\ket{0}_a\ket{\psi}_n - U_A\ket{0}_{a + n}}_2 \le \sqrt{2(\epsilon_0 + \epsilon_1)} =: \delta$.

If the function $f$ is applied to the state $\ket{0}_{a}\ket{\psi}_n$ without error, the state $\ket{0}_a \ket{\tilde{f}(\psi)}_n$ is obtained. Our ultimate goal is to bound the error and associated complexity in obtaining the state $\ket{\phi}_{a + n}$ such that $\lnorm{\ket{0}_a \ket{\tilde{f}(\psi)}_n - \ket{\phi}_{a + n}}_2 \le \epsilon$.
To do this, note that our procedure has two primary sources of error: error in the implementation of the transformation corresponding to $f$, and error in the state input to the transformation procedure. 
Let $\ket{\lambda}_{a + n}$ correspond to the state produced by a procedure \textit{exactly} applying $f$ to $U_A\ket{0}_{a + n}$. Then, $\lnorm{\ket{0}_a \ket{\tilde{f}(\psi)}_n - \ket{\phi}_{a + n}}_2 \le \lnorm{\ket{0}_a \ket{\tilde{f}(\psi)}_n - \ket{\lambda}_{a + n}}_2 + \lnorm{\ket{\lambda}_{a + n} - \ket{\phi}_{a + n}}_2$.

Let $\epsilon_2 > 0$.
By assumption of a state amplitude transformation procedure, for any given input state (in this case the input is $U_A\ket{0}_{a + n}$), the $\ell_2$-norm error in the output is bounded by $\epsilon_2$ with query complexity $O(T_{a + n, \epsilon_2})$. I.e. $\lnorm{\ket{\lambda}_{a + n} - \ket{\phi}_{a + n}}_2 \le \epsilon_2$, with query complexity to a controlled $U_A$ and $U_A^{\dagger}$ circuit of $O(T_{a + n, \epsilon_2})$. Similarly, since we have just shown that $\lnorm{\ket{0}_a\ket{\psi}_n - U_A\ket{0}_{a + n}}_2 \le \delta$, from \cref{lemma:deviation_in_output_state_of_function_with_ddeviation_in_input_states} it follows that $\lnorm{\ket{0}_a \ket{\tilde{f}(\psi)}_n - \ket{\lambda}_{a + n}}_2 \le 3\gamma L \delta N /\mathcal{N}^2$. 

Splitting the error equally among the two sources, and noting that $\delta = \sqrt{2(\epsilon_0 + \epsilon_1)}$, imposing the following conditions ensures that the overall error is bounded by $\epsilon$: $\epsilon_2 \le \epsilon/2$, $\epsilon_0 \le \epsilon_1 \le \frac{\epsilon^2 \mathcal{N}^4}{144 \gamma^2 L^2 N^2}$. If $\epsilon_0 = 0$, we can then set the bound $\epsilon_1 \le \frac{\epsilon^2 \mathcal{N}^4}{72 \gamma^2 L^2 N^2}$.

Noting that $U_A$ has a query complexity to $U_{\psi}$ of $O(\alpha \log(1/\epsilon_0))$, and that $U_A$ is queried a total of $O(T_{a + n, \epsilon_2})$ times, the overall query complexity to $U_{\psi}$ is given by $O(\alpha T_{a+ n, \epsilon} \log(\frac{\gamma^2 L^2 N^2}{\epsilon^2 \mathcal{N}^4}))$. The circuit complexity follows similarly. 
\end{proof}

\end{document}